\newif\ifregtemp
\definecolor{blue1}{rgb}{0,0,0}
\renewcommand\thesection{\arabic{section}}
\titleformat{\section}[hang]{\color{blue1}\large\bfseries\sffamily}{\thesection}{0mm}{. }[]
\titleformat{\subsection}[hang] {\color{blue1}\bfseries\sffamily}{\thesubsection}{0em}{. }[]
\titleformat{\subsubsection}[hang] {\color{blue1}\sffamily}{\thesubsubsection}{0em}{. }[]
\titlespacing*{\section}{1em}{3.5ex plus .2ex minus .2ex}{1ex plus .2ex}
\titlespacing*{\subsection}{0em}{3ex plus .2ex minus .2ex}{1ex plus .2ex}
\titlespacing*{\subsubsection}{0em}{3ex plus .2ex minus .2ex}{1ex plus .2ex}
\renewenvironment{abstract}{{\color{blue1}\small\bfseries Abstract.}\footnotesize}{\par \vskip .1in}
\def\@setauthors{
\begingroup 
\def \thanks{\protect\thanks@warning}
\trivlist \centering\footnotesize \@topsep30\p@\relax \advance\@topsep by -\baselineskip
\item\relax \author@andify \authors \def\\{\protect\linebreak} {\color{blue1}\large\authors} \endtrivlist \endgroup}
\def\@settitle{\centering{\color{blue1} \Large \bfseries \bfseries \@title \par}}
	\newtheorem{theorem}{Theorem}[section]
	\newtheorem{definition}[theorem]{Definition}
	\newtheorem{lemma}[theorem]{Lemma}
	\newtheorem{corollary}[theorem]{Corollary}
\newcommand{\Graph}{\ensuremath{\set{G}}}
\newcommand{\Group}{\ensuremath{\set{N}}}
\newcommand{\PSignals}{\ensuremath{\set{A}_\epsilon}}
\newcommand{\Lap}{\ensuremath{\ma{L}}}
\newcommand{\Fou}{\ensuremath{\ma{U}}}
\newcommand{\Eig}{\ensuremath{\ma{\Lambda}}}
\newcommand{\Meas}{\ensuremath{\ma{M}}}
\newcommand{\SelectGroup}{\ensuremath{\ma{N}}}
\newcommand{\Average}{\ensuremath{\ma{A}}}
\newcommand{\fou}{\ensuremath{\vec{u}}}
\newcommand{\eig}{\ensuremath{\lambda}}
\newcommand{\err}{\ensuremath{\vec{n}}}
\newcommand{\sig}{\ensuremath{\vec{x}}}
\newcommand{\meas}{\ensuremath{\vec{y}}}
\newcommand{\prob}{\ensuremath{\vec{p}}}
\newcommand{\average}{\ensuremath{\vec{a}}}
\newcommand{\nbVert}{\ensuremath{n}}
\newcommand{\nbVertRed}{\ensuremath{m}}
\newcommand{\nbClass}{\ensuremath{k}}
\newcommand{\nbGroups}{\ensuremath{N}}
\newcommand{\nbGroupsRed}{\ensuremath{s}}
\newcommand{\cumCoh}{\ensuremath{\nu}}
\newcommand{\reg}{\ensuremath{\gamma}}
\newcommand{\Rbb}{\ensuremath{\mathbb{R}}} 
\newcommand{\Nbb}{\ensuremath{\mathbb{N}}}
\newcommand{\Ebb}{\ensuremath{\mathbb{E}}}
\newcommand{\Pbb}{\ensuremath{\mathbb{P}}}
\renewcommand{\leq}{\ensuremath{\leqslant}}
\renewcommand{\geq}{\ensuremath{\geqslant}}
\newcommand{\adjoint}{\ensuremath{{\intercal}}}
\newcommand{\inv}[1]{\ensuremath{\frac{1}{#1}}}
\newcommand{\norm}[1]{\ensuremath{\left\| #1\right\|}}
\newcommand{\abs}[1]{\ensuremath{\left| #1 \right|}}
\newcommand{\ma}[1]{\ensuremath{\mathsf{#1}}}
\renewcommand{\vec}[1]{\ensuremath{\bm{#1}}}
\newcommand{\diag}{\ensuremath{{\rm diag}}}
\newcommand{\set}[1]{\ensuremath{\mathcal{#1}}}
\newcommand{\spann}{\ensuremath{{\rm span}}}
\newcommand{\ie}{\textit{i.e.}}
\newcommand{\eg}{\textit{e.g.}}
\renewcommand{\th}{\ensuremath{\text{th}}}
\newcommand{\ee}{\ensuremath{{\rm e}}}
\newcommand{\longtitle}{{Structured sampling and fast reconstruction of smooth graph signals}}
\newcommand{\GPlong}{{Gilles~Puy}}
\newcommand{\PPlong}{{Patrick P\'erez}}
\newcommand{\GPshort}{{G.~Puy}}
\newcommand{\PPshort}{{P.~P\'erez}}
\newcommand{\Technicolor}{{Technicolor, 975 Avenue des Champs Blancs, 35576 Cesson-S\'evign\'e, France.}}
	\title[\longtitle]{\longtitle}
	\author[\GPshort]{\GPlong}
	\author[\PPshort]{\PPlong}
	\thanks{\GPshort\ and \PPshort\ are with \Technicolor.}
	\title{\longtitle}
	\shorttitle{\longtitle}
	\author{{
		\sc \GPlong}$^*$,\\[2pt]
		\Technicolor\\
		$^*${\email{Corresponding author: gilles.puy@technicolor.com}}\\[2pt]
		{\sc and}\\[6pt]
		{\sc \PPlong} \\[2pt]
		\Technicolor\\
		{patrick.perez@technicolor.com}
	}
\begin{document}

\maketitle

\begin{abstract}
{This work concerns sampling of smooth signals on arbitrary graphs. We first study a structured sampling strategy for such smooth graph signals that consists of a random selection of few pre-defined groups of nodes. The number of groups to sample to stably embed the set of $\nbClass$-bandlimited signals is driven by a quantity called the \emph{group} graph cumulative coherence. For some optimised sampling distributions, we show that sampling  $O(\nbClass\log(\nbClass))$ groups is always sufficient to stably embed the set of $\nbClass$-bandlimited signals but that this number can be smaller --~down to $O(\log(\nbClass))$~-- depending on the structure of the groups of nodes. Fast methods to approximate these sampling distributions are detailed. Second, we consider $\nbClass$-bandlimited signals that are nearly piecewise constant over pre-defined groups of nodes. We show that it is possible to speed up the reconstruction of such signals by reducing drastically the dimension of the vectors to reconstruct. When combined with the proposed structured sampling procedure, we prove that the method provides stable and accurate reconstruction of the original signal. Finally, we present numerical experiments that illustrate our theoretical results and, as an example, show how to combine these methods for interactive object segmentation in an image using superpixels.}
\ifregtemp
{}
\else
{Graph signal processing, compressive sampling, bandlimited graph signals.}
\fi
\end{abstract}
%


\section{Introduction}

This work was initially inspired by studying developments in edit propagation for interactive image or video manipulations, where the goal is to propagate operations made by a user in some parts of the image to the entire image, \eg, propagate foreground/background scribbles for object segmentation \cite{Boykov01,Grady06,chen2012manifold,chen2013image,xu2013sparse} or propagate manual color/tone modifications  \cite{chen2012manifold,chen2013image,levin2004colorization,lischinski2006interactive,xu2013sparse}.
First, a graph $\Graph$ modelling the similarities between the pixels of the image is built. Second, the user-edits specify values at some nodes (pixels) of $\Graph$. Finally, the complete signal is estimated by assuming that it is smooth on $\Graph$. The quality of the propagation depends on the structure of $\Graph$ and on the location of annotated nodes. If a part of the image is weakly connected to the rest, the user-edits do not propagate well to this region unless this region is edited directly. Therefore, highlighting beforehand which regions or groups of nodes are important to edit to ensure a good propagation would be a useful feature to facilitate user interactions. Furthermore, designing a fast reconstruction/propagation method is also important so that the user can visualise immediately the effect of his inputs. To this end, ``superpixels'' -- small groups of connected pixels where the image varies only little -- are sometimes used among other things to speed up computations, \eg, in \cite{chen2012manifold}. We address these problems from a graph signal processing point-of-view~\cite{shuman_SPMAG2013}. More precisely, we view them as a sampling problem where we should select a \emph{structured} set of nodes (regions to edit), ``measure'' the signal on this set (user-edits), and reconstruct the signal on the entire graph. We believe that the sampling strategy and the fast reconstruction method proposed here provide useful tools to optimize user's inputs and accelerate computations.

As mentionned above, the user-edits are propagated to the entire image by assuming that the global signal is smooth on $\Graph$. In the context of graph signal processing, smooth or $\nbClass$-bandlimited signal is a widely used and studied model. Several sampling methods have been designed to sample such signals. Pesenson introduced the notion of uniqueness set (of nodes) for $k$-bandlimited graph signals in~\cite{pesenson_paley, Pesenson:2011tu}. Two different $\nbClass$-bandlimited signals are also necessarily different when restricted to a uniqueness set. Therefore, one can sample all $\nbClass$-bandlimited signals on a uniqueness set. Then, Anis \emph{et al.}, \cite{anis_ICASSP2014,Anis_ARXIV2015} and Chen \emph{et al.}, \cite{chen_ICASSP2015,chen_TSP2015} proved that one can always find a uniquess set of $\nbClass$ nodes to sample all $\nbClass$-bandlimited signals. Finding this set is however computationally expensive. In~\cite{anis_arxiv2015_long}, graph spectral proxies are used to find such a set more efficiently. Yet the combinatorial problem that needs to be solved to find such a set makes the method still difficult to use for large graphs. Other authors used the idea of random sampling to be able to handle large graphs \cite{chen_TSP2015, chen_sampta2015, puy15b}. Recently, Puy \emph{et al.}~\cite{puy15b} proved that there always exists a random sampling strategy for which sampling $O(\nbClass\log(\nbClass))$ nodes is sufficient to stably embed the set of bandlimited signals. They also designed a fast and scalable algorithm to estimate the optimal sampling distribution. 

In this paper, we study first a random sampling strategy for $\nbClass$-bandlimited signals where we sample few groups of nodes instead of sampling individual nodes. We introduce the concept of \emph{local group graph coherence} that quantifies the importance of sampling each group. Second, in order to build a fast reconstruction technique for $\nbClass$-bandlimited signals, we use the intuition that a smooth graph signal is a signal that varies slowly from one node to its connected nodes. If we group few connected nodes together, we usually expect a bandlimited signal to be essentially constant on this set of nodes; as long as we do not group together too many weakly connected nodes. We propose here to use this property to accelerate the reconstruction of such $\nbClass$-bandlimited signals, \ie, $\nbClass$-bandlimited signals nearly piecewise constant over (pre-defined) groups of nodes. When combined with the proposed sampling technique, we prove that this fast method provides stable and accurate reconstructions of the signals of interest. Finally, we illustrate how to use these results for interactive object segmentation in an image, with  required node groups being superpixels.

\subsection{Contributions}

The random sampling strategy that we propose generalises the method proposed in~\cite{puy15b}. We use here a structured sampling strategy. Let us already acknowledge here that such strategies are also studied in the field of compressed sensing \cite{chauffert13, adcock14, boyer15, bigot16} and that some of our solutions are directly inspired by these works.

First, in this structured sampling setting, we show that the number of groups to sample is directly linked to a quantity called the \emph{group} graph cumulative coherence. This quantity generalises the concept of graph cumulative coherence introduced in~\cite{puy15b} and characterises how much the energy of $\nbClass$-bandlimited can stay concentrated in each group of nodes.

Second, we can then choose to sample the groups non-adaptively or to optimise the sampling distribution to minimise the number of groups to sample. With this optimised sampling distribution, our result shows that, in the worst case, sampling $O(\nbClass \log(\nbClass))$ groups of nodes is sufficient to ensure the reconstruction of all $\nbClass$-bandlimited signals. As each group can contain many nodes, we might have to sample a large number of nodes. This is the potential price to pay when sampling the nodes by groups. Fortunately, a smaller number of groups -- down to $O(\log(\nbClass))$ -- might already be sufficient if the groups are well designed.

Third, we describe a method to estimate the optimal sampling distribution without the need of computing the graph Fourier matrix and which is thus able to handle large graphs.

Fourth, estimating the optimal sampling distribution may still be too slow when a large number of groups are involved. We thus also present a sufficient recovery condition that involves a relaxed version of group graph cumulative coherence. The sampling distribution that minimises this relaxed coherence is fast to estimate for large graphs and large number of groups. With this sampling strategy, we prove that sampling $O(\nbClass \log(\nbClass))$ groups is always sufficient to ensure the reconstruction of all $\nbClass$-bandlimited signals. This strategy is mainly interesting at small $\nbClass$.

Finally, we propose a fast reconstruction method for $\nbClass$-bandlimited signals that are also nearly piecewise constant over pre-defined groups of nodes. We show that we can reduce drastically the dimension of the reconstruction problem for such signals. When the above sampling procedure is used to sample them, we prove that the proposed method provides accurate and stable recovery of this type of signals.

\subsection{Applications}
\label{sec:clustering}

The proposed sampling methods can have interest in several applications. For example, if one needs to develop sensors in a large scale network, it might be easier to deploy and install these sensors at nodes around few spatial locations instead of scattering them all over the network. Finding the best regions to monitor is thus important. In a social network, one might be interested in monitoring a signal defined over communities and thus should find which groups of users are the most important to sample. In semi-supervised learning, it may be easier to label jointly some similar nodes than individual nodes. Let us now detail such an example for semi-supervised classification.

The task we consider is interactive object segmentation in an image where, similarly to what is done in \cite{chen12} for instance, we build a graph $\Graph$ that models similarities between the pixels of the image, ask the user to label some regions depending on whether they are part of the object or not, diffuse the result on the complete image by supposing that indicator vectors of the object is smooth on $\Graph$. To propose the regions to label, we view this segmentation problem as a sampling problem of a smooth signal on $\Graph$ where the regions to label are chosen to ensure the recovery of a $\nbClass$-bandlimited signal. To choose the regions to label, we start by partitioning the image into superpixels, \eg, with SLIC technique~\cite{achanta12}. We see in Fig.~\ref{fig:image_example} that the superpixels divide the image into homogeneous regions. The superpixels follow the edges and most of them thus belong to either the tiger or the background but rarely both. One interest of dividing the image into superpixels is that it facilitates user interactions. It is easier to determine if a \emph{superpixel} belongs to the object of interest than if a \emph{pixel} belongs to the tiger, especially at the boundaries, as recently exploited for segmentation on touch-screens \cite{korinke2016superselect}. Using the sampling method that we developed we can propose to the user a small number of superpixels to label. Furthermore, this proposition is adapted to the structure of the graph. Another advantage of using superpixels is that the indicator function of the object, beyond being smooth on $\Graph$, is also approximately piecewise constant on the superpixels. We can thus use our reconstruction method to estimate rapidly the segmentation result from the labelled superpixels.

\begin{figure}[h]
\includegraphics[width=.48\linewidth]{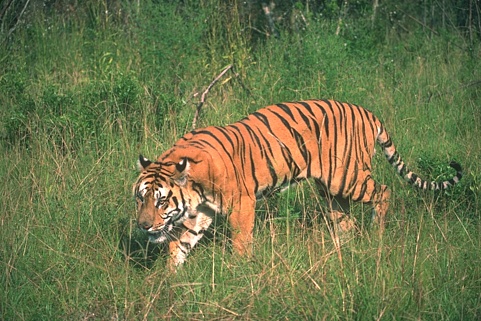}
\includegraphics[width=.48\linewidth]{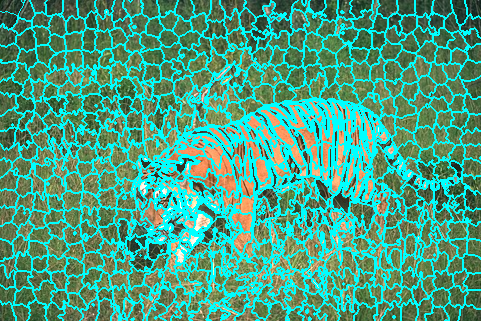}
\caption{\label{fig:image_example} One image (left) and its partition into superpixels with SLIC technique~\cite{achanta12} (right).}
\end{figure}
%

\subsection{Notations and definitions} 

We consider that $\Graph = \{\mathcal{V}, \mathcal{E}, \ma{W} \}$ is an undirected weighted graph, where $\mathcal{V}$ is the set of $\nbVert$ nodes, $\mathcal{E}$ is the set of edges, and $\ma{W} \in \Rbb^{\nbVert \times \nbVert}$ is the weighted adjacency matrix with nonnegative entries. We denote the graph Laplacian by $\Lap \in \Rbb^{\nbVert \times \nbVert}$. We assume that $\Lap$ is real, symmetric, and positive semi-definite, \eg, the combinatorial graph Laplacian $\Lap := \ma{D} - \ma{W}$, or the normalised one $\Lap := \ma{I} - \ma{D}^{-1/2} \ma{W} \ma{D}^{-1/2}$. The matrix $\ma{D} \in \Rbb^{\nbVert \times \nbVert}$ is the diagonal degree matrix and $\ma{I} \in \Rbb^{\nbVert \times \nbVert}$ is the identity matrix~\cite{chung_book1997}. The diagonal degree matrix $ \ma{D}$ has entries $D_{ii} := \sum_{i \neq j}\ma{W}_{ij}$.

We denote by $\Fou \in \Rbb^{\nbVert \times \nbVert}$ the orthonormal eigenvectors of $\Lap$ and by $0 = \eig_1 \leq \ldots \leq \eig_n$ the ordered real eigenvalues of $\Lap$. We have $\Lap = \Fou \Eig \Fou^\adjoint$, where $\Eig := \diag(\eig_1, \ldots, \eig_n) \in \Rbb^{\nbVert \times \nbVert}$. The matrix $\Fou$ is the graph Fourier basis~\cite{shuman_SPMAG2013}. For any signal $\sig \in \Rbb^{\nbVert}$ defined on the nodes of the graph $\Graph$, $\hat{\sig} = \Fou^\adjoint \sig$ contains the Fourier coefficients of $\sig$ ordered in increasing frequencies. This work deals with $k$-bandlimited signals $\sig \in \Rbb^{\nbVert}$ on $\Graph$, \ie, signals whose Fourier coefficients $\hat{\sig}_{k+1}, \ldots, \hat{\sig}_{\nbVert}$ are null. Let $\Fou_{\nbClass}$ be the restriction of $\Fou$ to its first $\nbClass$ vectors:
\begin{align}
\Fou_{\nbClass} := \left( \fou_1, \ldots, \fou_\nbClass \right) \in \Rbb^{\nbVert \times \nbClass}.
\end{align}
\begin{definition}[$k$-bandlimited signal on \Graph] A signal $\sig \in \Rbb^{\nbVert}$ defined on the nodes of the graph $\Graph$ is $k$-bandlimited with $\nbClass \in \Nbb\setminus\{0\}$ if $\sig \in \spann(\ma{U}_\nbClass)$, \ie, there exists $\vec{\eta} \in \Rbb^{k}$ such that
\begin{align}
\sig = \Fou_{\nbClass} \vec{\eta}.
\end{align}
\end{definition}
This definition was also used in~\cite{chen_TSP2015,anis_ICASSP2014, puy15b}. We assume that $\eig_{\nbClass} \neq  \eig_{\nbClass+1}$ to avoid any ambiguity in the definition of $\nbClass$-bandlimited signals.

Finally, for any matrix $\ma{X} \in \Rbb^{\nbVert_1 \times \nbVert_2}$, $\norm{\ma{X}}_2$ denotes its spectral norm and $\norm{\ma{X}}_F$ its Frobenius norm; when $n_1=n_2$, $\lambda_{\rm max}(\ma{X})$ denotes its largest eigenvalue and $\lambda_{\rm min}(\ma{X})$ its smallest eigenvalue. For any vector $\sig \in \Rbb^{\nbVert_1}$, $\norm{\sig}_2$ denotes the Euclidean norm of $\sig$. Depending on the context, $\sig_j$ may represent the $j^\th$ entry of the vector $\sig$ or the $j^\th$ column-vector of the matrix $\ma{X}$. The entry on the $i^\th$ row and $j^\th$ column of $\ma{X}$ is denoted by $\ma{X}_{ij}$. The identity matrix is denoted by $\ma{I}$ (its dimensions are determined by the context).

We present in Fig.~\ref{fig:variables} a representation of the important variables and processes involved in this paper in order to facilitate the understanding of the different results.

\begin{figure}[h]
\centering
\includegraphics[width=.90\linewidth]{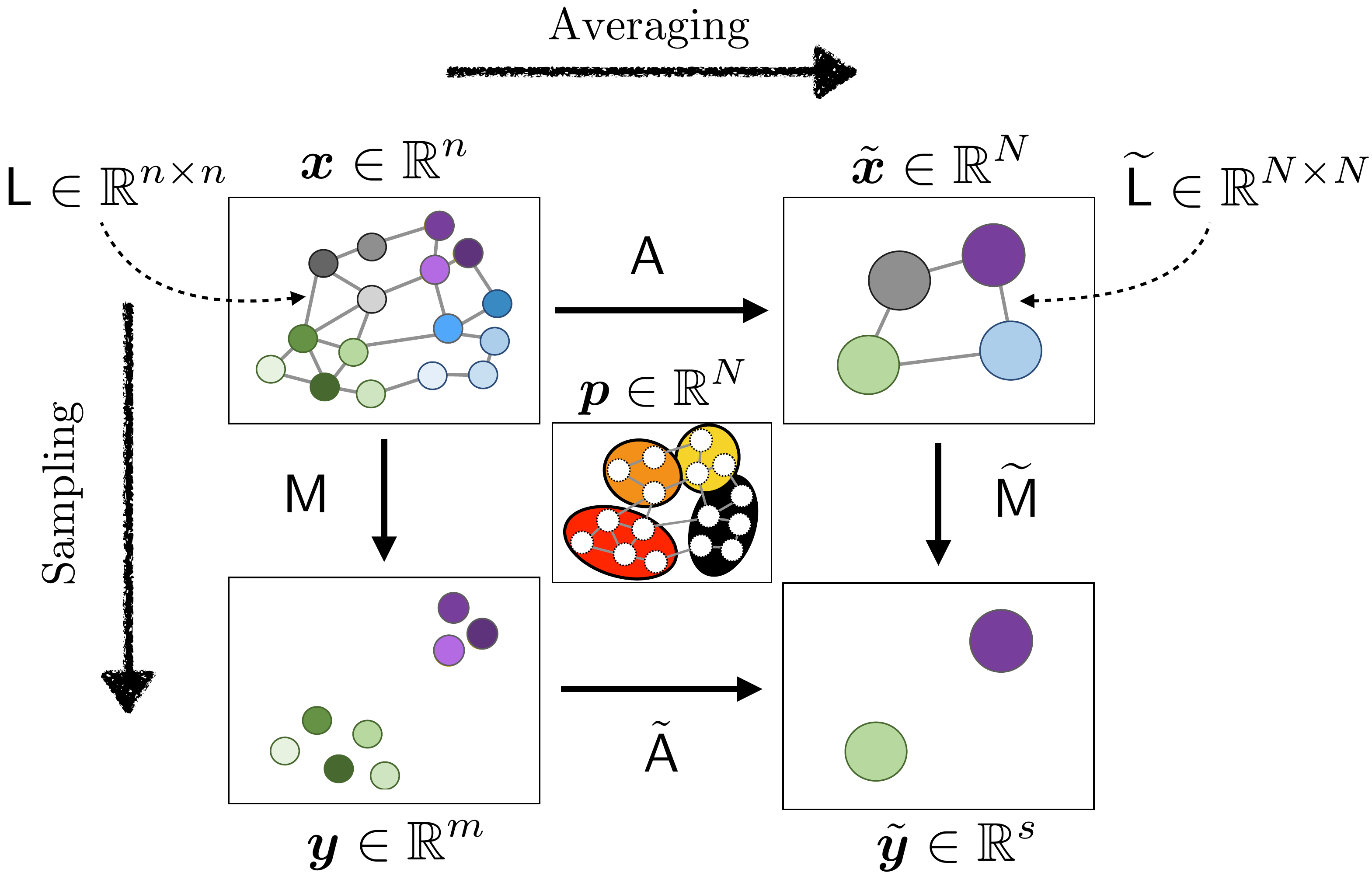}
\caption{\label{fig:variables} Summary of the main variables and processes involved in the presented sampling strategies. All variables are defined when needed hereafter.}
\end{figure}
%

\section{Sampling using groups of nodes}
\label{sec:rip}

In this section, we explain our sampling strategies, starting with the definition of the groups of nodes.

\subsection{Grouping the nodes}

We consider that the $\nbVert$ nodes of $\Graph$ are divided into $\nbGroups$ different groups $\Group_1, \ldots, \Group_\nbGroups \subseteq \{1, \ldots, \nbVert\}$. The size of the $\ell^\th$ group is denoted $\abs{\Group_{\ell}}$. We suppose that these groups form a partition of $\{1, \ldots, \nbVert\}$, so that each node belongs to exactly one group. We have
\begin{align}
\cup_{\ell=1}^N \, \Group_{\ell} \; & = \; \{1, \ldots, \nbVert \},
\text{ and } \Group_{\ell} \, \cap \, \Group_{\ell'} = \emptyset.
\end{align}
For the object segmentation application discussed in the introduction, these groups represent the superpixels. However, we do not impose the groups to be made of neighbouring nodes in the graph; they can be made of nodes ``far'' from each other.

For each group $\Group_{\ell} = \{\nbVert_1^{(\ell)}, \ldots, \nbVert_{\abs{\Group_{\ell}}}^{(\ell)} \}$, we associate a matrix $\SelectGroup^{(\ell)}~\in~\Rbb^{\abs{\Group_{\ell}} \times \nbVert}$ that restricts a graph signal to the nodes appearing in $\Group_{\ell}$, \ie,
\begin{align}
\SelectGroup^{(\ell)}_{ij} := 
\left\{
\begin{array}{ll}
1 & \text{for } j = \nbVert_i^{(\ell)},\\
0 & \text{otherwise}.
\end{array}
\right.
\end{align}
Note that
\begin{align}
\label{eq:group_tight_frame}
\sum_{\ell=1}^\nbGroups {\SelectGroup^{(\ell)}}^\adjoint \SelectGroup^{(\ell)} = \ma{I}.
\end{align}

The case of overlapping groups can be handled by changing the definition of $\SelectGroup^{(\ell)}$ to
\begin{align}
\SelectGroup^{(\ell)}_{ij} := 
\left\{
\begin{array}{ll}
\beta_{\nbVert_i^{(\ell)}}^{-1/2} & \text{for } j = \nbVert_i^{(\ell)},\\
0 & \text{otherwise},
\end{array}
\right.
\end{align}
where $1 \leq \beta_{i} \leq \nbGroups$, $i = 1, \ldots, \nbVert$, is the number of times node $i$ appears in the different groups $\Group_1, \ldots, \Group_\nbGroups$. Equation \eqref{eq:group_tight_frame} also holds in this case. All results presented in Section~\ref{sec:rip} are valid for overlapping groups with this definition of $\SelectGroup^{(\ell)}$.

\subsection{Sampling the groups}

The sampling procedure consists in selecting $\nbGroupsRed$ groups out of the $\nbGroups$ available ones. In the application of Section~\ref{sec:clustering}, it corresponds to the selection of the superpixels to label. We select these groups at random using a sampling distribution on $\{ 1, \ldots, \nbGroups \}$ represented by a vector $\prob \in \Rbb^\nbGroups$. The probability of selecting the $\ell^\th$ group is $\prob_{\ell}$. We assume that $\prob_{\ell} > 0$ for all $\ell = 1, \ldots, \nbGroups$, so that all groups may be selected with a non-zero probability. We obviously have $\sum_{\ell=1}^\nbGroups \prob_{\ell} = 1$.

The indices $\Omega := \{ \omega_1, \ldots, \omega_{\nbGroupsRed} \}$ of the selected groups are obtained by drawing independently --~thus with replacements~-- $\nbGroupsRed$ indices from the set $\{1, \ldots, \nbGroups \}$ according to $\prob$, \ie,
\begin{align}
\Pbb(\omega_j = \ell) = \prob_{\ell},\quad \forall j \in \{1, \ldots, \nbGroupsRed \} \text{ and } \forall \ell \in \{1, \ldots, \nbGroups \}.
\end{align}
The selected groups are $\Group_{\omega_1}, \ldots, \Group_{\omega_\nbGroupsRed}$ and the total number of selected nodes is
\begin{align}
\nbVertRed := \sum_{j=1}^{\nbGroupsRed} \abs{\Group_{\omega_j}}.
\end{align}
Once the groups are selected, we build the sampling matrix $\Meas \in \Rbb^{\nbVertRed \times \nbVert}$ that satisfies
\begin{align}
\label{eq:subsampling_matrix_def}
\Meas := 
\left(
\begin{array}{c}
\SelectGroup^{(\omega_1)}\\
\vdots\\
\SelectGroup^{(\omega_\nbGroupsRed)}
\end{array}
\right),
\end{align}
and which restricts any signal to the nodes belonging to the selected groups. For a signal $\sig \in \Rbb^n$ defined on the nodes of $\Graph$, its sampled version $\meas \in \Rbb^\nbVertRed$ satisfies
\begin{align}
\meas := \Meas \sig.
\end{align}

Our goal now is to determine what number $\nbGroupsRed$ is enough to ensure that all $\nbClass$-bandlimited signals can be reconstructed from its sampled version obtained with $\Meas$. To conduct this study, we need to define few more matrices. First, we associate the matrix
\begin{align}
\ma{P}^{(\ell)} := \prob_{\ell}^{-1/2} \; \ma{I} \in \Rbb^{\abs{\Group_{\ell}} \times \abs{\Group_{\ell}}},
\end{align}
to each group $\Group_{\ell}$. Then, once the groups are drawn, we construct the block diagonal matrix $\ma{P} \in \Rbb^{\nbVertRed \times \nbVertRed}$
\begin{align}
\label{eq:subsampled_weight_matrix}
\ma{P} := \diag\left( \ma{P}^{(\omega_1)}, \ldots, \ma{P}^{(\omega_\nbGroupsRed)} \right).
\end{align}
This matrix takes into account the probability of sampling each group and will be used to rescale $\meas$ for norm preservation. This matrix ensures that $\nbGroupsRed^{-1} \,\Ebb_{\Omega} \norm{\ma{P} \meas}_2^2 = \nbGroupsRed^{-1} \, \Ebb_{\Omega} \norm{\ma{P} \Meas \sig}_2^2 = \norm{\sig}_2^2$.\footnote{This property is a consequence of \eqref{eq:expected_value_proof}, proved in Appendix~\ref{app:proof_rip}.} Both matrices $\ma{P}$ and $\ma{M}$ depend on $\Omega$ and are random.

\subsection{Group graph coherence}

The sampling procedure in \cite{puy15b} is similar to the ones proposed here at the difference that the nodes are sampled individually and not by groups. It was proved there that the number of nodes to sample is driven by a quantity called the graph coherence. This quantity measures how the energy of $\nbClass$-bandlimited signals spreads over the \emph{nodes}. Similarly, we prove here that the number of groups to sample is driven by a quantity that measures the energy of $\nbClass$-bandlimited signals spreads over the \emph{groups}. We now introduce this quantity.

The matrix $\SelectGroup^{(\ell)} \Fou_\nbClass$ is the matrix that restricts a $\nbClass$-bandlimited signal to the nodes belonging to $\Group_{\ell}$. Therefore, 
\begin{align}
\norm{\SelectGroup^{(\ell)} \Fou_\nbClass}_2 = \sup_{\vec{\eta} \in \Rbb^k : \norm{\vec{\eta}}_2=1} \norm{\SelectGroup^{(\ell)} \Fou_\nbClass \vec{\eta}}_2
\end{align}
measures the energy on the nodes $\Group_{\ell}$ of the normalised $\nbClass$-bandlimited signal that is most concentrated on $\Group_{\ell}$. This energy varies between $0$ and $1$. When this energy is close to $1$, there exists a $\nbClass$-bandlimited signal whose energy is essential concentrated on $\Group_{\ell}$. This signal lives only on the nodes in $\Group_{\ell}$ and does not spread elsewhere. On the contrary, when this energy is close to $0$, there is no $\nbClass$-bandlimited signal living only on $\Group_{\ell}$.

The sampling distribution $\prob$ is adapted to the graph and the structure of the groups if: whenever $\|\SelectGroup^{(\ell)} \Fou_\nbClass \|_2$ is high, $\prob_{\ell}$ is high; whenever $\|\SelectGroup^{(\ell)} \Fou_\nbClass \|_2$ is small, $\prob_{\ell}$ is small. In other words, the ratio between $\|\SelectGroup^{(\ell)} \Fou_\nbClass \|_2$ and $\prob_{\ell}$ should be as constant as possible. This ensures that the groups where some $\nbClass$-bandlimited signals are concentrated are sampled with higher probability. Similarly to what was done in~\cite{puy15b} with individual nodes, we define the group graph weighted coherence as the largest ratio between $\|\SelectGroup^{(\ell)} \Fou_\nbClass \|_2$ and $\prob_{\ell}^{-1/2}$.
\begin{definition}[Group graph cumulative coherence]
The group graph cumulative coherence of order $\nbClass$ is
\begin{align}
\cumCoh_{\prob} := \max_{1 \leq \ell \leq \nbGroups} \left\{ \prob_{\ell}^{-1/2} \norm{\SelectGroup^{(\ell)} \Fou_\nbClass}_2 \right\}.
\end{align}
The quantity $\|\SelectGroup^{(\ell)} \Fou_\nbClass \|_2$ is called the local group graph coherence.
\end{definition}
In the extreme case where the groups $\Group_1, \ldots, \Group_\nbGroups$ reduce all to singletons, we recover the definition of the graph weighted coherence introduced in~\cite{puy15b}. 

It is easy to prove that $\cumCoh_{\prob}$ is lower bounded by $1$. Indeed, for any $\vec{\eta} \in \Rbb^\nbClass$ with $\norm{\vec{\eta}}_2 = 1$, we have
\begin{align}
1 
& = \norm{\Fou_k \vec{\eta}}_2^2 
= \sum_{\ell=1}^{N} \norm{\SelectGroup^{(\ell)} \Fou_k \vec{\eta}}_2^2
= \sum_{\ell=1}^{N} \prob_{\ell} \cdot \frac{\norm{\SelectGroup^{(\ell)}\Fou_k \vec{\eta}}_2^2}{\prob_{\ell}}
\leq \norm{\prob}_1 \cdot \max_{1 \leq \ell \leq \nbGroups} \left\{\frac{\norm{\SelectGroup^{(\ell)}\Fou_k \vec{\eta}}_2^2}{\prob_{\ell}} \right\} \nonumber \\
& = \max_{1 \leq \ell \leq \nbGroups} \left\{\frac{\norm{\SelectGroup^{(\ell)}\Fou_k \vec{\eta}}_2^2}{\prob_{\ell}} \right\}
\leq \max_{1 \leq \ell \leq \nbGroups} \left\{\frac{\norm{\SelectGroup^{(\ell)}\Fou_k}_2^2}{\prob_{\ell}} \right\}
= \cumCoh_{\prob}^2.
\end{align}
We have $\cumCoh_{\prob} = 1$ in, for example, the degenerated case where $\Group_1 = \{1, \ldots, \nbVert\}$.

\subsection{Stable embedding}

We now have all the tools to present our main theorem that shows that sampling $O(\cumCoh_{\prob}^2 \log(\nbClass))$ groups is sufficient to stably embed the whole set of $\nbClass$-bandlimited signals. Hence, it is possible to reconstruct any $\sig \in \spann(\Fou_k)$ from its measurements $\meas = \Meas \sig$.

\begin{theorem}[Restricted isometry property - RIP]
\label{th:rip}
Let $\Meas$ be a random subsampling matrix constructed as in \eqref{eq:subsampling_matrix_def} using the groups $\Group_1, \ldots, \Group_\nbGroups$ and the sampling distribution $\prob$. For any $\delta, \xi \in (0, 1)$, with probability at least $1-\xi$,
\begin{align}
\label{eq:RIP}
(1 - \delta) \norm{\sig}_2^2 \leq \inv{\nbGroupsRed} \norm{\ma{P} \Meas \; \sig}_2^2 \leq (1 + \delta) \norm{\sig}_2^2
\end{align}
for all $\sig \in \spann(\Fou_{\nbClass})$ provided that
\begin{align}
\label{eq:sampling_condition}
\nbGroupsRed \geq \frac{3}{\delta^2} \; \cumCoh_{\prob}^2 \; \log\left( \frac{2\nbClass}{\xi} \right).
\end{align}
\end{theorem}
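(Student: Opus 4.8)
The plan is to recognise \eqref{eq:RIP} as a statement of matrix concentration and to deduce it from a matrix Chernoff bound for sums of independent positive semi-definite matrices. First I would parametrise every $\sig \in \spann(\Fou_\nbClass)$ as $\sig = \Fou_\nbClass \vec{\eta}$; since $\Fou_\nbClass$ has orthonormal columns, $\norm{\sig}_2 = \norm{\vec{\eta}}_2$, and expanding the block structure of $\ma{P}$ and $\Meas$ gives $\norm{\ma{P}\Meas\sig}_2^2 = \vec{\eta}^\adjoint (\sum_{j=1}^{\nbGroupsRed} X_j)\vec{\eta}$, where $X_j := \prob_{\omega_j}^{-1}\Fou_\nbClass^\adjoint {\SelectGroup^{(\omega_j)}}^\adjoint \SelectGroup^{(\omega_j)}\Fou_\nbClass$ is a $\nbClass\times\nbClass$ positive semi-definite matrix. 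Because the indices $\omega_1, \ldots, \omega_{\nbGroupsRed}$ are drawn independently from $\prob$, the $X_j$ are i.i.d. Hence \eqref{eq:RIP} holds for all $\sig \in \spann(\Fou_\nbClass)$ if and only if every eigenvalue of $\frac{1}{\nbGroupsRed}\sum_j X_j$ lies in $[1-\delta, 1+\delta]$, \ie, $\norm{\frac{1}{\nbGroupsRed}\sum_j X_j - \ma{I}}_2 \leq \delta$. This reduces the theorem to a two-sided deviation bound on the extreme eigenvalues of a sum of independent positive semi-definite matrices.

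Next I would compute the two ingredients that the Chernoff bound needs. For the mean, conditioning on a single draw and using $\Pbb(\omega_j = \ell) = \prob_\ell$ together with the tight-frame identity \eqref{eq:group_tight_frame}, one gets $\Ebb[X_j] = \sum_{\ell=1}^\nbGroups \prob_\ell \cdot \prob_\ell^{-1}\Fou_\nbClass^\adjoint {\SelectGroup^{(\ell)}}^\adjoint \SelectGroup^{(\ell)}\Fou_\nbClass = \Fou_\nbClass^\adjoint \Fou_\nbClass = \ma{I}$, so that $\Ebb[\sum_j X_j] = \nbGroupsRed\,\ma{I}$ and both $\lambda_{\rm min}$ and $\lambda_{\rm max}$ of the mean equal $\nbGroupsRed$ (this is exactly the norm-preservation property quoted in the footnote). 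For the uniform bound on the summands, whenever $\omega_j = \ell$ we have $\lambda_{\rm max}(X_j) = \prob_\ell^{-1}\norm{\SelectGroup^{(\ell)}\Fou_\nbClass}_2^2 \leq \cumCoh_\prob^2$ by the very definition of the group graph cumulative coherence, so $\lambda_{\rm max}(X_j) \leq \cumCoh_\prob^2$ almost surely.

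Finally I would invoke a matrix Chernoff inequality (\eg, Tropp) with mean eigenvalue $\nbGroupsRed$ and per-term bound $R = \cumCoh_\prob^2$: the upper and lower tails are each controlled by $\nbClass\,[e^{\pm\delta}(1\pm\delta)^{-(1\pm\delta)}]^{\nbGroupsRed/\cumCoh_\prob^2}$. Using the standard scalar estimates $-\delta-(1-\delta)\log(1-\delta) \leq -\delta^2/2$ and $\delta-(1+\delta)\log(1+\delta)\leq -\delta^2/3$ and taking a union bound over the two tails, the failure probability is at most $2\nbClass\exp(-\delta^2\nbGroupsRed/(3\cumCoh_\prob^2))$; requiring this to be at most $\xi$ yields precisely \eqref{eq:sampling_condition}. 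The main work is not conceptual but bookkeeping: one must verify the expectation identity through the tight-frame relation, confirm that the almost-sure bound is governed by $\cumCoh_\prob$, and track the constant in the Chernoff exponent carefully so that the binding upper-tail estimate produces the factor $3/\delta^2$ and the dimensional factor $2\nbClass$ inside the logarithm rather than a looser constant.
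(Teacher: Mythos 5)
Your proposal is correct and follows essentially the same route as the paper: reduce the RIP to a two-sided eigenvalue bound on $\frac{1}{\nbGroupsRed}\sum_j \ma{X}_j$ with $\ma{X}_j$ the i.i.d.\ rescaled restriction matrices, verify $\Ebb\,\ma{X}_j=\ma{I}$ via the tight-frame identity \eqref{eq:group_tight_frame}, bound $\lambda_{\rm max}(\ma{X}_j)\leq \cumCoh_{\prob}^2$ by definition of the coherence, and conclude with Tropp's matrix Chernoff bound and a union bound over the two tails. The bookkeeping of the constants ($3/\delta^2$ and the factor $2\nbClass$) matches the paper's Appendix~\ref{app:proof_rip} exactly.
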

\begin{proof}
See Appendix~\ref{app:proof_rip}.
\end{proof}

In the above theorem, we recall that $\nbGroupsRed$ is the number of selected groups, each of them containing several nodes. We thus control the number of groups to sample and not directly the number of nodes. As the lower bound on $\cumCoh_{\prob}$ is $1$, sampling $O(\log(\nbClass))$ groups might already be sufficient if the groups and the sampling distribution are well-designed.

The number of groups to sample is driven by $\cumCoh_{\prob}$, which itself depends on the structure of the groups $\Group_1, \ldots, \Group_\nbGroups$ and on the sampling distribution $\prob$. To reduce the number of samples to measure, we might optimise the structure of the groups and the sampling distribution. For example, if we were able to construct $\nbVert/L$ groups ($L\geq1$) such that $\|\SelectGroup^{(\ell)} \Fou_k\|_2 \approx (L/\nbVert)^{1/2}$ -- \ie, no $\nbClass$-bandlimited signals have more than $100 \cdot (L/\nbVert)^{1/2}$ percent of its energy concentrated in each group -- then setting $\prob_{\ell} = L/\nbVert$, $\ell = 1, \ldots, \nbVert/L$, would yield $\cumCoh \approx 1$. In this case, sampling one group is enough to embed the set of $\nbClass$-bandlimited signals. However, it is not obvious how we can construct such groups in practice and we might not even have the flexibility to modify the structure of the groups. In such a case, the only possibility to reduce the number of measurements is to optimise the sampling distribution $\prob$ to minimise $\cumCoh_{\prob}$.

The sampling distribution minimizing the coherence $\cumCoh_{\prob}$ is the distribution $\prob^* \in \Rbb^\nbGroups$ that satisfies
\begin{align}
\prob^*_{\ell} := \frac{\norm{\SelectGroup^{(\ell)}\Fou_k}_2^2}{\sum_{\ell'=1}^\nbGroups \norm{\SelectGroup^{(\ell')}\Fou_k}_2^2}, \text{ for all } \ell \in \{1, \ldots, \nbGroups \},
\label{eq:pstar}
\end{align}
and for which
\begin{align}
\cumCoh_{\prob^*}^2 = \sum_{\ell=1}^\nbGroups \norm{\SelectGroup^{(\ell)}\Fou_k}_2^2.
\end{align}
Indeed, let $\vec{p}' \neq \prob^*$ be another sampling distribution. As the entries of $\vec{p}'$ and $\prob^*$ are nonnegative and sum to $1$, we necessarily have $\vec{p}'_{\ell'} < \prob_{\ell'}^*$ for some $\ell'>0$. Then, 
\begin{align}
\cumCoh_{\vec{p}'}^2
\geq {\vec{p}'_{\ell'}}^{-1} \norm{\SelectGroup^{(\ell')}\Fou_k}_2^2
> 
{\prob_{\ell'}^*}^{-1} \norm{\SelectGroup^{(\ell')}\Fou_k}_2^2
=
\cumCoh_{\prob^*}^2,
\end{align}
where the last equality is obtained by replacing $\prob_{\ell'}^*$ with its value. Therefore, $\cumCoh_{\vec{p}'} > \cumCoh_{\prob^*}$ for any $\vec{p}' \neq \prob^*$. As similar proof can be found in, \eg, \cite{chauffert13} where the authors derive the optimal sampling distribution for a compressive system.

We notice that
\begin{align}
\cumCoh_{\prob^*}^2 = \sum_{\ell=1}^\nbGroups \norm{\SelectGroup^{(\ell)}\Fou_k}_2^2 \leq \sum_{\ell=1}^\nbGroups \norm{\SelectGroup^{(\ell)}\Fou_k}_F^2 = \nbClass.
\end{align}
Hence, by using this distribution, \eqref{eq:sampling_condition} shows that sampling $O(\nbClass\log(\nbClass))$ groups is always sufficient to sample all $\nbClass$-bandlimited signals. The exact number is proportional to $\cumCoh_{\prob^*}^2 \log(\nbClass)$. This is not in contradiction with the fact that at least $\nbClass$ measurements are required in total as one group contains at least one node. We also have
\begin{align}
\cumCoh_{\prob^*}^2 = \sum_{\ell=1}^\nbGroups \norm{\SelectGroup^{(\ell)}\Fou_k}_2^2 \leq \nbGroups,
\end{align}
as $\norm{\SelectGroup^{(\ell)}\Fou_k}_2^2 \leq 1$. Therefore, in any case, the bound never suggests to sample much more than $\nbGroups$ groups, as one would expect it.

We recall that the results in~\cite{puy15b} proves that it is always sufficient to sample $O(\nbClass\log(\nbClass))$ nodes to embed the set of $\nbClass$-bandlimited signals. When sampling the nodes by groups, it is the number of groups to sample that should be $O(\nbClass\log(\nbClass))$. This can be a large number of individual nodes but this is the potential price to pay when sampling the nodes by groups.

Variable density sampling \cite{puy11, krahmer12, adcock14} and structured sampling \cite{chauffert13, boyer15, bigot16} are also important topics in compressed sensing. The method proposed here is closely inspired by these studies, especially by \cite{boyer15, bigot16}. Our results thus share several similarities with these works. We however benefit from a simpler signal model and take advantage of the graph structure to refine the results, propose simpler decoders to reconstruct the signal, and design efficient algorithms to estimate $\prob^*$.

\subsection{A more practical result at small $k$'s}

Optimising the sampling distribution reduces to estimating the spectral norm of the matrices $\SelectGroup^{(\ell)} \Fou_\nbClass$. We will present in Section~\ref{sec:estimate_sampling} a method avoiding the computation of $\Fou_\nbClass$. However, the method might still be too slow when a large number of groups is involved as it requires an estimation of a spectral norm for each group separately. It is thus interesting to characterise the performance of the proposed method using other quantities easier to compute. 

In this section, we present results involving the following quantity
\begin{align}
\bar{\cumCoh}_{\prob} := \max_{1 \leq \ell \leq \nbGroups} \left\{ \prob_{\ell}^{-1/2} \norm{\SelectGroup^{(\ell)} \Fou_\nbClass}_F \right\}.
\end{align}
The only difference between $\cumCoh_{\prob}$ and $\bar{\cumCoh}_{\prob}$ is that we substituted the Frobenius norm for the spectral norm. As $\norm{\ma{X}}_2 \leq \norm{\ma{X}}_F$ for any matrix $\ma{X}$, we have $\cumCoh_{\prob} \leq \bar{\cumCoh}_{\prob}$; hence the results involving $\bar{\cumCoh}_{\prob}$ will be more pessimistic than those involving $\cumCoh_{\prob}$. We have $\bar{\cumCoh}_{\prob} \geq \nbClass$. Indeed,
\begin{align}
k 
& = \norm{\Fou_k}_F^2 
= \sum_{\ell=1}^{N} \norm{\SelectGroup^{(\ell)} \Fou_k}_F^2
= \sum_{\ell=1}^{N} \prob_{\ell} \cdot \frac{\norm{\SelectGroup^{(\ell)}\Fou_k}_F^2}{\prob_{\ell}}
\leq \norm{\prob}_1 \cdot \max_{1 \leq \ell \leq \nbGroupsRed} \left\{\frac{\norm{\SelectGroup^{(\ell)}\Fou_k}_F^2}{\prob_{\ell}} \right\}
= \bar{\cumCoh}_{\prob}^2.
\end{align}
As with $\cumCoh_{\prob}$, the lower bound is also attained in, for example, the degenerated case where $\Group_1 = \{1, \ldots, \nbVert\}$. As $\cumCoh_{\prob} \leq \bar{\cumCoh}_{\prob}$, we have the following corollary to Theorem~\ref{th:rip}.
\begin{corollary}
Let $\Meas$ be a random subsampling matrix constructed as in \eqref{eq:subsampling_matrix_def} using the groups $\Group_1, \ldots, \Group_\nbGroups$ and the sampling distribution $\prob$. For any $\delta, \xi \in (0, 1)$, with probability at least $1-\xi$, \eqref{eq:RIP} holds for all $\sig \in \spann(\Fou_{\nbClass})$ provided that
\begin{align}
\label{eq:sampling_condition_pes}
\nbGroupsRed \geq \frac{3}{\delta^2} \; \bar{\cumCoh}_{\prob}^2 \; \log\left( \frac{2\nbClass}{\xi} \right).
\end{align}
\end{corollary}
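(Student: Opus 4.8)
The plan is to deduce this corollary directly from Theorem~\ref{th:rip} by a monotonicity argument, so that no fresh probabilistic work is needed. The fact that drives everything is the elementary norm inequality $\norm{\ma{X}}_2 \leq \norm{\ma{X}}_F$, valid for any matrix $\ma{X}$ and already noted in the surrounding text. First I would apply it to each restriction matrix $\SelectGroup^{(\ell)} \Fou_\nbClass$, obtaining $\norm{\SelectGroup^{(\ell)} \Fou_\nbClass}_2 \leq \norm{\SelectGroup^{(\ell)} \Fou_\nbClass}_F$ for every $\ell$. Multiplying by the fixed nonnegative weight $\prob_{\ell}^{-1/2}$ preserves the inequality for each $\ell$, and taking the maximum over $\ell$ in the definitions of the two coherences then yields $\cumCoh_{\prob} \leq \bar{\cumCoh}_{\prob}$.

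Next I would square this to get $\cumCoh_{\prob}^2 \leq \bar{\cumCoh}_{\prob}^2$ and observe that the right-hand side of the sampling condition is increasing in the coherence: multiplying through by the positive constant $(3/\delta^2)\log(2\nbClass/\xi)$ gives
\[
\frac{3}{\delta^2}\,\cumCoh_{\prob}^2\,\log\!\left(\frac{2\nbClass}{\xi}\right) \;\leq\; \frac{3}{\delta^2}\,\bar{\cumCoh}_{\prob}^2\,\log\!\left(\frac{2\nbClass}{\xi}\right).
\]
Chaining this with the hypothesis \eqref{eq:sampling_condition_pes}, any $\nbGroupsRed$ satisfying $\nbGroupsRed \geq (3/\delta^2)\,\bar{\cumCoh}_{\prob}^2\,\log(2\nbClass/\xi)$ must also satisfy $\nbGroupsRed \geq (3/\delta^2)\,\cumCoh_{\prob}^2\,\log(2\nbClass/\xi)$, which is exactly condition \eqref{eq:sampling_condition}. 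Invoking Theorem~\ref{th:rip} then shows that \eqref{eq:RIP} holds for all $\sig \in \spann(\Fou_{\nbClass})$ with probability at least $1-\xi$, which is the claim.

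There is essentially no obstacle to overcome here: replacing the spectral norm by the larger Frobenius norm only strengthens the sufficient condition on $\nbGroupsRed$, so the corollary is a strictly more pessimistic restatement of Theorem~\ref{th:rip}. The single point worth stating carefully is that the probability level $1-\xi$, the constant $3/\delta^2$, and the random construction of $\Meas$ are all inherited unchanged from Theorem~\ref{th:rip}; we are merely tightening the lower bound required of $\nbGroupsRed$ and not modifying the underlying event on which the restricted isometry property is established.
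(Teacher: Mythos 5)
Your proposal is correct and matches the paper's own proof exactly: both deduce $\cumCoh_{\prob} \leq \bar{\cumCoh}_{\prob}$ from $\norm{\cdot}_2 \leq \norm{\cdot}_F$, observe that \eqref{eq:sampling_condition_pes} therefore implies \eqref{eq:sampling_condition}, and invoke Theorem~\ref{th:rip}. Nothing is missing.
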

\begin{proof}
As $\bar{\cumCoh}_{\prob} \geq \cumCoh_{\prob}$, \eqref{eq:sampling_condition_pes} implies \eqref{eq:sampling_condition}. Theorem~\ref{th:rip} then proves that \eqref{eq:RIP} holds with probability at least $1-\xi$ for all $\sig \in \spann(\Fou_{\nbClass})$. 
\end{proof}

The sufficient condition~\eqref{eq:sampling_condition_pes} can be much more pessimistic than~\eqref{eq:sampling_condition}. Indeed, as we have $\bar{\cumCoh}_{\prob}^2 \geq \nbClass$, Condition~\eqref{eq:sampling_condition_pes} suggests to always sample more than $O(\nbClass\log(\nbClass))$ groups, while we know that sampling $O(\log(\nbClass))$ can be enough. The interest of this result is thus in the regime where $\nbClass$ is small.

As we have done it with $\cumCoh_{\prob}$, we can also optimise $\prob$ to minimise $\bar{\cumCoh}_{\prob}$. The sampling distribution that minimises $\bar{\cumCoh}_{\prob}$ is the distribution $\vec{q}^*$ satisfying
\begin{align}
\vec{q}^*_{\ell} := \frac{\norm{\SelectGroup^{(\ell)}\Fou_k}_F^2}{\nbClass}, \text{ for all } l \in \{1, \ldots, \nbGroups \},
\label{eq:qstar}
\end{align}
and for which
\begin{align}
\bar{\cumCoh}_{\vec{q}^*}^2 = \nbClass.
\end{align}
With this distribution, \eqref{eq:sampling_condition_pes} proves that sampling $O(\nbClass\log(\nbClass))$ groups is always enough to sample all $\nbClass$-bandlimited signals. This result is particularly interesting at small $\nbClass$'s because estimating $\vec{q}^*$ is much easier and faster than estimating $\prob^*$ (see Section~\ref{sec:estimate_sampling}).

In some particular cases, this result can be interesting  also at large $\nbClass$'s. Indeed, \eqref{eq:sampling_condition_pes} is a pessimistic bound. In reality, we may have $\cumCoh_{\vec{q}^*}^2 \leq \nbClass$ so that, according to \eqref{eq:sampling_condition}, fewer samples than $O(\nbClass\log(\nbClass))$ are actually sufficient when using $\vec{q}^*$. Furthermore, $\vec{q}^*$ might actually be close to $\prob^*$, in which case we would reach almost optimal results with $\vec{q}^*$. The occurrence of these events is however quite difficult to predict and depends on the structure of the groups and of the graph.

Note that we need the knowledge of the truncated Fourier matrix $\Fou_k$ to compute the distributions $\vec{p}^*$ and $\vec{q}^*$. Computing $\Fou_k$ can be intractable for large graphs. We will present a method to overcome this issue in Section~\ref{sec:estimate_sampling}. We continue first by explaining how to estimate $\sig$ from its measurements.

%
\section{Fast reconstruction}
\label{sec:fast_decoder}

Once the signal has been sampled, we shall also be able to reconstruct it. In \cite{puy15b}, the authors propose to estimate the original signal by solving
\begin{align}
\label{eq:practical_decoder}
\min_{\vec{z} \in \Rbb^\nbVert} \norm{\ma{P}(\Meas \vec{z} - \meas)}_2^2 + \reg \; \vec{z}^\adjoint g(\Lap) \vec{z},
\end{align}
where $\reg>0$ and $g$ is a nonnegative nondecreasing polynomial. We have
\begin{align}
g(\Lap) := \sum_{i=0}^d \alpha_i \, \Lap^i = \sum_{i=0}^d \alpha_i \, \Fou \Eig^i \Fou^\adjoint,
\end{align}
where $\alpha_0, \ldots, \alpha_{d} \in \Rbb$ are the coefficients of the polynomial and $d \in \Nbb$ its degree. These polynomial's parameters can be tuned to improve the quality of the reconstruction. The role of $g$ can be viewed as a filter on $\Graph$ and should ideally be a high-pass filter. The matrix $\ma{P}$ is introduced to account for the RIP. The advantage of this method is that it can be solved efficiently even for large graph for example by conjugate gradient. Indeed, each step of this algorithm can be implemented by using only matrix-vector multiplications with $\ma{P}\Meas$ and $\Lap$, which are both sparse matrices. The matrix $g(\Lap)$ does not have to be computed explicitly.

For brevity, we do not recall the theorem proving that~\eqref{eq:practical_decoder} provides accurate and stable recovery of all $\nbClass$-bandlimited signals. However, this theorem applies as soon as the restricted isometry property~\eqref{eq:RIP} holds, and thus applies here when~\eqref{eq:sampling_condition} holds. The reconstruction quality is controlled by $g(\eig_\nbClass)$ and the ratio ${g(\eig_\nbClass)}/{g(\eig_{\nbClass+1})}$. One should seek to design a filter $g$ such that these quantities are as close as possible to zero to improve the reconstruction quality.

We propose now a method to obtain a faster estimation of the original signal when it is nearly piecewise constant.

\subsection{Piecewise constant graph signals}

Before continuing, we want to stress that we consider non-overlapping groups $\Group_1, \ldots, \Group_\nbGroups$ in this rest of Section~\ref{sec:fast_decoder}.

If a graph signal is nearly piecewise constant over the groups $\Group_1, \ldots, \Group_{\nbGroups}$ then reconstructing the mean values of this signal for each group is enough to obtain a good approximation of the original signal. Instead of estimating $\nbVert$ unknowns, we reduce the estimation to $\nbGroups$ unknowns. When $\nbGroups \ll \nbVert$, this is a large reduction of dimension yielding a significant speed up and memory reduction. 

The fact that a signal $\sig$ is piecewise constant over the groups $\Group_1, \ldots, \Group_\nbGroups$ is characterized as follows. We construct the averaging row-vectors $\average^{(\ell)} \in \Rbb^{1 \times \nbVert}$ that satisfy
\begin{align}
\average^{(\ell)} := \frac{\vec{1}^\adjoint \SelectGroup^{(\ell)}}{\abs{\Group_{\ell}}^{1/2}},
\end{align}
and the matrix
\begin{align}
\Average := 
\left(
\begin{array}{c}
\average^{(1)}\\
\vdots\\
\average^{(\nbGroups)}
\end{array}
\right) \in \Rbb^{\nbGroups \times \nbVert}.
\end{align}
As the groups do not overlap, we have 
\begin{align}
\label{eq:tight_frame_average}
\Average\Average^\adjoint = \ma{I},
\end{align}
hence $\norm{\Average}_2 = 1$. Applying $\Average$ to $\sig$ provides $\nbGroups$ values, each one of them corresponding to the sum of the values of $\sig$ within the group $\Group_{\ell}$, scaled by $\abs{\Group_{\ell}}^{-1/2}$. Then, applying $\Average^\adjoint$ to $\Average\sig$ gives an approximation of $\sig$ where the values in the vector $\Average^\adjoint\Average\sig$ are constant within each group; this is a piecewise constant vector over the groups. The value of $\Average^\adjoint\Average\sig$ appearing within the group $\Group_{\ell}$ is exactly the average of $\sig$ within $\Group_{\ell}$. Saying that $\sig$ is nearly constant within each group corresponds to assume that 
\begin{align}
\label{eq:blockconstant_signal}
\norm{\Average^\adjoint \Average \sig - \sig}_2 \leq \epsilon \norm{\sig}_2,
\end{align}
where $\epsilon \geq 0$ is a small value. The signal model of interest in this section is thus
\begin{align}
\PSignals := \left\{ \sig \in \spann(\Fou_\nbClass) \;\; \vert \;\; \norm{(\Average^\adjoint \Average - \ma{I}) \; \sig}_2 \leq \epsilon \norm{\sig}_2 \right\}.
\end{align}
%

\subsection{Reducing the dimension}

To build a fast algorithm exploiting the above property, we use a reconstruction method similar to~\eqref{eq:practical_decoder} but involving vectors of smaller dimension. We define the averaged vector $\tilde{\sig} := \Average \sig \in \Rbb^{\nbGroups}$ of dimension $\nbGroups$. As $\sig \in \PSignals$, estimating $\tilde{\sig}$ is enough to get a good approximation of $\sig$ -- we just need to multiply it with $\Average^\adjoint$. Furthermore, as $\sig$ is nearly piecewise constant over the groups $\Group_1, \ldots, \Group_\nbGroups$, by construction of the matrix $\Meas$, the measurement vector $\meas = \Meas \sig$ is also almost piecewise constant over the sampled groups $\Group_{\omega_1}, \ldots, \Group_{\omega_\nbGroupsRed}$. We thus average $\meas$ over these groups by multiplying it with the matrix $\widetilde{\Average} \in \Rbb^{\nbGroupsRed \times \nbVertRed}$ that satisfies
\begin{align}
\widetilde{\Average}_{ji} := 
\left\{
\begin{array}{cc}
\abs{\Group_{\omega_{j}}}^{-1/2}
& \text{for } \sum_{j'=1}^{j-1} \abs{\Group_{\omega_{j'}}} \leq \; i  \;\leq \sum_{j'=1}^{\ell} \abs{\Group_{\omega_{j'}}}, \\
0 & \text{otherwise}.
\end{array}
\right.
\end{align}
We obtain
\begin{align}
\tilde{\meas} 
:= \widetilde{\Average} \meas 
= \widetilde{\Average} \Meas \sig \in \Rbb^{\nbGroupsRed}.
\label{eq:ytilde1}
\end{align}
We now have to link $\tilde{\meas}$ to $\tilde{\sig}$. We create the matrix $\widetilde{\Meas} \in \Rbb^{\nbGroupsRed \times \nbGroups}$ that restricts the $\nbGroups$ mean value of $\tilde{\sig}$ to the $\nbGroupsRed$ mean value of the selected groups, \ie,
\begin{align}
\label{eq:small_subsampling_matrix}
\widetilde{\Meas}_{j i} := 
\left\{
\begin{array}{cc}
1 & \text{if } i = \omega_{j},\\
0 & \text{otherwise}.
\end{array}
\right.
\end{align}
We have therefore
\begin{align}
\label{eq:ytilde2}
\tilde{\meas} = \widetilde{\Meas} \, \tilde{\sig}.
\end{align}

The goal is now to estimate $\tilde{\sig}$ from $\tilde{\meas}$. To ensure that the reconstruction method is stable to measurement noise, we do not consider the perfect scenario above but instead the scenario where
\begin{align}
\tilde{\meas} = \widetilde{\Meas}\tilde{\sig} + \tilde{\err},
\end{align}
and $\tilde{\err} \in \Rbb^\nbGroupsRed$ models noise. We now need a regularisation term to estimate $\tilde{\sig}$. We obtain this term by reducing the dimension of the regularisation involving the Laplacian $\Lap$ in~\eqref{eq:practical_decoder}. We compute
\begin{align}
\widetilde{\Lap} := \Average \, g(\Lap) \, \Average^\adjoint = (\Average \Fou) \, g(\Eig) \, (\Average \Fou)^\adjoint \in \Rbb^{\nbGroups \times \nbGroups}.
\end{align}
Note that $\widetilde{\Lap}$ is a symmetric positive definite matrix. Like $g(\Lap)$, it can thus be used as a regularisation. We thus propose to estimate $\tilde{\sig}$ by solving
\begin{align}
\label{eq:fast_decoder}
\min_{\tilde{\vec{z}} \in \Rbb^{\nbGroups}} \norm{\widetilde{\ma{P}} (\widetilde{\Meas} \tilde{\vec{z}} - \tilde{\meas})}_2^2 + \reg \; \tilde{\vec{z}}^\adjoint \, \widetilde{\Lap} \, \tilde{\vec{z}},
\end{align}
where $\reg>0$ and $\widetilde{\ma{P}} \in \Rbb^{\nbGroupsRed \times \nbGroupsRed}$ is the diagonal matrix with entries satisfying
\begin{align}
\label{eq:small_weight_matrix}
\widetilde{\ma{P}}_{jj} := \prob_{\omega_{j}}^{-1/2}.
\end{align}
Let $\tilde{\sig}^* \in \Rbb^{\nbGroups}$ be a solution of~\eqref{eq:fast_decoder}. We finally obtain an estimation of $\sig$ by computing $\Average^\adjoint \tilde{\sig}^*$.

In the particular case where $g(\cdot)$ is the identity, one can notice that 
\begin{align}
\widetilde{\Lap}_{\ell\ell'} = \sum_{(i,j) \in \Group_\ell \times \Group_{\ell'}} \frac{\Lap_{ij}}{\abs{\Group_{\ell}}^{1/2} \abs{\Group_{\ell'}}^{1/2}}
\end{align}
is non-zero only if there is at least one edge in $\set{E}$ joining the groups $\Group_{\ell}$ and $\Group_{\ell'}$. The Laplacian $\widetilde{\Lap}_{\ell\ell'}$ preserves the connections present in the original graph represented by $\Lap$. 

The dimension of the unknown vector in \eqref{eq:fast_decoder} is $\nbGroups$, which can be much smaller than $\nbVert$. This leads to a large gain in memory and computation time when either $\widetilde{\Lap}$ or matrix-vector multiplications with $\widetilde{\Lap}$ can be computed rapidly. If $g$ has a small degree, then $\widetilde{\Lap}$ can be computed explicitly in a short amount of time. In such a case, we can solve \eqref{eq:fast_decoder} faster than \eqref{eq:practical_decoder} as it involves matrices of (much) smaller dimensions. In general, it is however not always straightforward to find an a efficient implementation for matrix-vector multiplications with $\widetilde{\Lap}$ without temporary going back to the signal domain of dimension $\nbVert$, \ie, multiplying the vector $\tilde{\vec{z}}$ with $\Average^\adjoint$, filtering the high dimensional signal $\Average^\adjoint \tilde{\vec{z}}$, and downsampling the result. Even though solving \eqref{eq:fast_decoder} might still be faster than solving \eqref{eq:practical_decoder} in this situation, we loose part of the efficiency by working temporarily in the high dimensional domain. We thus have less flexibility in the choice of $g$ with this reconstruction technique.

Let us also mention that \eqref{eq:fast_decoder} can be used to initialise the algorithm used to solve \eqref{eq:practical_decoder} with a good approximate solution. As in multigrid approaches to solve linear systems of equations, see, \eg, \cite{borzi09, dreyer00}.

The following theorem bounds the error between the signal recovered by \eqref{eq:fast_decoder} and the original vector.

\begin{theorem}
\label{th:fast_decoder}
Let $\Omega = \{\omega_1, \ldots, \omega_\nbGroupsRed \}$ be a set of $\nbGroupsRed$ indices selected independently from $\{ 1, \ldots, \nbGroups\}$ using a sampling distribution $\prob \in \Rbb^\nbGroups$, $\Meas, \ma{P}, \widetilde{\Meas}, \widetilde{\ma{P}}$ be the associated matrices constructed respectively in \eqref{eq:subsampling_matrix_def}, \eqref{eq:subsampled_weight_matrix}, \eqref{eq:small_subsampling_matrix} and \eqref{eq:small_weight_matrix}, and $M_{\rm max}>0$ be a constant such that $\norm{\ma{P}\Meas}_2 \leq M_{\rm max}$. Let $\xi, \delta \in (0, 1)$ and suppose that $\nbGroupsRed$ satisfies \eqref{eq:sampling_condition}. With probability at least $1-\xi$, the following holds for all $\sig \in \PSignals$, all $\tilde{\err} \in \Rbb^{\nbGroupsRed}$, all $\reg > 0$, and all nonnegative nondecreasing polynomial functions $g$ such that $g(\eig_{\nbClass+1}) > 0$.

Let $\tilde{\sig}^*$ be the solution of \eqref{eq:fast_decoder} with $\tilde{\meas} = \widetilde{\Meas}\Average \sig + \tilde{\err}$. Define $\vec{\alpha}^* := \Fou_\nbClass \Fou_\nbClass^\adjoint \, {\Average}^\adjoint \tilde{\sig}^*$ and $\vec{\beta}^* := (\ma{I} - \Fou_\nbClass \Fou_\nbClass^\adjoint) \, {\Average}^\adjoint \tilde{\sig}^*$. Then,
\begin{align}
\label{eq:bound_alpha}
\norm{ \vec{\alpha}^* - \sig }_2
\; \leq \;
\inv{\sqrt{\nbGroupsRed (1-\delta)}} \; \cdot
\Bigg[ &
\left( 2 +  \frac{M_{\rm max}}{\sqrt{\reg g(\eig_{\nbClass+1})}}\right)\norm{\widetilde{\ma{P}} \tilde{\err}}_2
+ \left( M_{\rm max} \sqrt{\frac{g(\eig_\nbClass)}{g(\eig_{\nbClass+1})}} + \sqrt{\reg g(\eig_\nbClass)}\right) \norm{ \sig  }_2 \nonumber \\
& \left. + \; \epsilon \left( 2 M_{\rm max} + M_{\rm max} \sqrt{\frac{g(\eig_\nbVert)}{g(\eig_{\nbClass+1})}} + \sqrt{\reg g(\eig_\nbVert)} \right) \norm{ \sig  }_2 \right],
\end{align}
and
\begin{align}
\label{eq:bound_beta}
\norm{ \vec{\beta}^*}_2 
\leq \; \frac{1}{\sqrt{\reg g(\eig_{\nbClass+1})}} \norm{ \widetilde{\ma{P}} \tilde{\err} }_2 \; + \; \sqrt{\frac{g(\eig_\nbClass)}{g(\eig_{\nbClass+1})}} \norm{ \sig  }_2 \; + \; \epsilon \; \sqrt{\frac{g(\eig_\nbVert)}{g(\eig_{\nbClass+1})}} \norm{ \sig  }_2. 
\end{align}
\end{theorem}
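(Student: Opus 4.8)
The plan is to transfer the reduced-dimension program \eqref{eq:fast_decoder}, which lives in $\Rbb^\nbGroups$, back to the ambient space $\Rbb^\nbVert$, where Theorem~\ref{th:rip} applies, by means of a handful of exact operator identities. Set $\sig^* := \Average^\adjoint \tilde{\sig}^*$, so that $\sig^* = \vec{\alpha}^* + \vec{\beta}^*$ with $\vec{\alpha}^* \in \spann(\Fou_\nbClass)$ and $\vec{\beta}^* \perp \spann(\Fou_\nbClass)$, and observe that $\sig^*$ is \emph{exactly} piecewise constant over the groups. I will use the tight-frame relations $\Average\Average^\adjoint = \ma{I}$ and $\widetilde{\Average}\widetilde{\Average}^\adjoint = \ma{I}$ (so that $\widetilde{\Average}^\adjoint\widetilde{\Average}$ is the orthogonal projector onto vectors piecewise constant over the selected groups), the commutations $\widetilde{\Average}\Meas = \widetilde{\Meas}\Average$ and $\widetilde{\ma{P}}\widetilde{\Average} = \widetilde{\Average}\ma{P}$ (both immediate from the definitions), and the identity $\widetilde{\Lap} = \Average\,g(\Lap)\,\Average^\adjoint$, whence $(\tilde{\vec{z}})^\adjoint\widetilde{\Lap}\tilde{\vec{z}} = (\Average^\adjoint\tilde{\vec{z}})^\adjoint g(\Lap)(\Average^\adjoint\tilde{\vec{z}})$ for every $\tilde{\vec{z}} \in \Rbb^\nbGroups$.

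First I would invoke the optimality of $\tilde{\sig}^*$ against the reference point $\tilde{\sig} := \Average\sig$. Since $\tilde{\meas} = \widetilde{\Meas}\Average\sig + \tilde{\err}$, evaluating the objective at $\tilde{\sig}$ leaves $\norm{\widetilde{\ma{P}}\tilde{\err}}_2^2 + \reg\,\vec{w}^\adjoint g(\Lap)\vec{w}$ with $\vec{w} := \Average^\adjoint\Average\sig$. As $\sig$ is $\nbClass$-bandlimited, $\norm{\vec{w}-\sig}_2 \leq \epsilon\norm{\sig}_2$, and $g$ is nondecreasing, splitting $\vec{w} = \sig + (\vec{w}-\sig)$ gives $(\vec{w}^\adjoint g(\Lap)\vec{w})^{1/2} \leq \sqrt{g(\eig_\nbClass)}\norm{\sig}_2 + \epsilon\sqrt{g(\eig_\nbVert)}\norm{\sig}_2$. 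Discarding the nonnegative data-fidelity term from the optimality inequality and using $(\sig^*)^\adjoint g(\Lap)\sig^* = (\tilde{\sig}^*)^\adjoint\widetilde{\Lap}\tilde{\sig}^* \geq g(\eig_{\nbClass+1})\norm{\vec{\beta}^*}_2^2$ (the spectral lower bound of $g(\Lap)$ on $\spann(\Fou_\nbClass)^\perp$) yields \eqref{eq:bound_beta} after taking square roots and using $\sqrt{a^2+b^2}\leq a+b$.

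For \eqref{eq:bound_alpha} I would apply the RIP to $\vec{\alpha}^* - \sig \in \spann(\Fou_\nbClass)$, reducing matters to bounding $\norm{\ma{P}\Meas(\vec{\alpha}^*-\sig)}_2$. Writing $\vec{\alpha}^* = \sig^* - \vec{\beta}^*$ peels off $\norm{\ma{P}\Meas\vec{\beta}^*}_2 \leq M_{\rm max}\norm{\vec{\beta}^*}_2$, controlled by the bound just established. For the remaining $\ma{P}\Meas(\sig^*-\sig)$ I split along $\widetilde{\Average}^\adjoint\widetilde{\Average}$: the crucial observation is that $\ma{P}\Meas\sig^*$ is piecewise constant over the selected groups (because $\sig^*$ is piecewise constant and $\ma{P}$ scales each block by a constant), hence is fixed by $\widetilde{\Average}^\adjoint\widetilde{\Average}$; the same holds for $\ma{P}\Meas\vec{w}$, so the within-group component collapses to $\norm{(\ma{I}-\widetilde{\Average}^\adjoint\widetilde{\Average})\ma{P}\Meas(\sig-\vec{w})}_2 \leq M_{\rm max}\epsilon\norm{\sig}_2$. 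The piecewise-constant component equals $\norm{\widetilde{\Average}\ma{P}\Meas(\sig^*-\sig)}_2$, which I re-express through the commutation identities as $\norm{\widetilde{\ma{P}}(\widetilde{\Meas}\tilde{\sig}^*-\tilde{\meas}) + \widetilde{\ma{P}}\tilde{\err}}_2$ and bound by $2\norm{\widetilde{\ma{P}}\tilde{\err}}_2 + \sqrt{\reg}\,(\vec{w}^\adjoint g(\Lap)\vec{w})^{1/2}$ via the data-fidelity consequence of optimality. Collecting these pieces, substituting the $\vec{\beta}^*$ bound, and multiplying by $1/\sqrt{\nbGroupsRed(1-\delta)}$ produces \eqref{eq:bound_alpha}.

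The main obstacle is the bookkeeping of these operator identities: \eqref{eq:fast_decoder} is posed in $\Rbb^\nbGroups$ whereas the RIP and the model $\PSignals$ live in $\Rbb^\nbVert$, and the whole argument hinges on lifting $\tilde{\sig}^*$ to the exactly piecewise-constant $\sig^* = \Average^\adjoint\tilde{\sig}^*$ and on verifying $\widetilde{\Average}\Meas = \widetilde{\Meas}\Average$ and $\widetilde{\ma{P}}\widetilde{\Average} = \widetilde{\Average}\ma{P}$, so that the reduced residual $\widetilde{\ma{P}}(\widetilde{\Meas}\tilde{\sig}^*-\tilde{\meas})$ can be re-expressed through the large sampling operator $\ma{P}\Meas$ acting on $\sig^*-\sig$. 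Once these identities and the projection property of $\widetilde{\Average}^\adjoint\widetilde{\Average}$ are in hand, isolating the within-group deviation as the only place (beyond the regularizer) where the model error $\epsilon$ enters is the delicate step; the rest is triangle inequalities and the spectral bounds on $g(\Lap)$. I expect the stated constants, such as the factor $2M_{\rm max}$ multiplying $\epsilon$, to arise from slightly lossy triangle-inequality splits, and a sharper accounting may even improve them.
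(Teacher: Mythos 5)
Your proposal is correct and follows the same skeleton as the paper's proof: compare $\tilde{\sig}^*$ against $\Average\sig$ in the objective, bound the regulariser terms through the spectral decomposition of $g(\Lap)$ to obtain \eqref{eq:bound_beta}, and transfer the reduced data-fidelity residual to the ambient space through the identities $\widetilde{\ma{P}}\widetilde{\Meas}\Average = \widetilde{\Average}\ma{P}\Meas$ and $\widetilde{\Average}\widetilde{\Average}^\adjoint = \ma{I}$ before invoking the RIP on $\vec{\alpha}^* - \sig$. All the operator facts you rely on ($\widetilde{\Average}\Meas = \widetilde{\Meas}\Average$, $\widetilde{\ma{P}}\widetilde{\Average} = \widetilde{\Average}\ma{P}$, and the fact that $\ma{P}\Meas$ maps vectors piecewise constant over the groups to vectors fixed by $\widetilde{\Average}^\adjoint\widetilde{\Average}$) do hold and together reproduce the paper's key identities \eqref{eq:commute_sample_average} and \eqref{eq:the_most_important_equality}. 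The one genuine difference is how the model error enters the data-fidelity chain: the paper passes twice between $\ma{P}\Meas\sig$ and $\ma{P}\Meas\Average^\adjoint\Average\sig$ via two separate triangle inequalities, paying $\epsilon M_{\rm max}\norm{\sig}_2$ each time (hence the $2\epsilon M_{\rm max}$ in \eqref{eq:second_bound_M}), whereas your orthogonal split of $\ma{P}\Meas(\Average^\adjoint\tilde{\sig}^* - \sig)$ along the projector $\widetilde{\Average}^\adjoint\widetilde{\Average}$ isolates the single term $(\ma{I}-\widetilde{\Average}^\adjoint\widetilde{\Average})\ma{P}\Meas(\sig - \Average^\adjoint\Average\sig)$ and pays that price once. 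Carried to completion, your argument yields \eqref{eq:bound_alpha} with $M_{\rm max}$ in place of $2M_{\rm max}$ inside the $\epsilon$-bracket, a strictly sharper constant that of course implies the stated bound and confirms your closing guess that the factor $2$ is an artefact of lossy splitting.
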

\begin{proof}
See Appendix~\ref{app:proof_decoder}.
\end{proof}

The vector $\vec{\alpha}^*$ is the orthogonal projection of ${\Average}^\adjoint \tilde{\sig}^*$ onto $\spann(\Fou_\nbClass)$. The vector $\vec{\beta}^*$ is the projection of ${\Average}^\adjoint \tilde{\sig}^*$ onto the orthogonal complement of $\spann(\Fou_\nbClass)$. There are several remarks to make about the above theorem:
\begin{itemize}
\item Theorem~\ref{th:fast_decoder} shows that the result obtained via \eqref{eq:fast_decoder} is similar to the one we would have obtained by solving \eqref{eq:practical_decoder} -- see \cite{puy15b} for the error bounds -- with additional errors controlled by $\epsilon$. We recall that $\epsilon$ characterises how far $\sig$ is from a piecewise constant signal. As expected, the smaller $\epsilon$, the better the reconstruction.
\item The reconstruction quality improves when $g(\eig_\nbClass)$ and the ratio ${g(\eig_\nbClass)}/{g(\eig_{\nbClass+1})}$ go to $0$, and when ${g(\eig_\nbVert)}/{g(\eig_{\nbClass+1})}$ tends to $1$. We recall that we have $g(\eig_\nbVert) \geq g(\eig_{\nbClass+1}) > g(\eig_{\nbClass})$ by assumption.
\item The effect of the noise $\tilde{\err}$ decreases when $g(\eig_{\nbClass+1})$ increases, and, obviously, $\gamma$ should be adapted to the signal-to-noise ratio. 
\end{itemize}

Let us mention that the idea of ``coarsening'' a graph and the signals that live on it using a partition into different groups of nodes can also be found in~\cite{tremblay_arxiv2015}, where a multiresolution analysis method of graph signals is proposed. The coarsening method is however different than the one used here.

%
\section{Optimal sampling estimation}
\label{sec:estimate_sampling}

In this section, we come back to the sampling process of Section~\ref{sec:rip} and leave the reconstruction problem. We explain how to estimate the sampling distributions $\prob^*$ and $\vec{q}^*$ of Section~\ref{sec:rip} without computing the truncated Fourier matrix $\Fou_\nbClass$, as this computation is intractable for large graphs. The methods below only involve matrix-vector multiplications with the sparse Laplacian matrix $\Lap$ and are thus computationally tractable even for large $\nbVert$. 

\subsection{Estimation of $\prob^*$}

The distribution $\prob^*$, defined in (\ref{eq:pstar}), that minimises the coherence $\cumCoh_{\prob}$  is entirely defined by the values of $\|\SelectGroup^{(\ell)}\Fou_k\|_2$ for $\ell = 1, \ldots, \nbGroups$, which are thus the quantities we need to evaluate.

We recall that, in graph signal processing, a filter is represented by a function $h: \Rbb \rightarrow \Rbb$, and that the signal $\sig$ filtered by $h$ is
\begin{align}
\sig_{h} := \Fou \, \diag(\hat{\vec{h}}) \, \Fou^\adjoint \sig \in \Rbb^\nbVert,
\end{align}
where $\hat{\vec{h}} = (h(\eig_1), \ldots, h(\eig_\nbVert))^\adjoint \in \Rbb^{\nbVert}$. To filter the signal $\sig$ without actually computing the graph Fourier transform of $\sig$, we can approximate the function $h$ by a polynomial 
\begin{align}
r(t) := \sum_{i=0}^d \alpha_i \, t^i \approx h(t)
\end{align}
of degree $d$, and compute $\sig_{r}$ instead of $\sig_{h}$. The filtered signal $\sig_{r}$ is computed rapidly using the formula
\begin{align}
\sig_{r} = \sum_{i=0}^d \alpha_i \; \Fou \, \diag(\eig_1^i, \ldots, \eig_\nbVert^i) \, \Fou^\adjoint \sig = \sum_{i=0}^d \alpha_i \; \Lap^i \sig,
\end{align}
that involves only matrix-vector multiplications with the sparse Laplacian matrix $\Lap$. We let the reader refer to \cite{hammond11} for more information on this fast filtering technique. For any polynomial function $r$ of the form above and any matrix $\ma{A} \in \Rbb^{\nbVert \times \nbVert}$, we define
\begin{align}
r(\ma{A}) := \sum_{i=0}^d \alpha_i \; \ma{A}^i .
\end{align}
Note that $r(\Lap) = \Fou \, r(\Eig) \, \Fou^\adjoint$.

Let $i_{\eig_k} : \Rbb \rightarrow \Rbb$ be the ideal low-pass filter at cutoff frequency $\eig_k$, \ie, the filter that satisfies
\begin{align} 
i_{\eig_k}(t)
=
\left\{
\begin{array}{cc}
1 & \text{if } t \leq \eig_k,\\
0 & \text{otherwise}.
\end{array}
\right.
\end{align}
We have $\Fou_k \Fou_k^\adjoint = i_{\eig_k}\left(\Lap\right)$. Then, we notice that
\begin{align} 
\norm{\SelectGroup^{(\ell)}\Fou_k}_2^2
=
\norm{\SelectGroup^{(\ell)}\Fou_k \Fou_k^\adjoint {\SelectGroup^{(\ell)}}^\adjoint}_2
=
\norm{\SelectGroup^{(\ell)} \; i_{\eig_k}\left(\Lap\right) \; {\SelectGroup^{(\ell)}}^\adjoint}_2.
\end{align}
We recall that $\SelectGroup^{(\ell)}$ is the matrix that restricts the signal to the nodes belonging to $\Group_{\ell}$. The matrix appearing on the right hand side of the last equality corresponds to the linear operator that 1) extends a vector on the complete graph by inserting $0$ in all groups $\ell' \neq \ell$, 2) low-pass filters the extended signal, 3) restricts the result to the group $\Group_{\ell}$. This process can be approximated by replacing the ideal low-pass filter $i_{\eig_k}$ with a polynomial approximation $\tilde{i}_{\eig_k}$ of $i_{\eig_k}$ and 
\begin{align}
\norm{\SelectGroup^{(\ell)} \; i_{\eig_k}\left(\Lap\right) \; {\SelectGroup^{(\ell)}}^\adjoint}_2
\approx
\norm{\SelectGroup^{(\ell)} \; \tilde{i}_{\eig_k}\left(\Lap\right) \; {\SelectGroup^{(\ell)}}^\adjoint}_2.
\end{align}
To estimate $\prob^*$, we estimate the spectral norm of the matrix appearing on the right hand side for which matrix-vector multiplication is fast. The quality of the approximation depends obviously on the choice of the polynomial $\tilde{i}_{\eig_k}$.

Estimating $\| \SelectGroup^{(\ell)} \; \tilde{i}_{\eig_k}\left(\Lap\right) \; {\SelectGroup^{(\ell)}}^\adjoint \|_2$ amounts to computing the largest eigenvalue of $\SelectGroup^{(\ell)} \; \tilde{i}_{\eig_k}\left(\Lap\right) \; {\SelectGroup^{(\ell)}}^\adjoint$ which can be done, \eg, by using the power method. This method requires matrix-vector multiplication only with $\SelectGroup^{(\ell)}$ and $\tilde{i}_{\eig_k}\left(\Lap\right)$ and is thus fast.
Finally, the approximation $\bar{\vec{p}} \in \Rbb^{\nbGroups}$ of $\prob^*$ satisfies
\begin{align} 
\bar{\prob}_{\ell} := \frac{\lambda_{\rm max}(\SelectGroup^{(\ell)} \; \tilde{i}_{\eig_k}\left(\Lap\right) \; {\SelectGroup^{(\ell)}}^\adjoint)}{\sum_{\ell'=1}^\nbGroups \lambda_{\rm max}(\SelectGroup^{(\ell')} \; \tilde{i}_{\eig_k}\left(\Lap\right) \; {\SelectGroup^{(\ell')}}^\adjoint)}.
\end{align}

Note that an estimation of $\eig_\nbClass$ is required beforehand to define the filter $\tilde{i}_{\eig_k}$. We estimate this value using the dichotomy method presented in~\cite{puy15b}.

\subsection{Estimation of $\vec{q}^*$}

Computing $\bar{\prob}$ requires the estimation of $\nbGroups$ eigenvalues. Even though these estimations can be done in parallel, this process might still be too slow for certain applications. As explained before, when $\nbClass$ is small, we can use the sampling distribution $\vec{q}^*$ in (\ref{eq:qstar}) that minimises $\bar{\cumCoh}_{\vec{p}}$. This distribution is faster to compute than $\bar{\prob}$.

We start by noticing that we have
\begin{align}
\norm{\SelectGroup^{(\ell)}\Fou_k}_F^2 = \sum_{i \in \Group_{\ell}} \norm{\Fou_\nbClass^\adjoint \vec{\delta}_i}_2^2
\end{align}
for each group $\ell = 1, \ldots, \nbGroups$. The vector $\vec{\delta}_i \in \Rbb^\nbVert$ is the unit vector that is null on all nodes except at node $i$. Hence, we only need an estimation of $\|\Fou_\nbClass^\adjoint \vec{\delta}_i\|_2^2$, $i = 1, \ldots, \nbVert$, to estimate $\vec{q}^*$. An algorithm was already proposed in~\cite{puy15b} to estimate these values. We let the reader refer to Algorithm $1$ in~\cite{puy15b} for the details of the method. We just recall that this estimation is obtained by filtering $O(\log(\nbVert))$ random signals with a polynomial approximation of $i_{\eig_\nbClass}$. Finally, our estimation $\bar{\vec{q}} \in \Rbb^{\nbGroups}$ of $\vec{q}^*$ has entries
\begin{align} 
\bar{\vec{q}}_{\ell} := \frac{\sum_{i \in \Group_{\ell}} \norm{\Fou_\nbClass^\adjoint \vec{\delta}_i}_F^2}{\sum_{\ell=1}^\nbGroups \sum_{i \in \Group_{\ell}} \norm{\Fou_\nbClass^\adjoint \vec{\delta}_i}_F^2},
\end{align}
where each $\|\Fou_\nbClass^\adjoint \vec{\delta}_i\|_2^2$ are estimated by Algorithm $1$ in~\cite{puy15b}.

This estimation is faster than for $\bar{\vec{\prob}}$ because the power method is an iterative method that involves one filtering at each iteration. Furthermore, the power method is run independently for each group $\Group_{\ell}$. In total, (much) more than $\Group_{\ell}$ filterings are thus required. On the contrary, for $\bar{\vec{q}}$, we just need to filter $O(\log(\nbVert))$ signals to obtain the estimation. In most situations, we already have $O(\log(\nbVert)) \leq \Group_{\ell}$ and computing $\bar{\vec{q}}$ is thus faster than computing $\bar{\prob}$.

%
\section{Experiments}
\label{sec:experiments}

In this last section, we first test our sampling strategies on two different graphs to illustrate the effect of the different sampling distributions on the minimum number of samples required to ensure that the RIP holds. Then, we apply our sampling strategy for user-guided object segmentation. In this application, we also test the different recovery techniques proposed in Section~\ref{sec:fast_decoder}.

\subsection{Sampling distributions}

\begin{figure*}
\centering
\includegraphics[width=.24\linewidth]{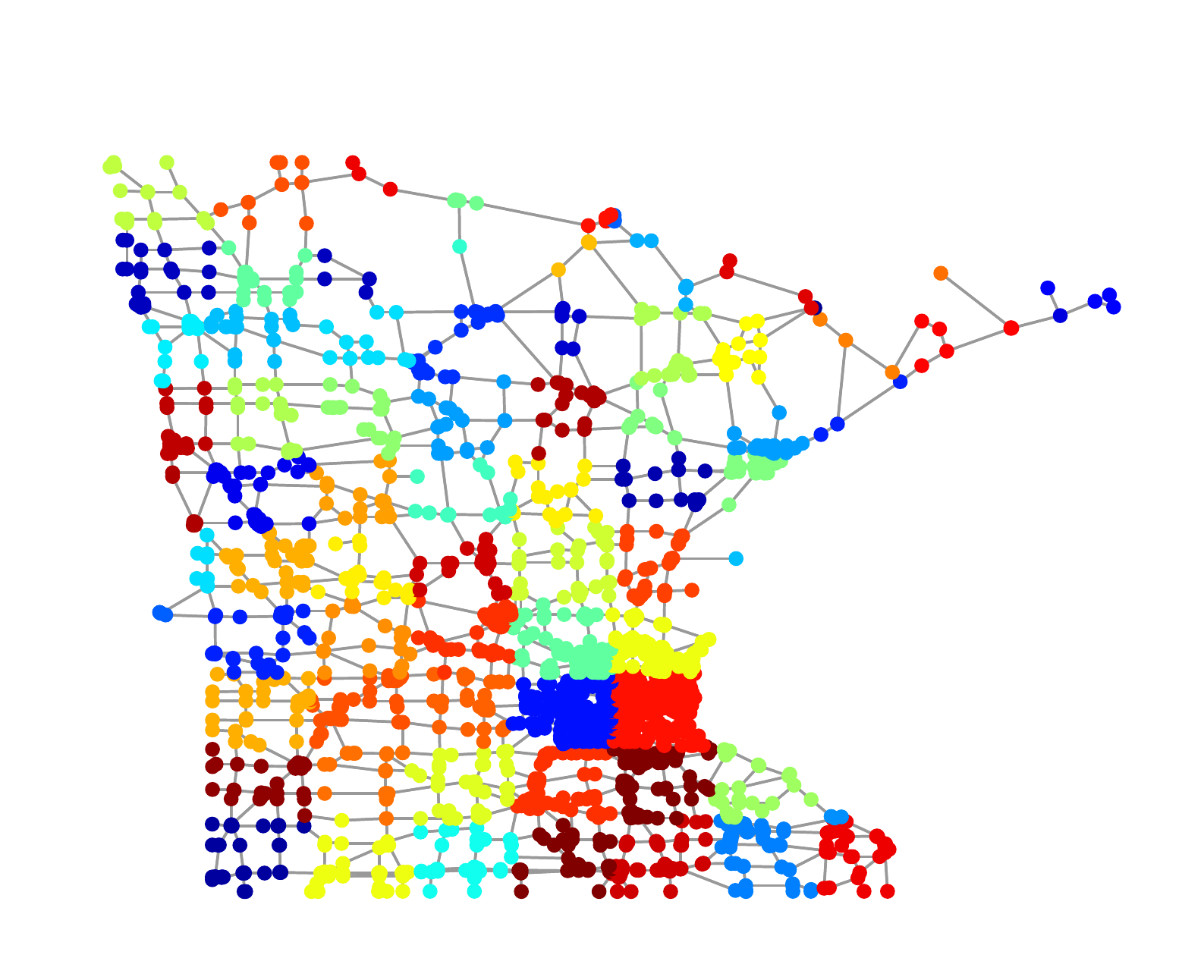}
\includegraphics[width=.24\linewidth]{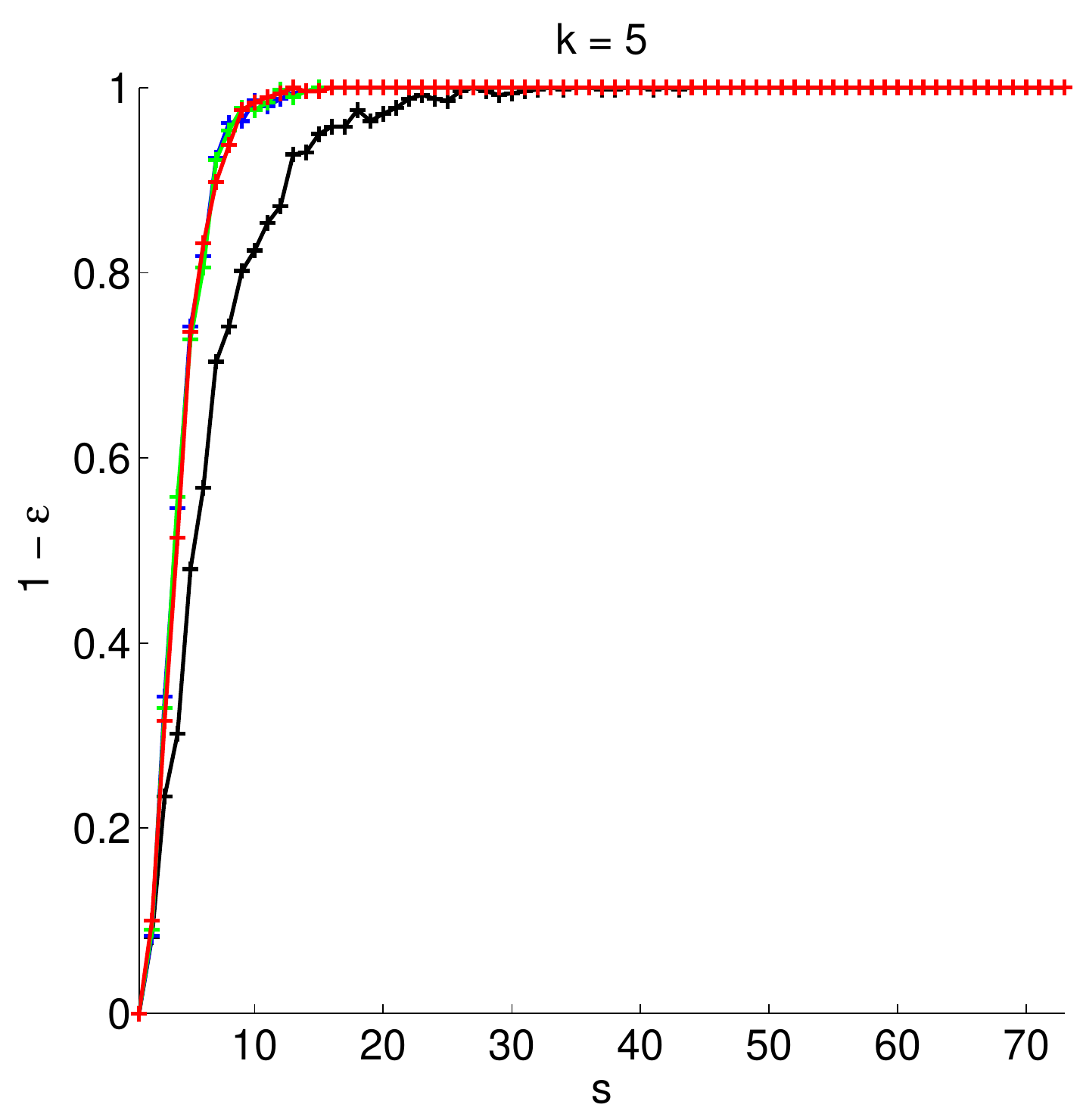}
\includegraphics[width=.24\linewidth]{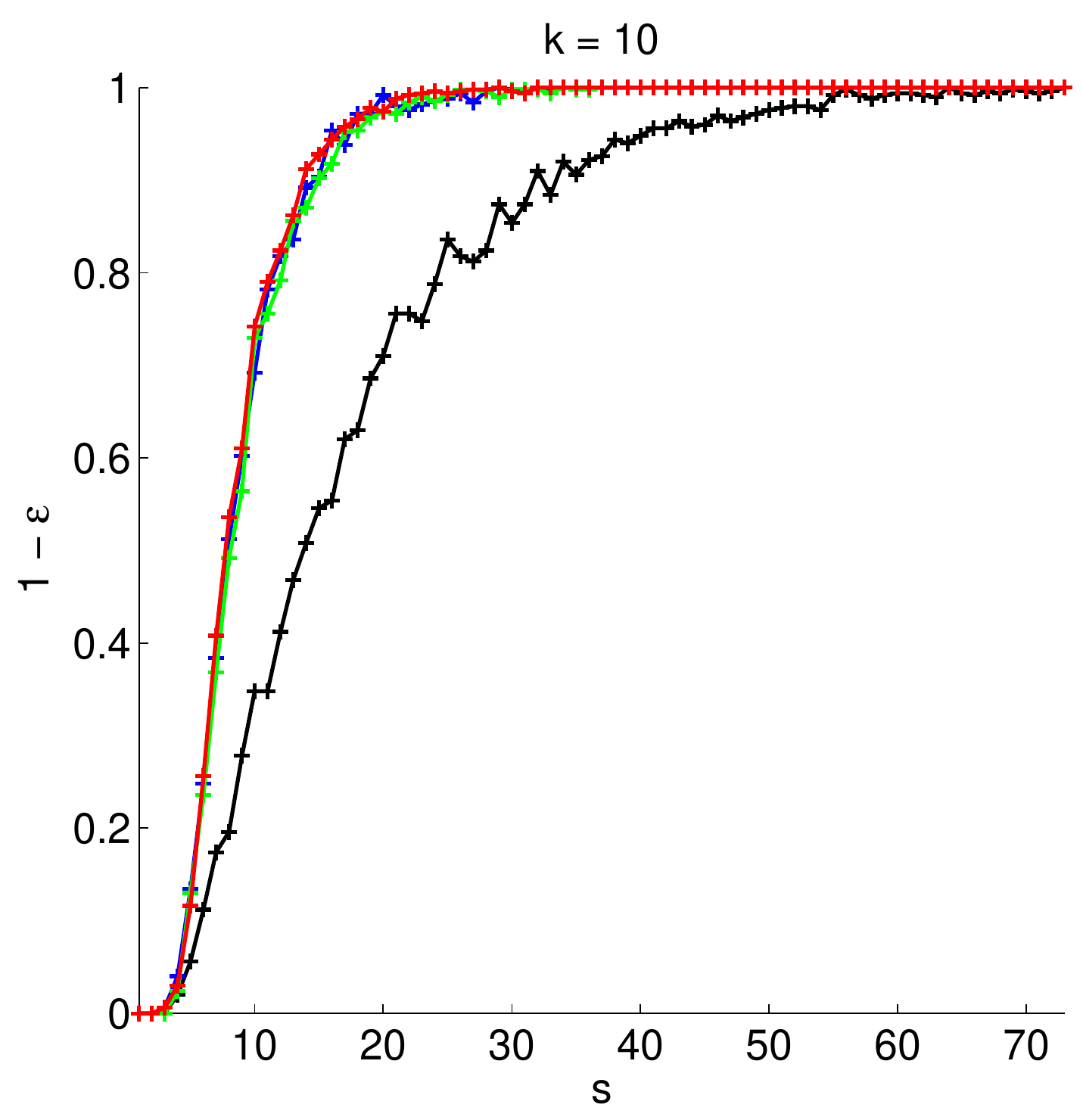}
\includegraphics[width=.24\linewidth]{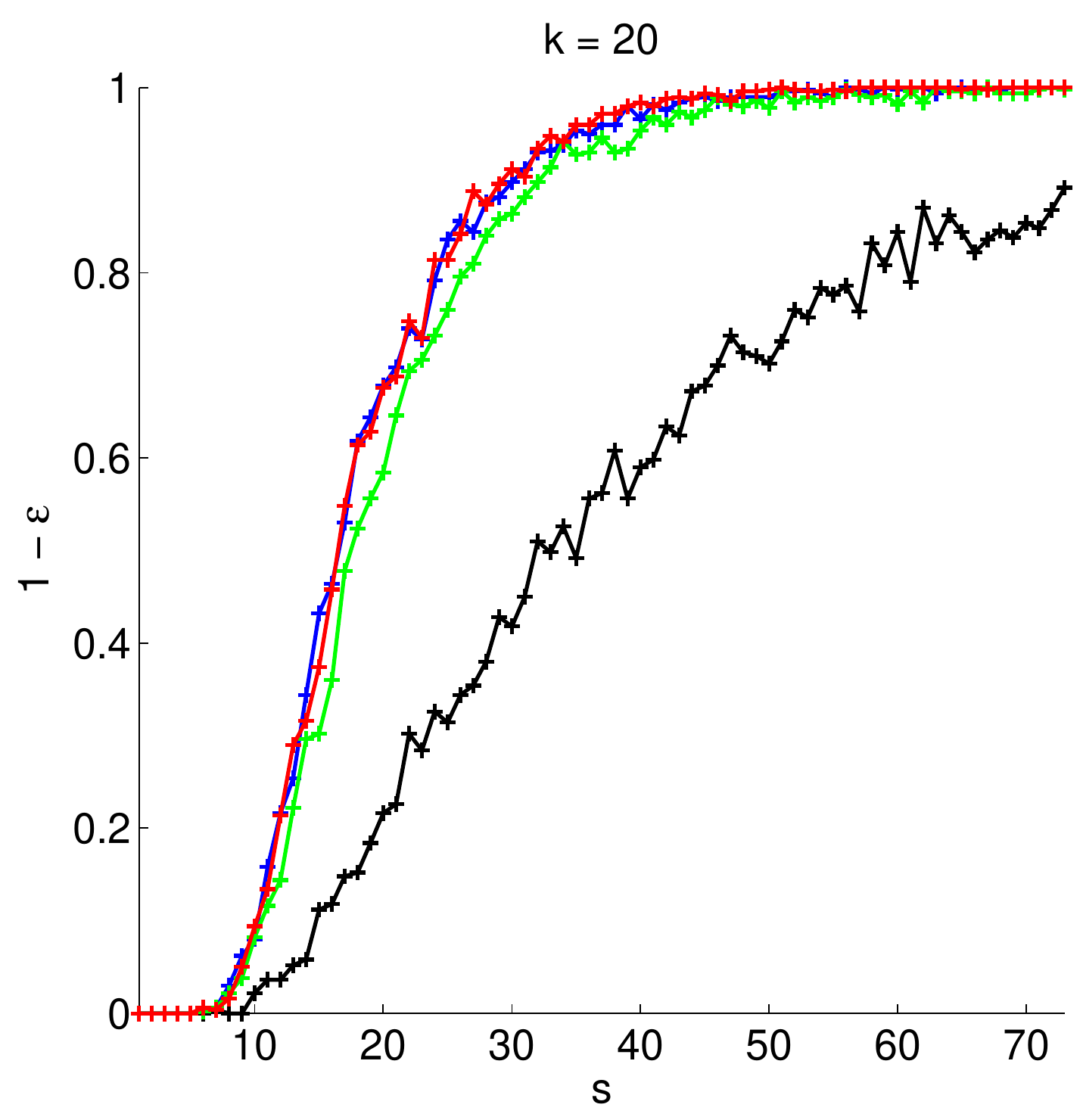}
\\
\includegraphics[width=.24\linewidth]{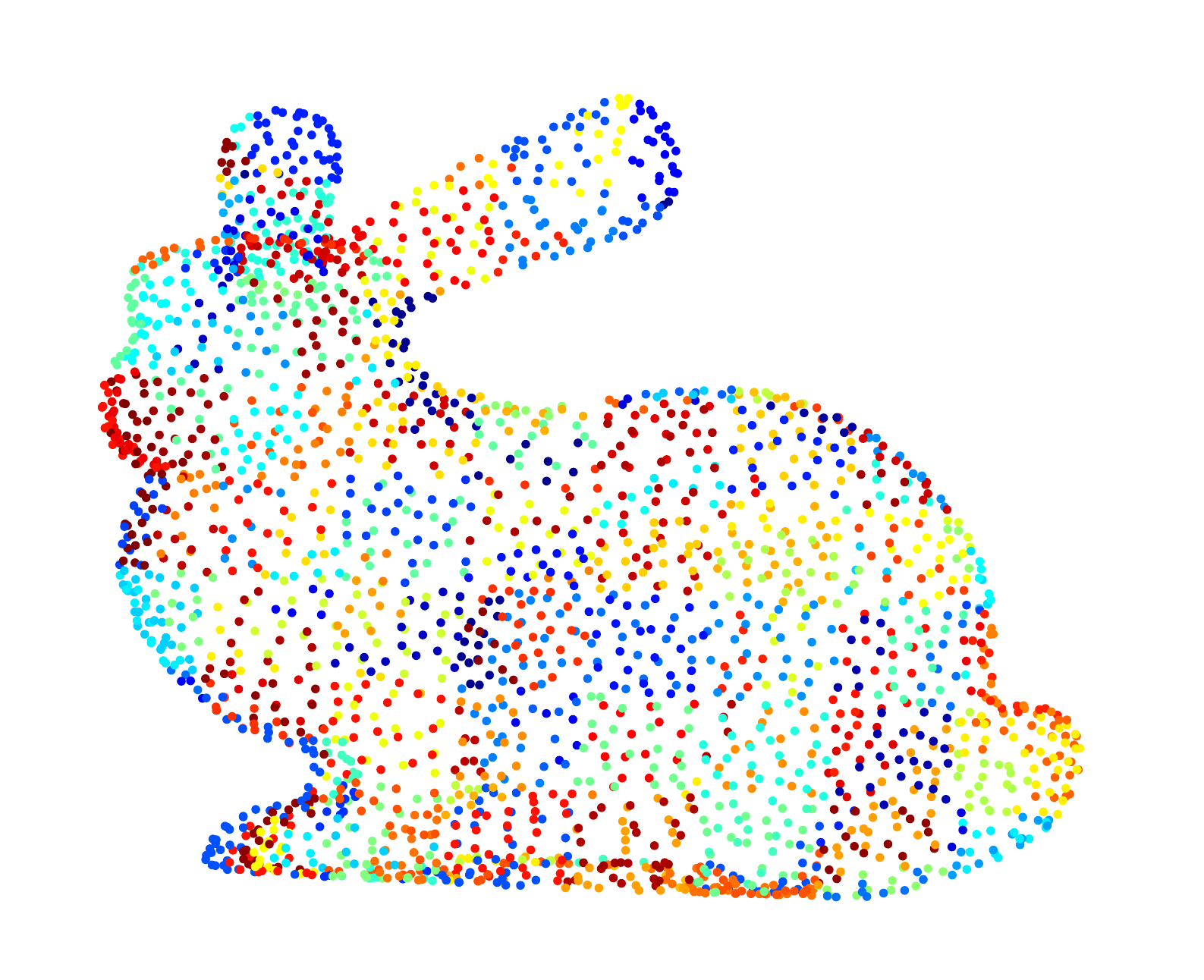}
\includegraphics[width=.24\linewidth]{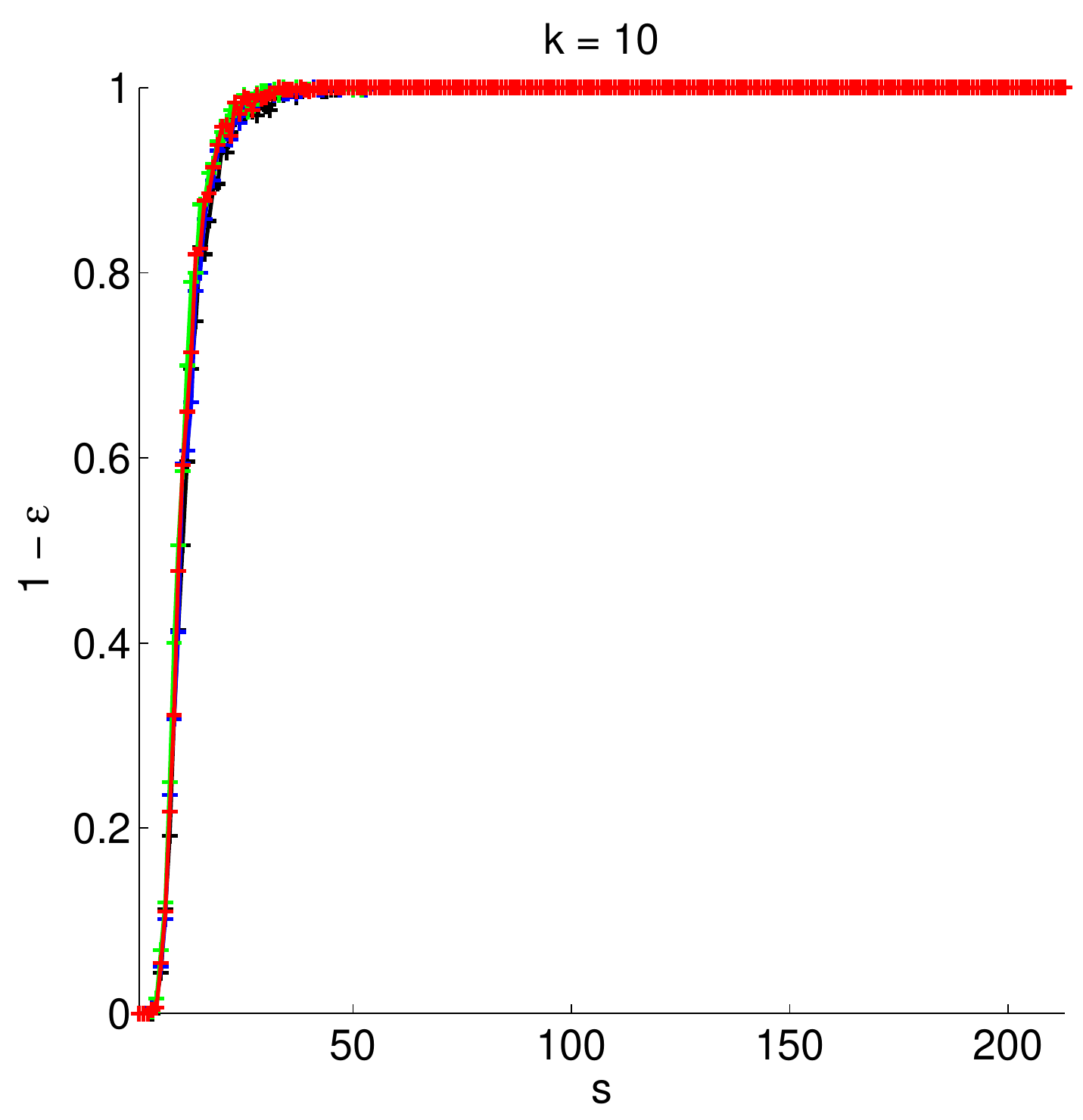}
\includegraphics[width=.24\linewidth]{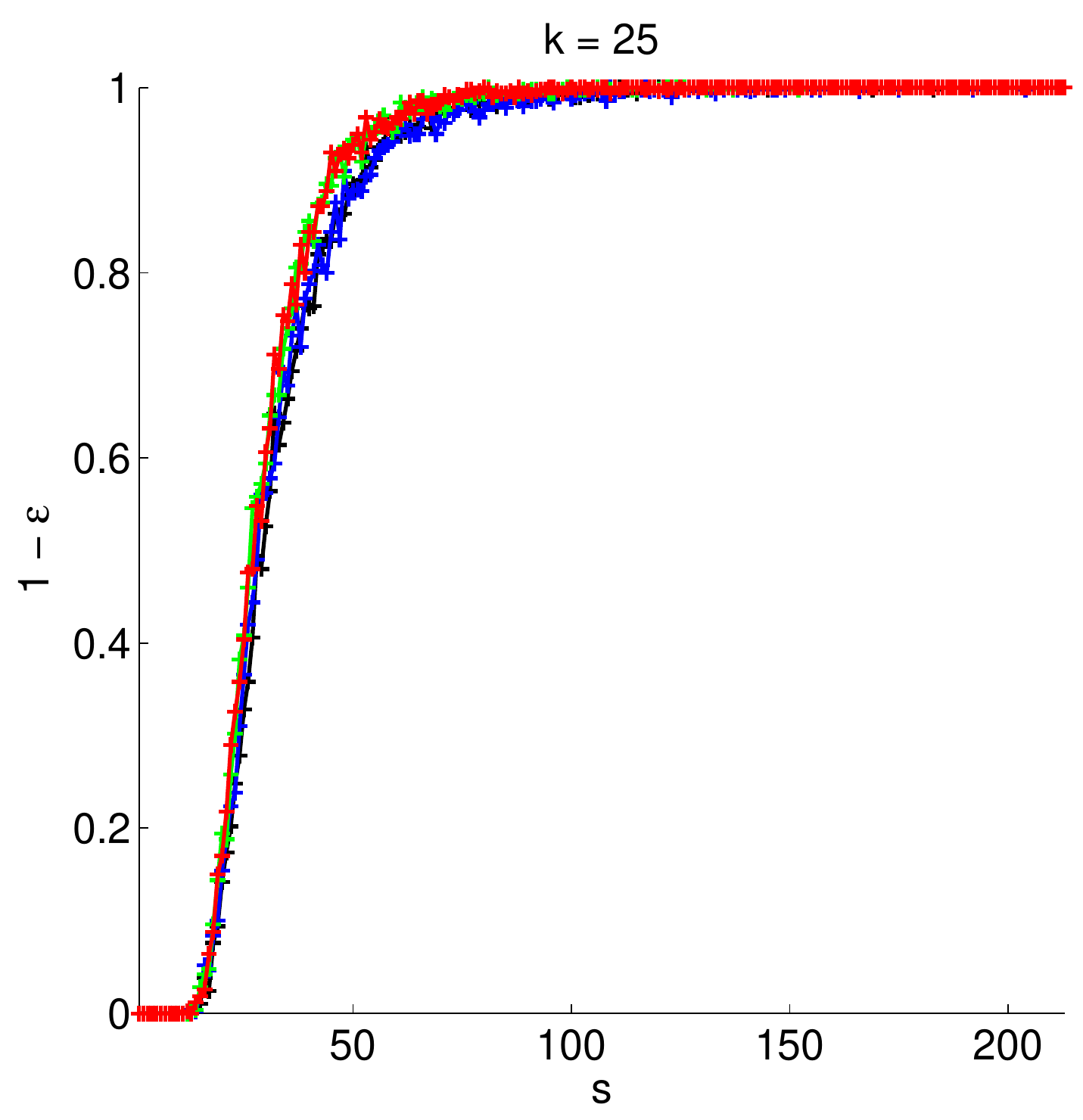}
\includegraphics[width=.24\linewidth]{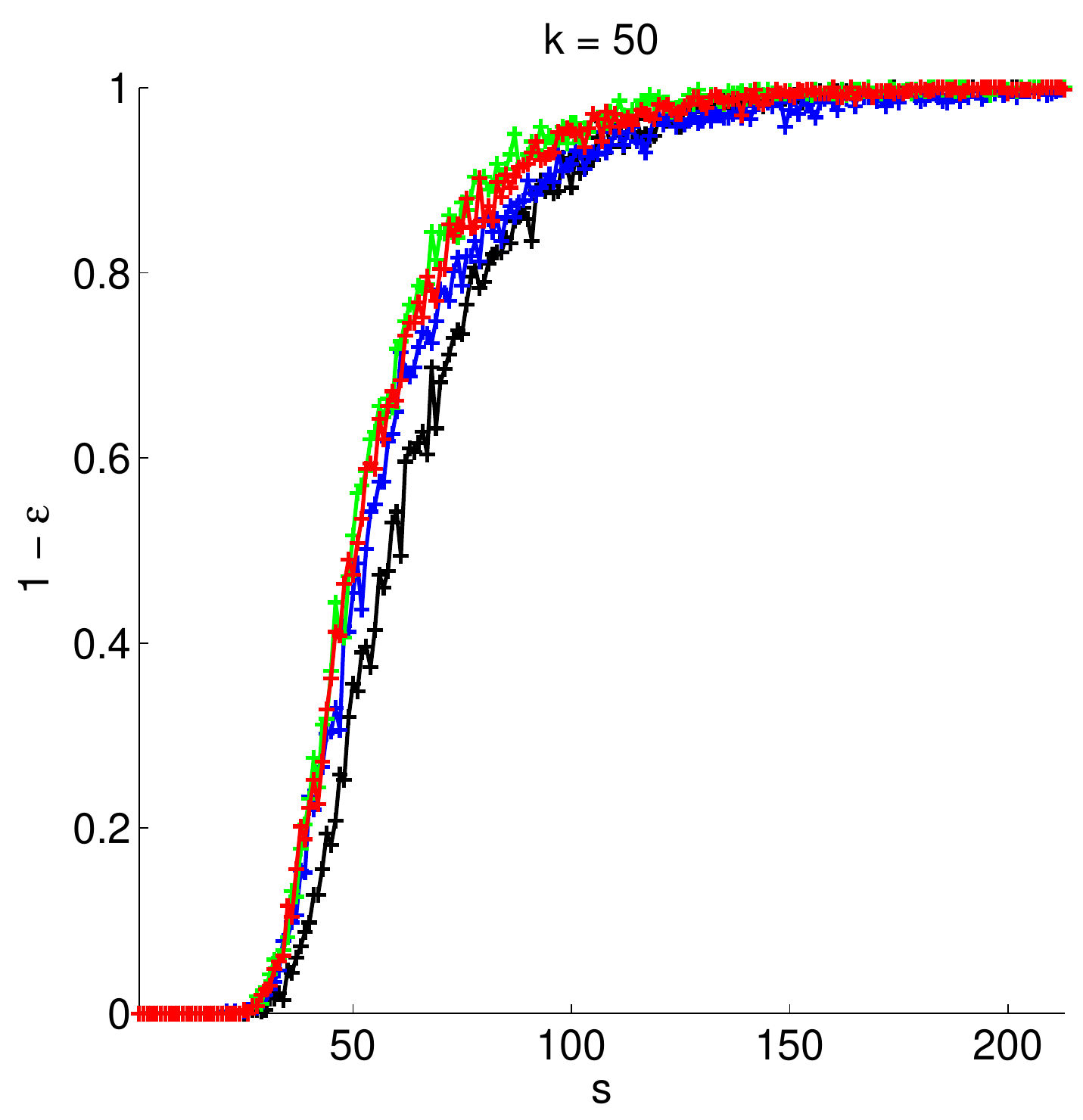}
\caption{\label{fig:graphs} First column from the left: Minnesota graph (top); bunny graph (bottom). The groups $\Group_1, \ldots, \Group_\nbGroups$ are indicated by different colors. Other columns: probability that $\underline{\delta}_{k}$ is less that $0.995$ as a function of $\nbGroupsRed$. The curves in black, red, blue and green are obtained using the sampling distributions $\vec{u}$, $\vec{p}^*$, $\bar{\vec{q}}$ and $\bar{\prob}$, respectively. The top row shows the result for the Minnesota graph. The bottom row shows the result for the bunny graph. The bandlimit $k$ is indicated on top of each curve.}
\end{figure*}

We perform experiments on two different graphs: the Minnesota graph of size $\nbVert = 2642$ and the bunny graph of size $\nbVert = 2503$. Both graphs are presented in Fig.~\ref{fig:graphs} and are available in the GSP toolbox \cite{perraudin14}. For each graph, we group the nodes using the spatial coordinates associated to each node. For the Minnesota graph, we divide the space into $100$ cells and group the nodes that fall in the same cell. After removing empty cells, we obtain the $\nbGroups = 73$ groups represented in Fig.~\ref{fig:graphs}. For the bunny graph, we obtain $\nbGroups = 213$ groups with a similar procedure (see Fig.~\ref{fig:graphs}).

For each graph, we compute the combinatorial Laplacian and $\Fou_{k}$ for different values of $k$. Then, we compute the lower RIP constant, \ie, the constant $\underline{\delta}_k>0$ that satisfies
\begin{align*}
\underline{\delta}_{k} 
= 1 - \inv{\nbGroupsRed} \;\;
\inf_{\substack{\sig \in \spann(\Fou_{\nbClass}) \\ \norm{x}_2 = 1}}  
\;\norm{\ma{P} \Meas \; \sig}_2^2.
\end{align*}
This constant is the smallest value that $\delta$ can take such that the left-hand side of the RIP \eqref{eq:RIP} holds. Remark that
\begin{align}
\underline{\delta}_{k} = 1 - \inv{\nbGroupsRed} \; \lambda_{\rm min} \left( \Fou_{k}^\adjoint \Meas^\adjoint \ma{P}^2 \Meas \Fou_{k}\right).
\end{align}
We estimate $\underline{\delta}_{k}$ for $500$ independent draws of the set $\Omega$, which defines the matrices $\ma{P}\Meas$, and different numbers of selected groups $\nbGroupsRed$. All samplings are done in the conditions of Theorem~\ref{th:rip} using the sampling distributions $\vec{u}, \vec{p}^*, \bar{\prob}$ and $\bar{\vec{q}}$. The vector $\vec{u}$ denotes the uniform distribution over $\{1, \ldots, \nbGroups\}$. When conducting this experiment with the estimated distributions $\bar{\prob}$ and $\bar{\vec{q}}$, we re-estimate these distributions at each of the $500$ trials. These distributions are estimated using Jackson-Chebychev polynomials of order $50$ \cite{napoli13}. For the Minnesota graph, we consider the bandlimits $\nbClass = 5, 10, 20$. For the bunny graph, we consider the bandlimits $\nbClass = 10, 25, 50$.

We present the probability that $\underline{\delta}_{k}$ is less than $0.995$, estimated over the $500$ draws of $\Omega$, as a function of $\nbGroupsRed$ in Fig.~\ref{fig:graphs}. For the Minnesota graph, the performance is better when using the optimal distribution $\prob^*$ than when using the uniform distribution $\vec{u}$ for all $k$, which is in line with the theory. The estimated $\bar{\prob}$ and $\bar{\vec{q}}$ yield performance equivalent to $\prob^*$. This confirms that we can achieve similar sampling performance without having to compute the Fourier matrix $\Fou_\nbClass$, which, we recall, is intractable for large graphs. This also shows that $\bar{\vec{q}}$ can lead to nearly optimal results. For the bunny graph, all sampling distributions yield essentially the same results at all bandlimits. We notice a slight improvement at $k=50$ when using $\bar{\prob}$, $\bar{\vec{q}}$ or $\prob^*$ instead of $\vec{u}$.

For illustration, we present in Fig.~\ref{fig:sampling_distributions} examples of computed sampling distributions $\vec{p}^*$, $\bar{\prob}$ and $\bar{\vec{q}}$. All sampling distributions exhibit similar structures, which explains why they all yield about the same performance in our experiments.

\begin{figure*}
\centering
\begin{minipage}{.32\linewidth} \centering \scriptsize $\prob^*$ \end{minipage}
\begin{minipage}{.32\linewidth} \centering \scriptsize $\bar{\prob}$ \end{minipage}
\begin{minipage}{.32\linewidth} \centering \scriptsize $\bar{\vec{q}}$ \end{minipage}
\\
\includegraphics[height=40.5mm]{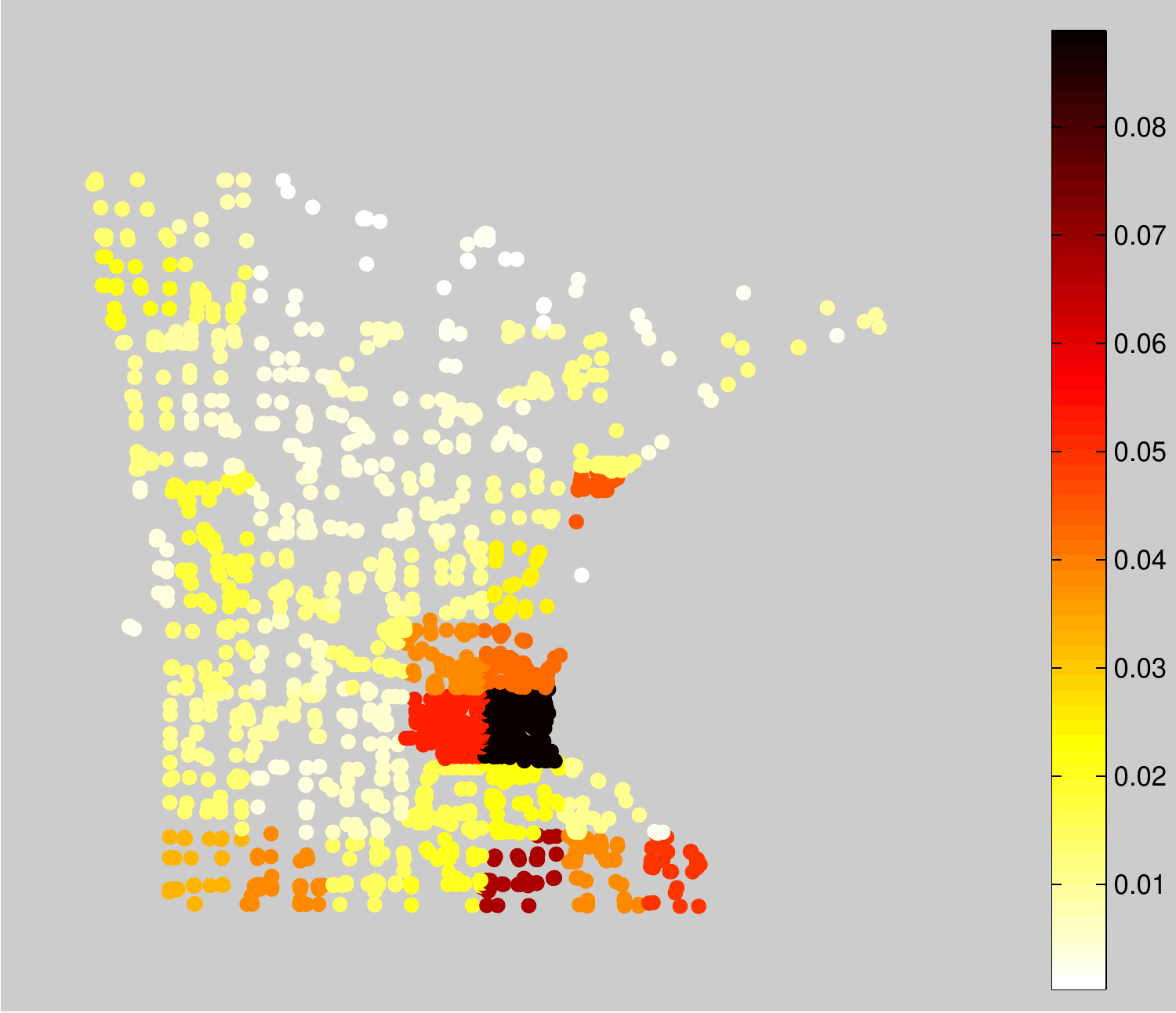}
\includegraphics[height=40.5mm]{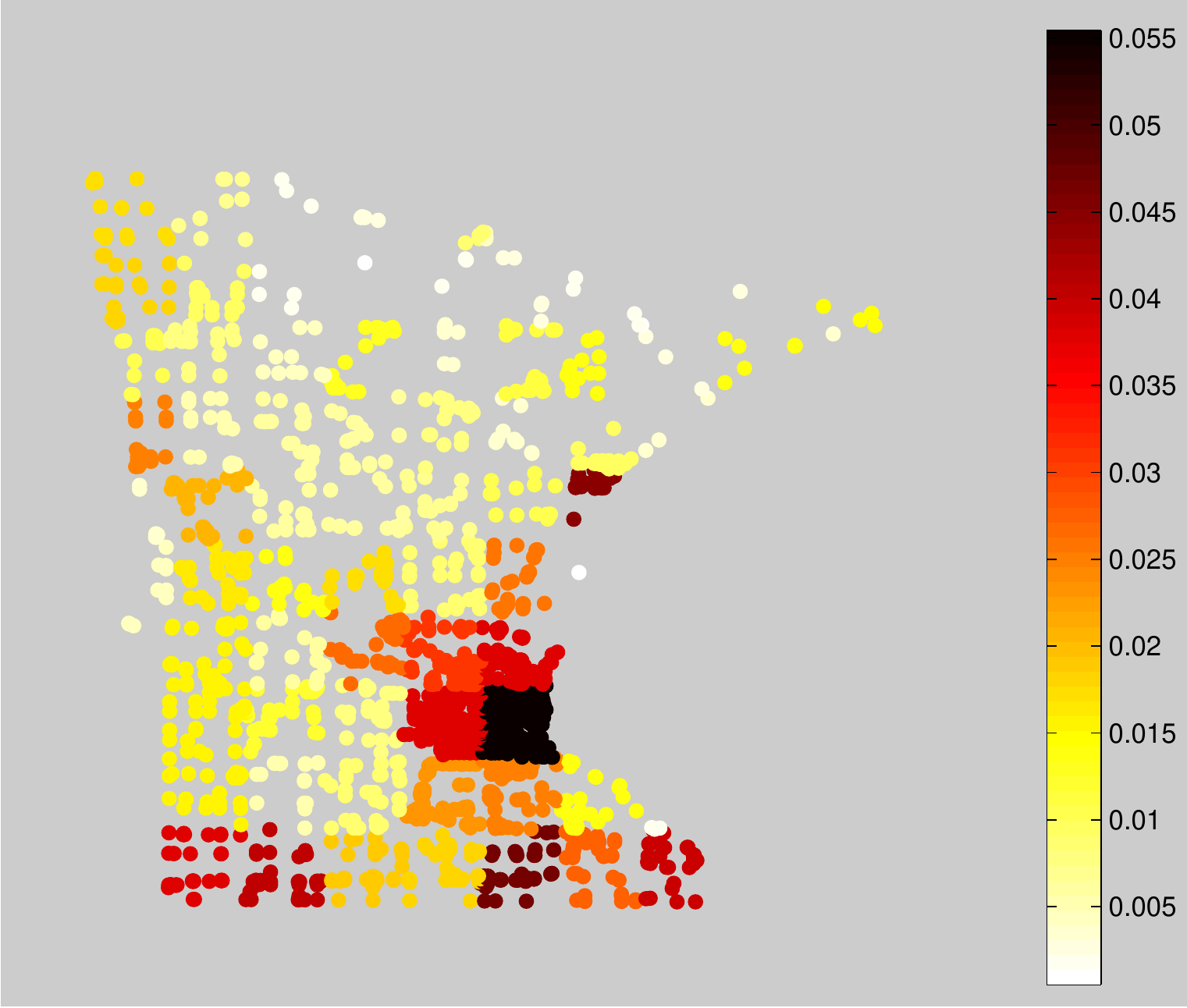}
\includegraphics[height=40.5mm]{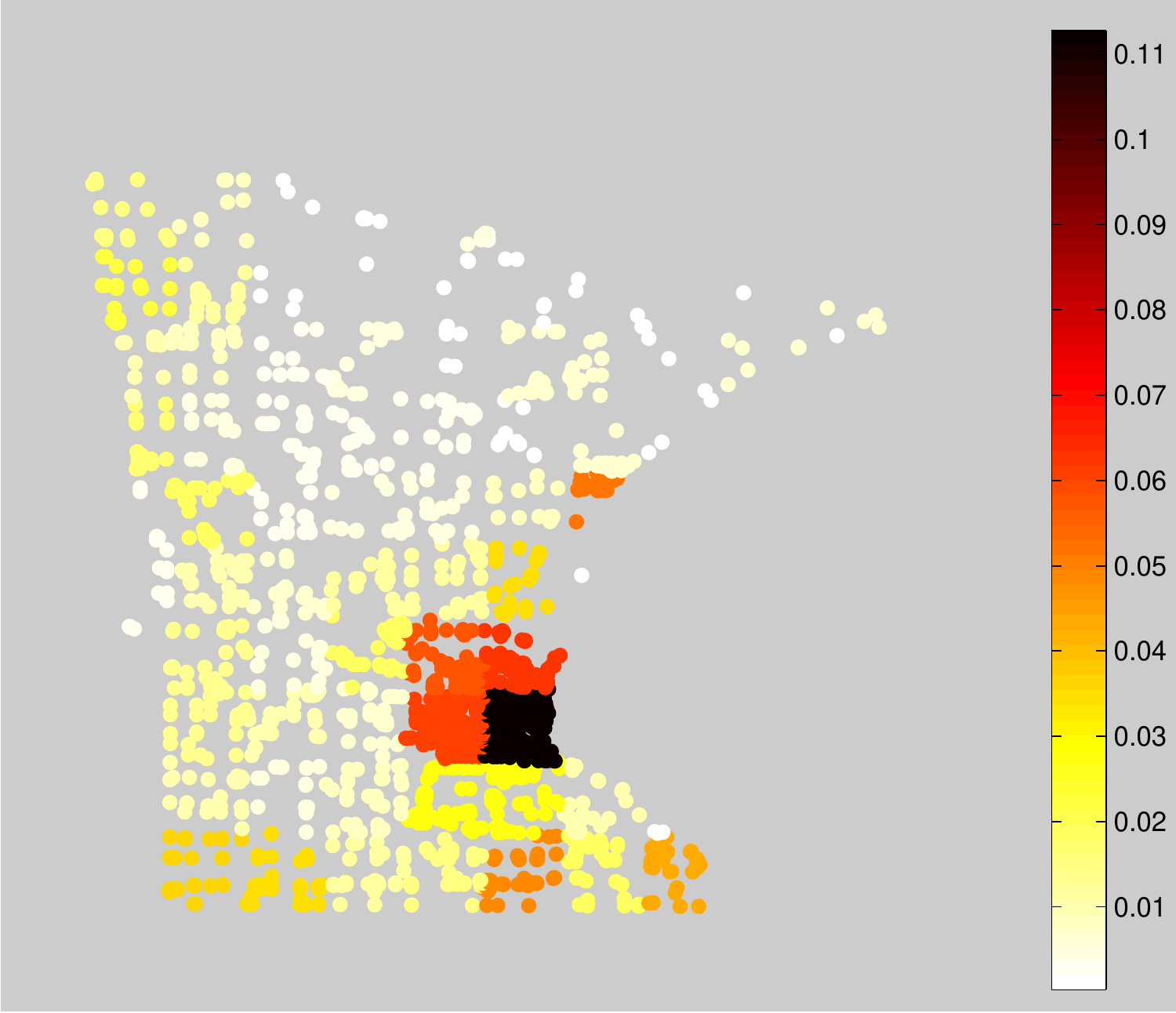}
\\
\includegraphics[height=40mm]{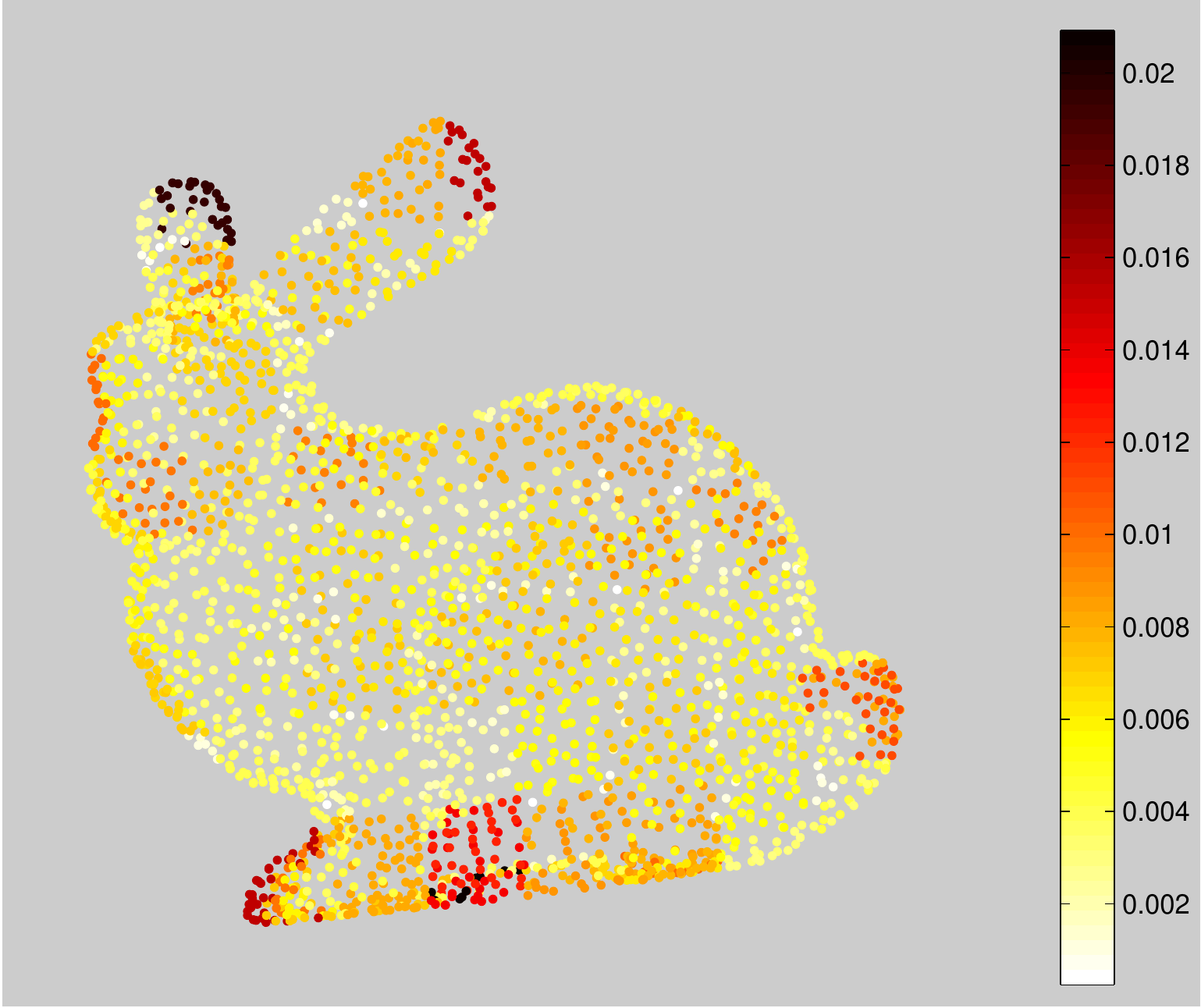}
\includegraphics[height=40mm]{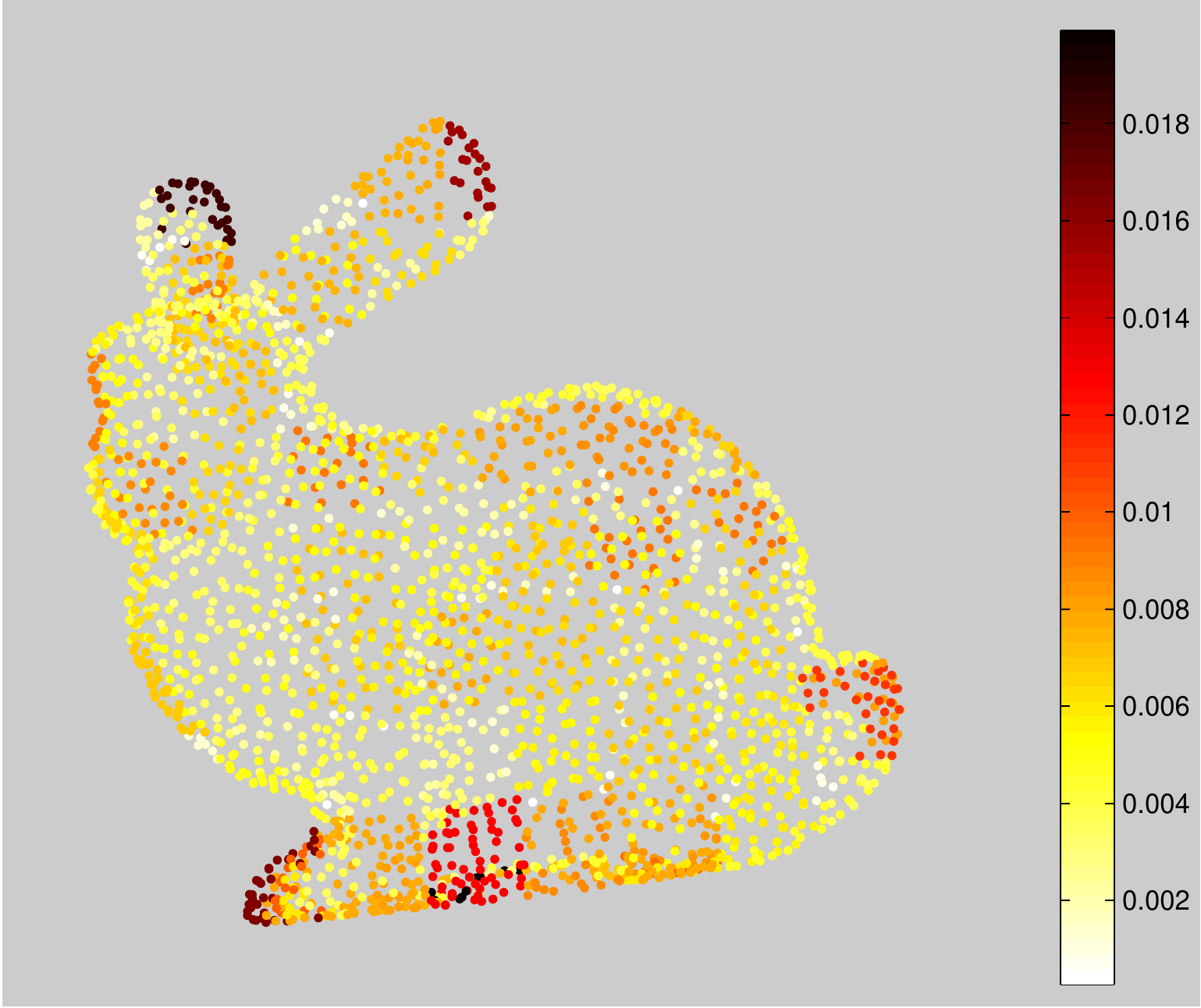}
\includegraphics[height=40mm]{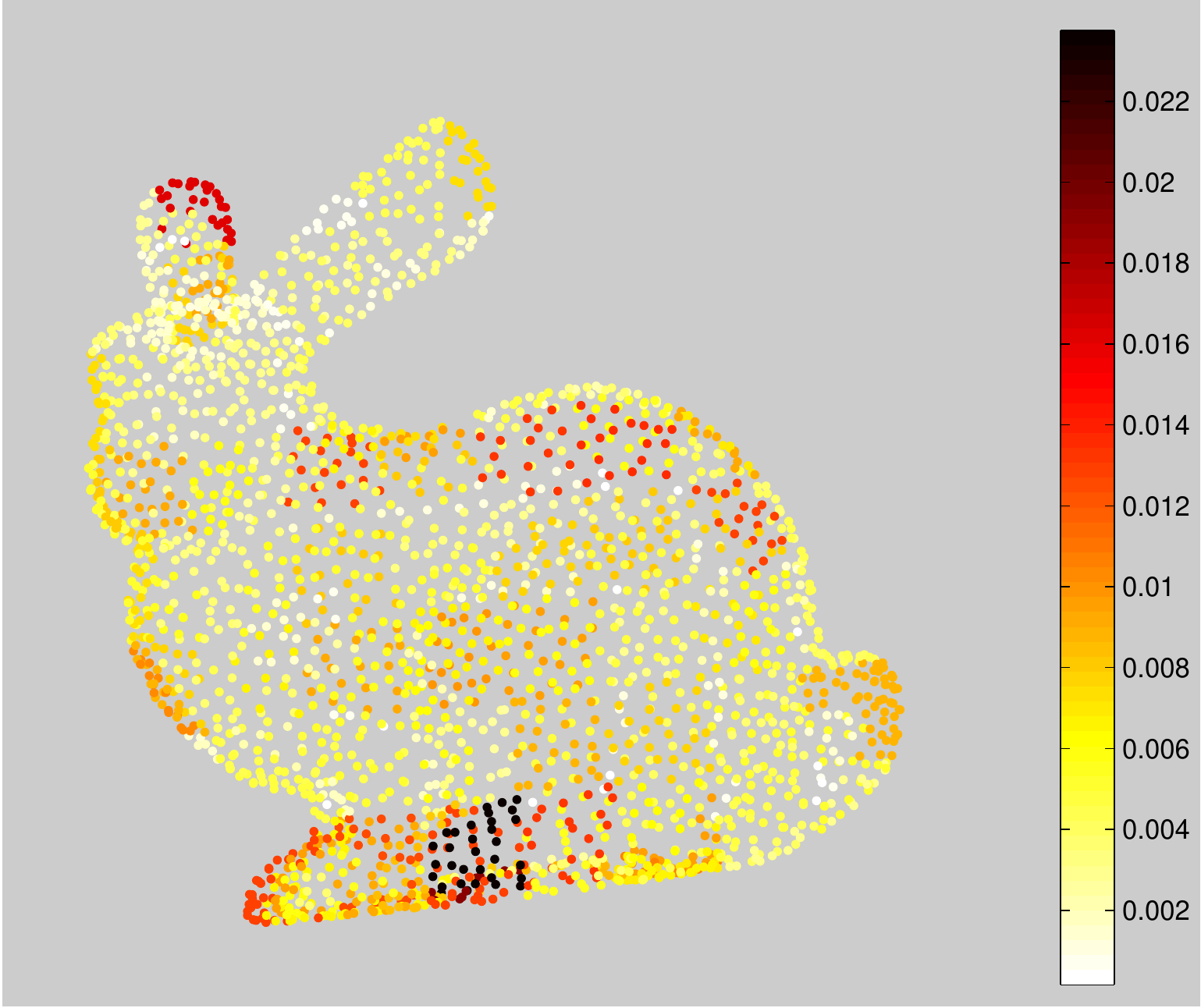}
\caption{\label{fig:sampling_distributions}Example of sampling distributions. Top panels: $\prob^*$ (left), $\bar{\prob}$ (middle), and $\bar{\vec{q}}$ (right) for the Minnesota graph at $\nbClass=10$. Bottom panels: $\prob^*$ (left), $\bar{\prob}$ (middle), and $\bar{\vec{q}}$ (right) for the bunny graph at $\nbClass=25$.}
\end{figure*}
%

\subsection{Object segmentation}

\subsubsection{Protocol}
\begin{figure*}
\centering
\begin{minipage}{.32\linewidth} \centering \scriptsize Original image \end{minipage}
\begin{minipage}{.32\linewidth} \centering \scriptsize $\norm{\SelectGroup^{(\ell)}\Fou_{\nbClass_0}}_2^2$ values \end{minipage}
\begin{minipage}{.32\linewidth} \centering \scriptsize $\norm{\SelectGroup^{(\ell)}\Fou_{\nbClass_0}}_F^2$ values \end{minipage}
\\
\begin{sideways}\color{white} \scriptsize Top \end{sideways}
\includegraphics[width=.32\linewidth]{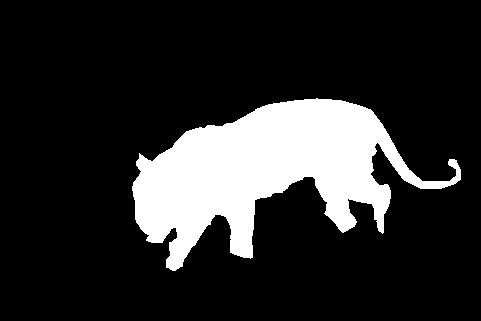}
\includegraphics[width=.32\linewidth]{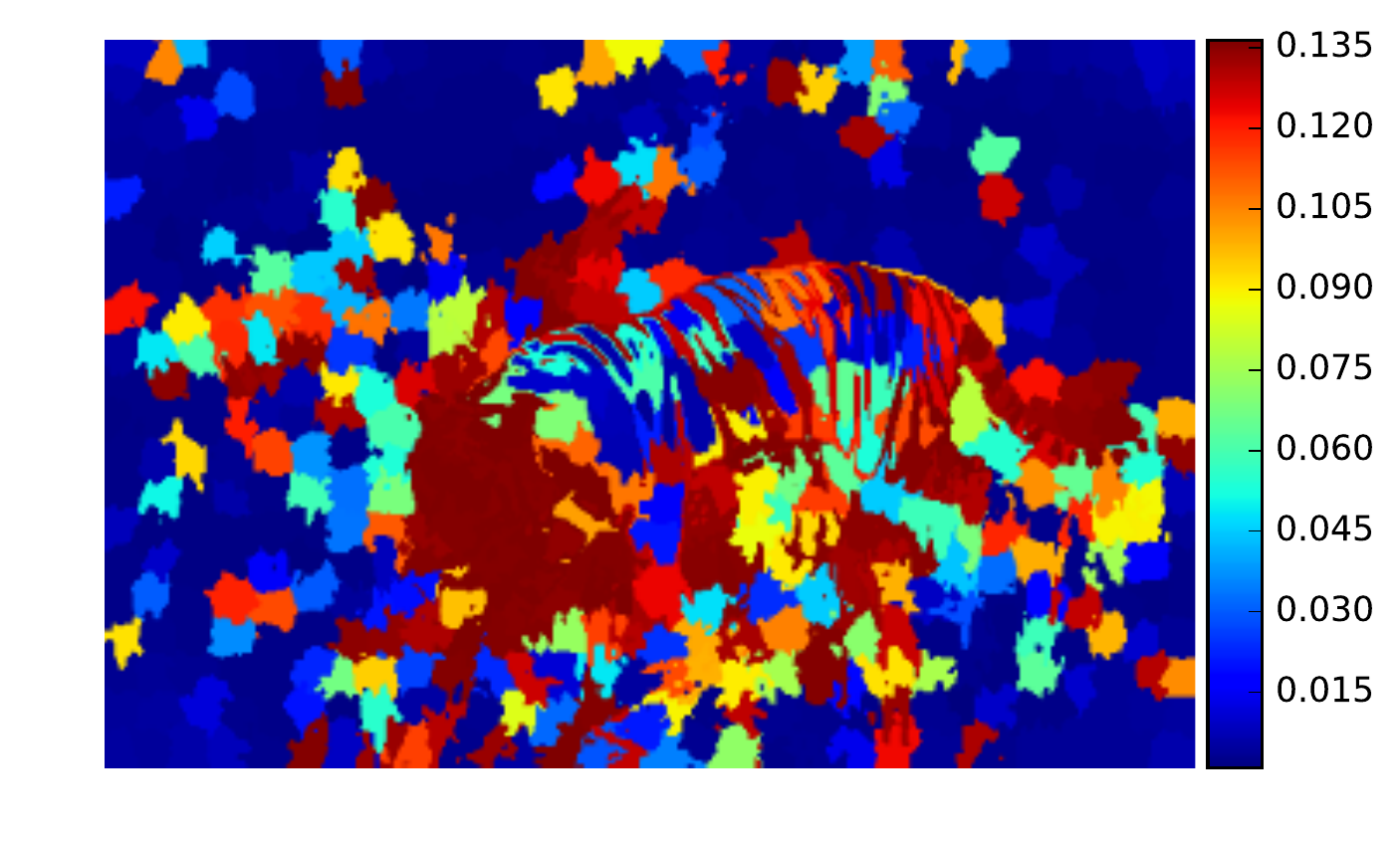}
\includegraphics[width=.32\linewidth]{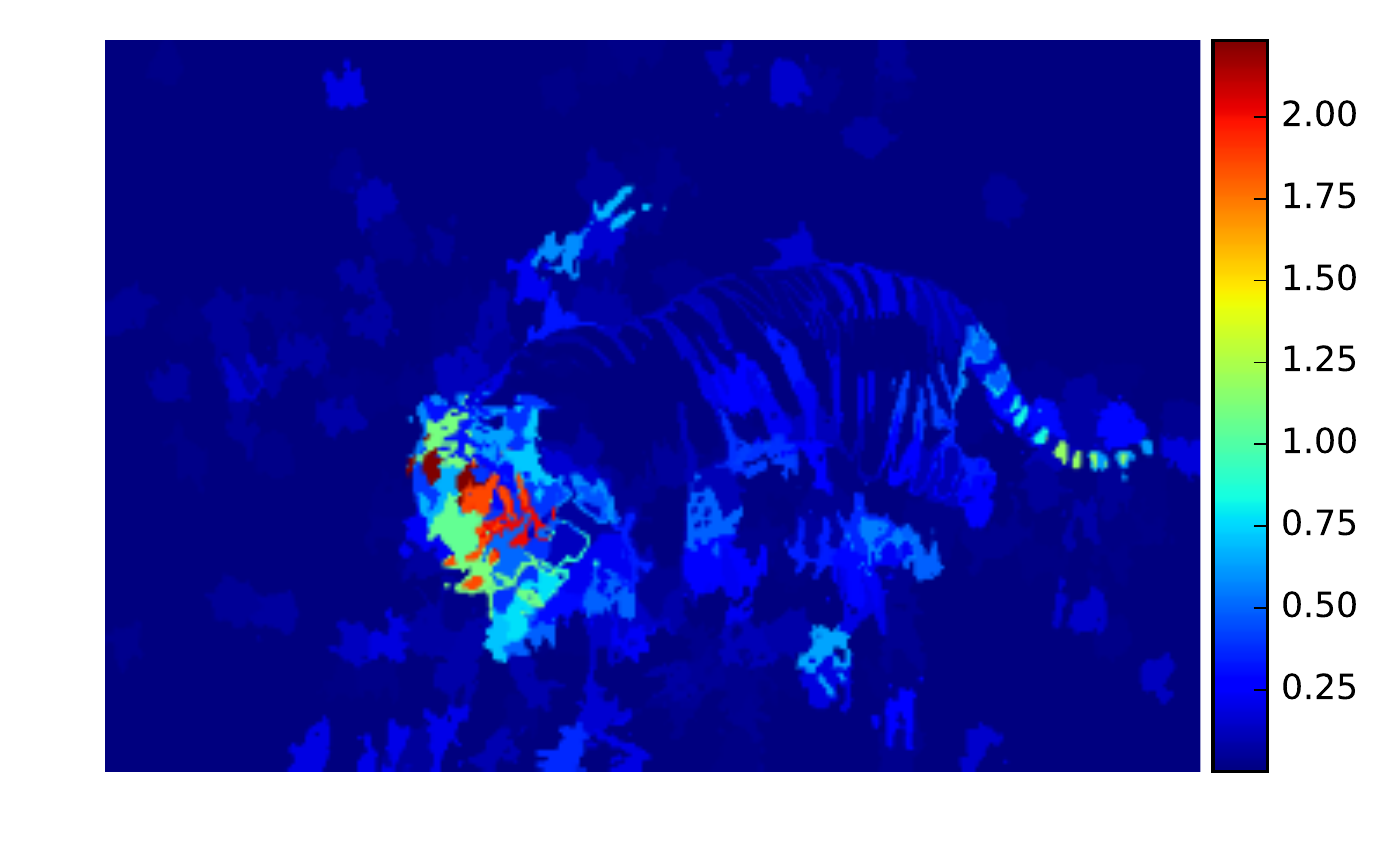}
\\
\begin{minipage}{.32\linewidth} \centering \scriptsize $\vec{u}$ \end{minipage}
\begin{minipage}{.32\linewidth} \centering \scriptsize $\bar{\vec{\prob}}$ \end{minipage}
\begin{minipage}{.32\linewidth} \centering \scriptsize $\bar{\vec{q}}$ \end{minipage}
\\
\begin{sideways} \hspace{4mm} \scriptsize Result solving~\eqref{eq:fast_decoder} \end{sideways}
\includegraphics[width=.32\linewidth]{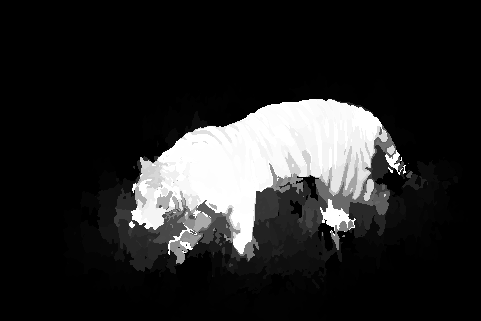}
\includegraphics[width=.32\linewidth]{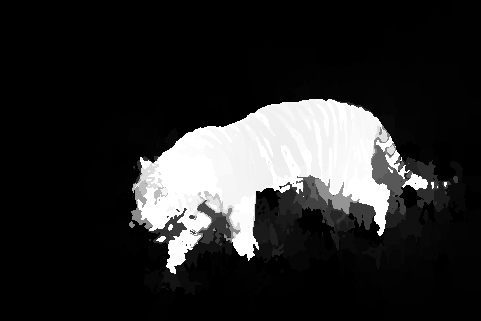}
\includegraphics[width=.32\linewidth]{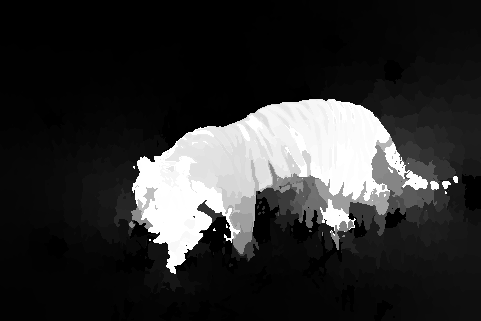}
\\
\begin{sideways} \hspace{4mm} \scriptsize Result solving~\eqref{eq:practical_decoder} \end{sideways}
\includegraphics[width=.32\linewidth]{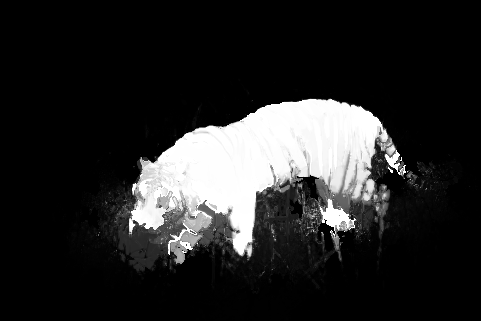}
\includegraphics[width=.32\linewidth]{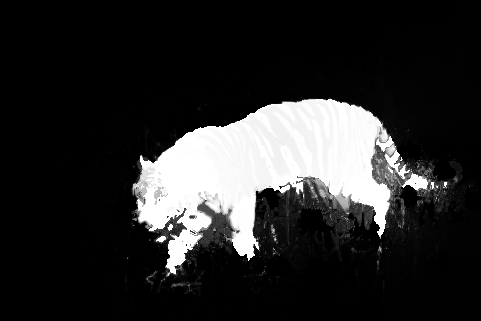}
\includegraphics[width=.32\linewidth]{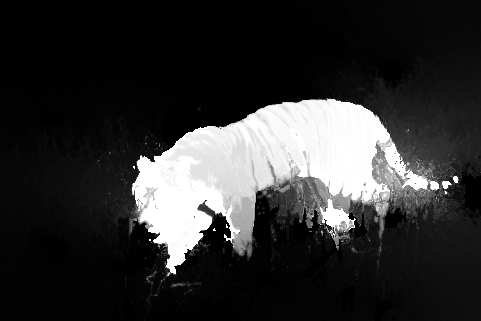}
\caption{\label{fig:segmentation_result}Top row: ground truth segmentation map (left); local group coherence map $\|\SelectGroup^{(\ell)}\Fou_{\nbClass_0}\|_2^2$ evaluated with the spectral norm (middle); local group coherence map $\|\SelectGroup^{(\ell)}\Fou_{\nbClass_0}\|_F^2$ evaluated with the Frobenius norm (right). Middle row: segmentation map estimated from $s=50$ sampled superpixels obtained by solving~\eqref{eq:fast_decoder} and using a uniform sampling (left), $\bar{\prob}$ (middle), $\bar{\vec{q}}$ (right). Bottom row: segmentation map estimated from $s=50$ sampled superpixels obtained by solving~\eqref{eq:practical_decoder} and using a uniform sampling (left), $\bar{\prob}$ (middle), $\bar{\vec{q}}$ (right).}
\end{figure*}

We now test our method for interactive object segmentation. We consider the image of size $\nbVert = 321 \times 481$ presented in Fig.~\ref{fig:image_example} for which our goal is to segment the tiger. The ground truth segmentation map $\sig \in \{0, 1\}^\nbVert$ is presented in Fig.~\ref{fig:segmentation_result}. The value $1$ (white) indicates the presence of the tiger and the value $0$ (black) stands for the foreground. The original image and the ground truth image are part of the dataset available\footnote{\url{http://www.ntu.edu.sg/home/asjfcai/Benchmark_Website/benchmark_index.html}.} in \cite{li13}. Our objective is to recover the original map $\sig$ from few user-inputs. To facilitate the interactions with the user, we divide the original image into the $\nbGroups = 600$ superpixels showed in Fig.~\ref{fig:image_example} and computed with SLIC~\cite{achanta12}, choose a small number of superpixels at random and ask the user to label these superpixels: $1$ if the superpixel belongs to the tiger; $0$ otherwise.

The graph $\Graph$ used to propagate the user-labels to the complete image is constructed as follows. We build a feature vector for each pixel by extracting a color RGB patch of size $3 \times 3$ around the pixel, transform this patch in vector form, and augment this vector with the absolute 2D coordinates of the pixels in this extracted patch. This yields $\nbVert$ feature vectors $\vec{g}_i \in \Rbb^{45}$, $i=1, \ldots, \nbVert$. We then connect each feature vector to its $9$ nearest neighbours (in the Euclidean sense), which gives a set of $9 \nbVert$ edges $\set{E}$. The adjacency matrix $\ma{W} \in \Rbb^{\nbVert \times \nbVert}$ satisfies
$
\ma{W}_{ij} := \exp [- {\|\vec{g}_i  - \vec{g}_j\|_2^2}/{\sigma^2}],
$
where $\sigma > 0$ is the $25^\th$ percentile of the set~$\{\|\vec{g}_i  - \vec{g}_j\|_2 : (i, j) \in \set{E} \}$. We finally symmetrise the matrix $\ma{W}$ and compute the combinatorial Laplacian $\Lap \in \Rbb^{\nbVert \times \nbVert}$.

We study three strategies to choose the superpixels. The first strategy consists in choosing the superpixels uniformly at random, \ie, using the sampling distribution $\vec{u}$. The second and third strategies consist in choosing the superpixels with respectively the optimised distributions $\bar{\vec{q}}$ and $\bar{\prob}$, which we evaluate at $\nbClass_{0} = 50$ using Jackson-Chebychev polynomials of order $75$ \cite{napoli13}. For illustration, we present in Fig.~\ref{fig:segmentation_result} the estimated values $\|\SelectGroup^{(\ell)}\Fou_{\nbClass_{0}}\|_F^2$ and $\|\SelectGroup^{(\ell)}\Fou_{\nbClass_{0}}\|_2^2$, which define the optimised distributions $\bar{\vec{q}}$ and $\bar{\vec{p}}$, respectively. Both distributions indicate that one should label more superpixels around the tiger. The distribution $\bar{\vec{q}}$ ``focuses'' however more on specific regions, like the head of tiger. The distribution $\bar{\prob}$ spreads the measurements over the entire tiger more uniformly.

We emulate user-interactions as follows. For each chosen superpixel, we compute the mean of the ground truth map $\sig$ within this superpixel. If the mean value is larger than $0.5$, we label the superpixel as part of the tiger. Otherwise, we label the superpixel as part of the background. This strategy obviously introduces noise if some superpixels cover part of the background and of the tiger. Once the labelling is done, we have access to the measurement vector $\tilde{\meas} \in \Rbb^\nbVertRed$ from which we want to reconstruct $\sig$. We repeat this procedure for $\nbGroupsRed \in \{50, 70, \ldots, 250\}$. For each $\nbGroupsRed$, we also repeat the experiments $50$ times with independent draws of the superpixels. We draw the superpixels \emph{with} replacements in all cases.

To reconstruct the original map $\sig$, we first use the fast reconstruction method~\eqref{eq:fast_decoder} and then refine the solution at the pixel level with~\eqref{eq:practical_decoder}, using the solution of the first minimisation problem as initialisation to solve the second minimisation problem. We choose $g(\Lap) = \Lap$ and solve both problems in the limit where $\reg \rightarrow 0$. In this limit, the problems \eqref{eq:fast_decoder} and \eqref{eq:practical_decoder} become
\begin{align}
\min_{\tilde{\vec{z}} \in \Rbb^{\nbGroups}} \; \tilde{\vec{z}}^\adjoint \, \widetilde{\Lap} \, \tilde{\vec{z}} 
\quad \text{ subject to } \quad 
\widetilde{\Meas} \tilde{\vec{z}} = \tilde{\meas}
\end{align}
and
\begin{align}
\min_{\vec{z} \in \Rbb^\nbVert} \; \vec{z}^\adjoint g(\Lap) \vec{z} 
\quad \text{ subject to } \quad 
\Meas \vec{z} = \meas,
\end{align}
respectively. Both problems are solved using FISTA~\cite{beck09a}. The same stopping criteria are used for all experiments.

\subsubsection{Results}

We present in the top panel of Fig.~\ref{fig:results_segmentation} the reconstruction snr obtained with the different methods. The reconstruction snr is defined as $-20\log_{10}(-\norm{\sig - \sig^*}_2/\norm{\sig}_2)$, where $\sig^*$ is the reconstructed signal.  We notice that the snr attained with the fast decoder~\eqref{eq:fast_decoder} is very similar to the snr attained with~\eqref{eq:practical_decoder}. We also remark that the optimised distributions $\bar{\prob}$ and $\bar{\vec{q}}$ yield better reconstructions than the uniform distribution $\vec{u}$. The mean reconstruction snr is slightly better with $\bar{\prob}$ than with $\bar{\vec{q}}$ at $s \geq 150$.

We present the computation time of each method in the bottom panel of Fig.~\ref{fig:results_segmentation}. We notice that solving~\eqref{eq:fast_decoder} is much faster than solving~\eqref{eq:practical_decoder}, while they yield almost the same quality. This highlight the interest of the fast reconstruction technique. It is also interesting to note that it is faster to solve~\eqref{eq:practical_decoder} when the measurements are drawn with $\bar{\prob}$ or with $\bar{\vec{q}}$ than with $\vec{u}$. The reason is probably a better initialisation of~\eqref{eq:practical_decoder} or a better ``quality'' of the measurements with the optimised distributions than with the uniform distribution.

Finally, we present in Fig.~\ref{fig:segmentation_result} some examples of reconstructions from $s=150$ sampled superpixels for each method. We notice that the optimised sampling distributions improve the reconstruction of $\sig$ around the head and tail of the tiger, \ie, where the optimised distributions have higher values. With a uniform distribution, the structure of the graph makes it difficult to reconstruct $\sig$ around the head and tail from the values of other superpixels. The optimised sampling distribution compensate this issue by favouring this area when selecting the measurements.

\begin{figure}
\centering
\includegraphics[width=.49\linewidth]{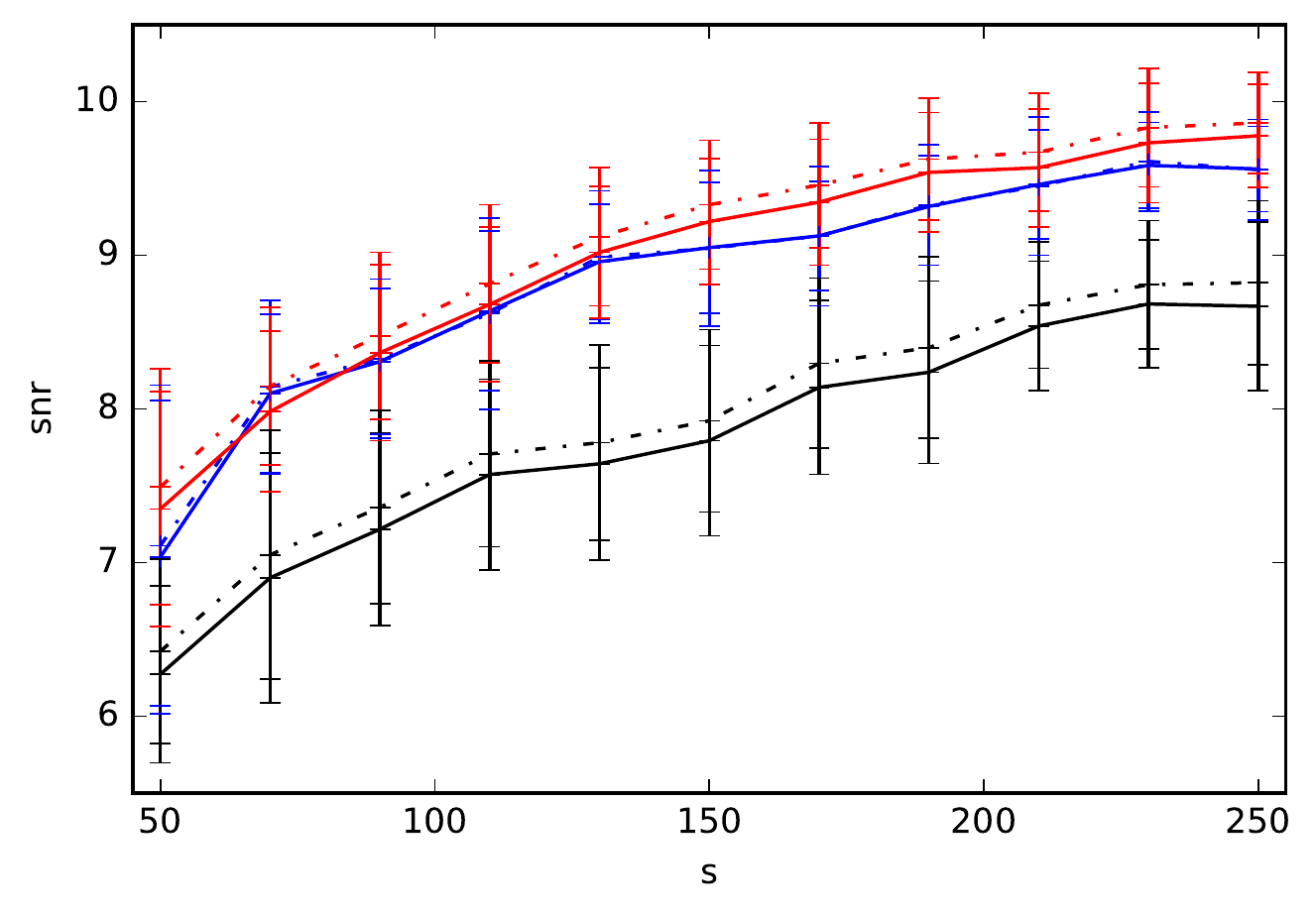}
\includegraphics[width=.49\linewidth]{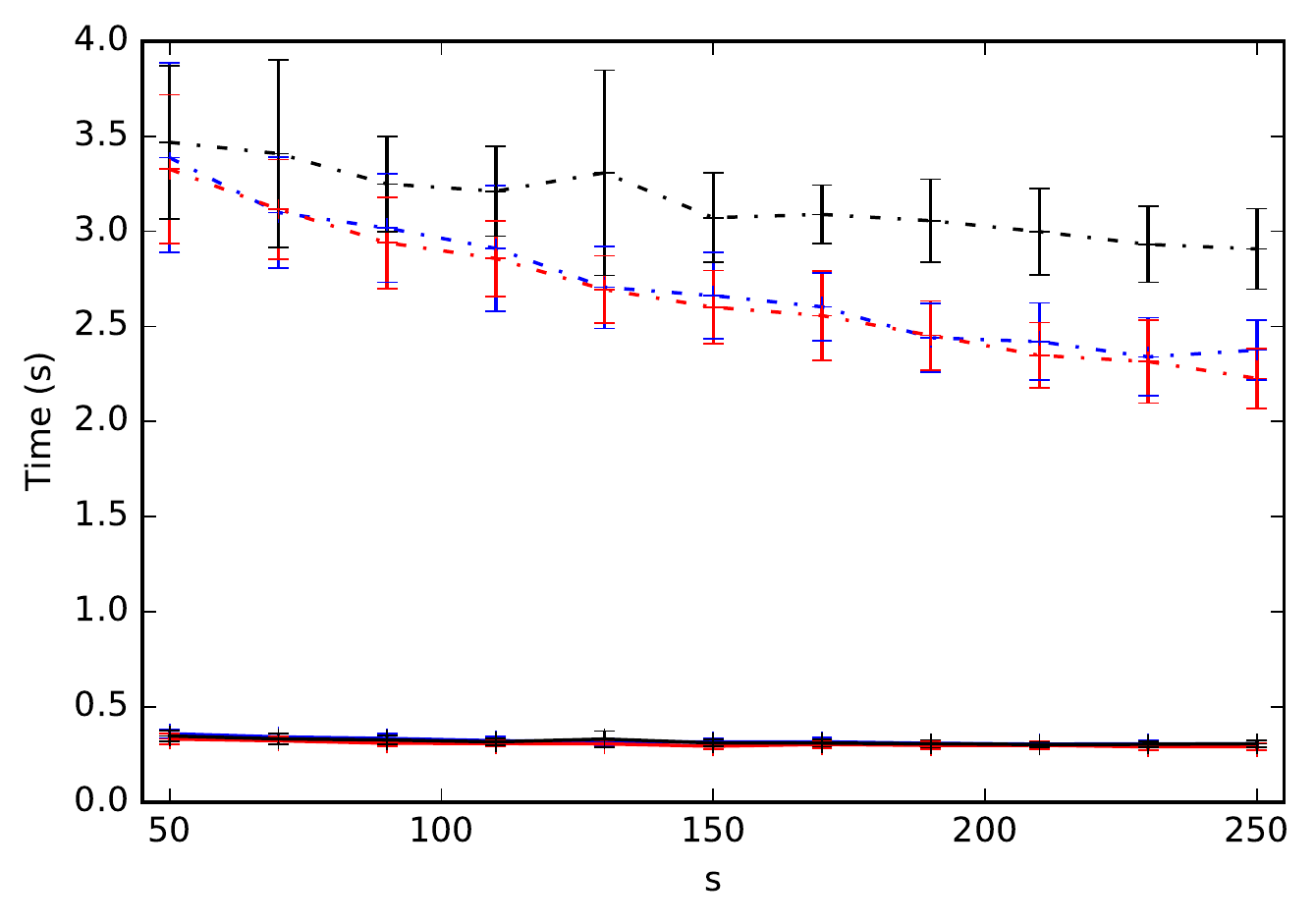}
\caption{\label{fig:results_segmentation}Left: the curves represent the mean reconstruction snr as a function of the number of sampled superpixels $\nbGroupsRed$. The dash-dotted curves are obtained by solving~\eqref{eq:practical_decoder}. The continuous curves are obtained by solving~\eqref{eq:fast_decoder}. Right: mean computation time in seconds as a function of $\nbGroupsRed$. The dash-dotted curves are obtained by solving~\eqref{eq:practical_decoder}. The continuous curves are obtained by solving~\eqref{eq:fast_decoder} using the result of~\eqref{eq:practical_decoder} as initial point. In both graphs, the errorbars are at one standard deviation. The black curves are obtained with the uniform sampling distribution $\vec{u}$. The blue curves are obtained with $\bar{\vec{q}}$. The red curves are obtained with $\bar{\prob}$.}
\end{figure}
%

%
\section{Discussion and Conclusion}

We presented a structured sampling strategy for $\nbClass$-bandlimited signals where the nodes are selected by groups. We proved that the local group graph cumulative coherence quantifies the importance of sampling each group to ensure a stable embedding of all $\nbClass$-bandlimited signals. Finally, we presented a fast reconstruction technique for $\nbClass$-bandlimited signals which are also nearly piecewise-constant over pre-defined groups of nodes.

Among the possible applications of these methods, we believe that they can also be useful to accelerate the compressive spectral clustering method proposed in~\cite{tremblay_ICML2016}. After having computed some features vectors for each nodes, this compressive method works by downsampling the set of features of vectors, performing $k$-means on this reduced set to find $\nbClass$ clusters, and interpolating the clustering results on all nodes by solving~\eqref{eq:practical_decoder}. To accelerate the method, one could 1) pre-group similar nodes to form $\nbGroups$ groups such that $\nbClass \leq \nbGroups \ll \nbVert$, \eg, by running few iterations of the $k$-means algorithm; 2) subsample this set of $\nbGroups$ groups; 3) cluster this subset to find $\nbClass$ clusters; and 4) solve \eqref{eq:fast_decoder} to cluster all nodes. If the overhead of computing the $\nbGroups$ groups is small, this method has the potential to be faster than the original compressive spectral clustering method.

Finally, we would like to discuss two limitations in the proposed methods. First, the optimal sampling distribution depends on the parameter $\nbClass$. In some applications, the final result may change a lot depending on the value of $\nbClass$ which was chosen to compute this distribution. Finding a range of values of $\nbClass$ which give acceptable and stable results is thus an important step in every application. Second, the estimation of the optimal sampling distribution depends on the quality of the polynomial approximation of the ideal low-pass filter $\eig_{\lambda_\nbClass}$. It is sometimes necessary to use a polynomial of large degree to get a correct estimation, which limits the computational efficiency of the proposed methods. In such cases, it would be especially useful to find more efficient alternatives to estimate the distributions $\prob^*$ and $\vec{q}^*$.

\ifregtemp
\appendix  
\titleformat{\section}[hang]{\color{blue1}\large\bfseries\centering}{Appendix \thesection}{0mm}{}[]
\else
\appendix
\fi

%
\ifregtemp
\section{ - Proof of the Theorem \ref{th:rip}}
\else
\section{Proof of the Theorem \ref{th:rip}}
\fi
\label{app:proof_rip}

As done in~\cite{puy15b}, the proof is obtained by applying the following lemma obtained by Tropp in \cite{tropp12}.

\begin{lemma}[Theorem $1.1$, \cite{tropp12}]
\label{th:matrix_chernoff}
Consider a finite sequence $\{ \ma{X}_j \}$ of independent, random, self-adjoint, positive semi-definite matrices of dimension $d \times d$. Assume that each random matrix satisfies
$
\lambda_{\rm max}(\ma{X}_j) \leq R
$
almost surely. Define
\begin{align}
\mu_{\rm min} := \lambda_{\rm min} \left( \sum_j \Ebb \, \ma{X}_j \right)
\; \text{and} \;
\mu_{\rm max} := \lambda_{\rm max} \left( \sum_j \Ebb \, \ma{X}_j \right).
\end{align}
Then
\begin{align}
\Pbb \left\{ \lambda_{\rm min} \left( \sum_j \ma{X}_j \right)  \leq (1 - \delta) \mu_{\rm min} \right\} 
& \leq d \, \left[ \frac{\ee^{-\delta}}{(1-\delta)^{1-\delta}}\right]^{\frac{\mu_{\rm min}}{R}}
\text{ for } \delta \in [0, 1],\\
\text{and } \;
\Pbb \left\{ \lambda_{\rm max} \left( \sum_j \ma{X}_j \right)  \geq (1 + \delta) \mu_{\rm max} \right\} 
& \leq d \, \left[ \frac{\ee^{\delta}}{(1+\delta)^{1+\delta}}\right]^{\frac{\mu_{\rm max}}{R}}
\text{ for } \delta \geq 0.
\end{align}
\end{lemma}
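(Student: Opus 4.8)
The plan is to prove both inequalities by the matrix Laplace-transform (matrix Chernoff) method, handling the non-commutativity of the summands with Lieb's concavity theorem. I treat the upper-tail bound on $\lambda_{\rm max}$ in detail; the lower-tail bound on $\lambda_{\rm min}$ follows by the same scheme applied with a negative exponent. Throughout, write $\ma{S} := \sum_j \ma{X}_j$ and $d$ for the common dimension.

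First I would pass to the exponential scale. Since $\ma{S} \succeq 0$, the matrix $\ee^{\theta \ma{S}}$ is positive semi-definite for every $\theta > 0$, so its largest eigenvalue is at most its trace; combined with Markov's inequality this yields, for all $\theta > 0$,
\begin{align}
\Pbb\left\{ \lambda_{\rm max}(\ma{S}) \geq t \right\}
\leq \ee^{-\theta t} \; \Ebb \, \mathrm{tr}\, \exp\left( \theta \ma{S} \right).
\end{align}
The whole difficulty is now to control $\Ebb\, \mathrm{tr}\, \exp(\theta \sum_j \ma{X}_j)$, which does not factorise because the $\ma{X}_j$ need not commute.

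Second, and this is the crux, I would invoke Lieb's concavity theorem: for fixed self-adjoint $\ma{H}$, the map $\ma{A} \mapsto \mathrm{tr}\, \exp(\ma{H} + \log \ma{A})$ is concave on positive-definite $\ma{A}$. Applying Jensen's inequality through this map and peeling off one summand at a time gives the subadditivity of the matrix cumulant generating function,
\begin{align}
\Ebb\, \mathrm{tr}\, \exp\left( \theta \sum_j \ma{X}_j \right)
\leq \mathrm{tr}\, \exp\left( \sum_j \log \Ebb\, \ee^{\theta \ma{X}_j} \right).
\end{align}
This is the step that replaces the scalar identity $\Ebb\, \ee^{\theta \sum_j X_j} = \prod_j \Ebb\, \ee^{\theta X_j}$, and it is where the argument would collapse without the right operator-convexity input; I expect it to be the main obstacle.

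Third, I would bound the matrix moment generating function of each summand. Because $0 \preceq \ma{X}_j$ and $\lambda_{\rm max}(\ma{X}_j) \leq R$ almost surely, the scalar chord bound $\ee^{\theta x} \leq 1 + R^{-1}(\ee^{\theta R} - 1)\, x$ valid on $[0,R]$ transfers by functional calculus to $\Ebb\, \ee^{\theta \ma{X}_j} \preceq \ma{I} + R^{-1}(\ee^{\theta R}-1)\, \Ebb\, \ma{X}_j$. Using $\ma{I} + \ma{Y} \preceq \exp(\ma{Y})$ and the operator monotonicity of the logarithm, summing over $j$ gives $\sum_j \log \Ebb\, \ee^{\theta \ma{X}_j} \preceq g(\theta) \sum_j \Ebb\, \ma{X}_j$ with $g(\theta) := R^{-1}(\ee^{\theta R}-1)$. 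Combining this with the monotonicity of $\mathrm{tr}\,\exp$ under the semi-definite order and the crude estimate $\mathrm{tr}\, \exp(g(\theta)\sum_j \Ebb\,\ma{X}_j) \leq d\, \ee^{g(\theta)\mu_{\rm max}}$, I arrive at
\begin{align}
\Pbb\left\{ \lambda_{\rm max}(\ma{S}) \geq t \right\}
\leq d \, \exp\left( -\theta t + g(\theta)\, \mu_{\rm max} \right).
\end{align}
Finally, setting $t = (1+\delta)\mu_{\rm max}$ and minimising the exponent over $\theta > 0$, the optimiser $\theta = R^{-1}\log(1+\delta)$ produces exactly the claimed bound $d\,[\ee^{\delta}/(1+\delta)^{1+\delta}]^{\mu_{\rm max}/R}$. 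For the lower tail I would run the identical scheme with the Laplace transform $\exp(-\theta \ma{S})$, $\theta > 0$, using the matching scalar bound $\ee^{-\theta x} \leq 1 - R^{-1}(1-\ee^{-\theta R})\, x$ on $[0,R]$, so that the relevant eigenvalue is now $\mu_{\rm min}$; optimising at $\theta = -R^{-1}\log(1-\delta)$ then yields the $\lambda_{\rm min}$ inequality. All that remains after the operator-level estimates are in place is elementary calculus in $\theta$.
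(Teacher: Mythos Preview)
Your proof sketch is correct and is essentially Tropp's original argument: matrix Laplace transform plus Markov, Lieb's concavity to obtain the cumulant-subadditivity inequality $\Ebb\,\mathrm{tr}\exp(\sum_j \theta \ma{X}_j)\leq \mathrm{tr}\exp(\sum_j \log \Ebb\, \ee^{\theta \ma{X}_j})$, the chord bound on $[0,R]$ transferred by functional calculus, operator monotonicity of the logarithm, and the final scalar optimisation in $\theta$.

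Note, however, that the paper does \emph{not} prove this lemma at all: it is quoted verbatim as Theorem~1.1 of \cite{tropp12} and used as a black box in the proof of Theorem~\ref{th:rip}. So there is no ``paper's own proof'' to compare against; you have supplied a (correct) proof where the paper simply cites one.
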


We will also use the facts that, for all $\delta \in [0, 1]$,
\begin{align}
\left[ \frac{\ee^{-\delta}}{(1-\delta)^{1-\delta}}\right]^{\mu_{\rm min}/R} 
\; \leq \; \exp \left( - \frac{\delta^2 \mu_{\rm min}}{3 \, R}\right)
\text{ and }
\left[ \frac{\ee^{\delta}}{(1+\delta)^{1+\delta}}\right]^{\mu_{\rm max}/R} 
\; \leq \; \exp \left( - \frac{\delta^2 \mu_{\rm max}}{3 \, R}\right).
\end{align}
\begin{proof}[Proof of Theorem \ref{th:rip}]
We start by noticing that
\begin{align}
\inv{\nbGroupsRed} \; \Fou_\nbClass^\adjoint \Meas^\adjoint \ma{P} \ma{P} \Meas \Fou_\nbClass = \inv{\nbGroupsRed} \sum_{j = 1}^\nbGroupsRed \left(\Fou_\nbClass^\adjoint {\SelectGroup^{(\omega_{j})}}^\adjoint\ma{P}^{(\omega_{j})} \right) \left(\ma{P}^{(\omega_{j})} \SelectGroup^{(\omega_l)} \Fou_\nbClass \right).
\end{align}
We define
\begin{align}
\ma{X}_{j} := \frac{1}{\nbGroupsRed} \left(\Fou_\nbClass^\adjoint {\SelectGroup^{(\omega_{j})}}^\adjoint\ma{P}^{(\omega_{j})} \right) \left(\ma{P}^{(\omega_{j})} \SelectGroup^{(\omega_{j})} \Fou_\nbClass \right)
\quad \text{ and } \quad
\ma{X} := \sum_{j=1}^\nbGroupsRed \ma{X}_{j} = \inv{\nbGroupsRed} \; \Fou_\nbClass^\adjoint \Meas^\adjoint \ma{P}^2 \Meas \Fou_\nbClass.
\end{align}
The matrix $\ma{X}$ is thus a sum of $\nbGroupsRed$ independent, random, self-adjoint, positive semi-definite matrices. We are in the setting of Lemma \ref{th:matrix_chernoff}. We continue by computing $\Ebb \,\ma{X}_{j}$ and $\lambda_{\rm max}(\ma{X}_j)$.

The expected value of each $\ma{X}_{j}$ is
\begin{align}
\label{eq:expected_value_proof}
\Ebb \, \ma{X}_{j} 
& = 
\Ebb \left[ \frac{1}{\nbGroupsRed} \left(\Fou_\nbClass^\adjoint {\SelectGroup^{(\omega_{j})}}^\adjoint\ma{P}^{(\omega_{j})} \right) \left(\ma{P}^{(\omega_{j})} \SelectGroup^{(\omega_{j})} \Fou_\nbClass \right) \right]
= 
\inv{\nbGroupsRed} \; \Fou_\nbClass^\adjoint \left(\sum_{\ell=1}^\nbGroups \prob_{\ell} \left({\SelectGroup^{(\ell)}}^\adjoint\ma{P}^{(\ell)} \right) \left(\ma{P}^{(\ell)} \SelectGroup^{(\ell)} \right)  \right) \Fou_\nbClass \nonumber \\
& =  
\inv{\nbGroupsRed} \; \Fou_\nbClass^\adjoint \left(\sum_{\ell=1}^\nbGroups {\SelectGroup^{(\ell)}}^\adjoint \SelectGroup^{(\ell)}  \right) \Fou_\nbClass
=
\inv{\nbGroupsRed} \; \Fou_\nbClass^\adjoint \Fou_\nbClass
=
\inv{\nbGroupsRed} \; \ma{I}.
\end{align}
Therefore, 
$
\lambda_{\rm min} \left( \sum_{j} \Ebb \, \ma{X}_{j} \right) = 1
$
and
$
\lambda_{\rm max} \left( \sum_{j} \Ebb \, \ma{X}_{j} \right) = 1.
$
Furthermore, for all $j = 1, \ldots, \nbGroupsRed$, we have
\begin{align}
\lambda_{\rm max} (\ma{X}_{j})
=
\norm{\ma{X}_{j}}_2 
\leq 
\max_{1 \leq \ell \leq \nbGroups} \norm{\frac{\ma{P}^{(\ell)} \SelectGroup^{(\ell)} \Fou_\nbClass}{\nbGroupsRed}}_2^2
=
\frac{1}{\nbGroupsRed} \; \max_{1 \leq \ell \leq \nbGroups} \left\{ \frac{\norm{\SelectGroup^{(\ell)} \Fou_\nbClass}_2^2}{\prob_{\ell}} \right\}
= \frac{\cumCoh_{\prob}^2}{\nbGroupsRed}.
\end{align}
Lemma \ref{th:matrix_chernoff} yields, for any $\delta \in (0, 1)$,
\begin{align}
\Pbb \left\{ \lambda_{\rm min} \left( \ma{X} \right)  \leq (1 - \delta) \right\} 
\; & \leq \; 
\nbClass \cdot \left[ \frac{\ee^{-\delta}}{(1-\delta)^{1-\delta}}\right]^{\nbGroupsRed/\cumCoh_{\prob}^2}
\; \leq \; 
\nbClass \; \exp \left( - \frac{\delta^2 \nbGroupsRed}{3 \, \cumCoh_{\prob}^2} \right) \\
\text{and } \;
\Pbb \left\{ \lambda_{\rm max} \left( \ma{X} \right)  \geq (1 + \delta) \right\} 
\; & \leq \; 
\nbClass \cdot \left[ \frac{\ee^{\delta}}{(1+\delta)^{1+\delta}}\right]^{\nbGroupsRed/\cumCoh_{\prob}^2}
\; \leq \; 
\nbClass \; \exp \left( - \frac{\delta^2 \nbGroupsRed}{3 \, \cumCoh_{\prob}^2} \right).
\end{align}
Therefore, for any $\delta \in (0, 1)$, we have, with probability at least $1 - \xi$,
\begin{align}
\label{eq:hidden_rip}
1 - \delta \leq \lambda_{\rm min} \left( \ma{X} \right) 
\quad \text{and} \quad
\lambda_{\rm max} \left( \ma{X} \right) \leq 1+\delta
\end{align}
provided that \eqref{eq:sampling_condition} holds. Noticing that \eqref{eq:hidden_rip} implies that
\begin{align}
(1-\delta) \norm{\vec{\alpha}}_2^2
\leq
\inv{\nbGroupsRed} \norm{\ma{P} \Meas \Fou_\nbClass \vec{\alpha} }_2^2 
\leq 
(1+\delta) \norm{\vec{\alpha}}_2^2,
\end{align}
for all $\vec{\alpha} \in \Rbb^k$, which is equivalent to \eqref{eq:RIP} for all $\sig \in \spann(\Fou_k)$, terminates the proof.
\end{proof}
%

%
\ifregtemp
\section{ - Proof of Theorem~\ref{th:fast_decoder}}
\else
\section{Proof of Theorem~\ref{th:fast_decoder}}
\fi
\label{app:proof_decoder}

In order to prove Theorem~\ref{th:fast_decoder}, we need to establish few properties between the different matrices used in this work.

The first useful property is
\begin{align}
\label{eq:commute_sample_average}
\widetilde{\ma{P}} \widetilde{\Meas} \Average = \widetilde{\Average} \ma{P} \Meas.
\end{align}
Indeed, for any $\vec{z} \in \Rbb^\nbVert$, the $j^\th$ entry of $\widetilde{\ma{P}} \widetilde{\Meas} \Average \vec{z}$ is
\begin{align}
\left( \widetilde{\ma{P}} \widetilde{\Meas} \Average \vec{z} \right)_{j}
=
\frac{\vec{1}^{\adjoint}\SelectGroup^{(\omega_{j})} \vec{z} }{\left( \prob_{\omega_{j}} \, \abs{\Group_{\omega_{j}}} \right)^{1/2}}.
\end{align}
Then, the $j^\th$ entry of $\widetilde{\Average} \ma{P} \Meas \vec{z}$ is the scaled sum of the values in the $j^\th$ sampled group appearing in $\ma{P} \Meas \vec{z}$, which is $\prob_{\omega_{j}}^{-1/2} \SelectGroup^{(\omega_{j})} \vec{z}$. From the definition of $\widetilde{\Average}$, the sum is scaled by $\abs{\Group_{\omega_{j}}}^{-1/2}$. Therefore, $( \widetilde{\ma{P}} \widetilde{\Meas} \Average \vec{z} )_{j} = ( \widetilde{\Average} \ma{P} \Meas \vec{z} )_{j}$ for all $j \in \{1, \ldots, \nbGroupsRed\}$, which terminates the proof.

The second property is 
\begin{align}
\label{eq:tight_frame_small_average}
\widetilde{\Average} \widetilde{\Average}^\adjoint = \ma{I},
\end{align}
which implies $\| \widetilde{\Average} \|_2 = 1$.

The third property is
\begin{align}
\label{eq:the_most_important_equality}
\| \widetilde{\Average} \ma{P} \Meas \Average^\adjoint \tilde{\vec{z}} \|_2 
= 
\norm{ \ma{P} \Meas \Average^\adjoint \tilde{\vec{z}} }_2,
\end{align}
for all $\tilde{\vec{z}} \in \Rbb^{\nbGroups}$. To prove this property, we remark that
$
\ma{P} \Meas \Average^\adjoint \tilde{\vec{z}}
= 
\widetilde{\Average}^\adjoint \widetilde{\ma{P}} \widetilde{\Meas} \tilde{\vec{z}}.
$
Indeed, the entries in $\ma{P} \Meas \Average^\adjoint \tilde{\vec{z}}$ corresponding to the first selected group $\Group_{\omega_1}$ are all equal to $(\prob_{\omega_1} \, \abs{\Group_{\omega_1}})^{-1/2} \tilde{\vec{z}}_{\omega_1}$. There are $\abs{\Group_{\omega_1}}$ such entries. One can also notice that the first $\abs{\Group_{\omega_1}}$ entries of $\widetilde{\Average}^\adjoint \widetilde{\ma{P}} \widetilde{\Meas} \tilde{\vec{z}}$ are also equal to $(\prob_{\omega_{j}} \, \abs{\Group_{\omega_1}})^{-1/2} \tilde{\vec{z}}_{\omega_1}$. Repeating this reasoning for all the sampled groups proves the equality. On the one side, we thus have
$
\norm{\ma{P} \Meas \Average^\adjoint \tilde{\vec{z}}}_2
= 
\| \widetilde{\Average}^\adjoint \widetilde{\ma{P}} \widetilde{\Meas} \tilde{\vec{z}} \|_2
=
\| \widetilde{\ma{P}} \widetilde{\Meas} \tilde{\vec{z}} \|_2,
$
where we used \eqref{eq:tight_frame_small_average}. On the other side, we have
$
\| \widetilde{\Average} \ma{P} \Meas \Average^\adjoint \tilde{\vec{z}} \|_2
= 
\| \widetilde{\ma{P}} \widetilde{\Meas} \Average \Average^\adjoint \tilde{\vec{z}} \|_2
=
\| \widetilde{\ma{P}} \widetilde{\Meas} \tilde{\vec{z}} \|_2,
$
where we used \eqref{eq:commute_sample_average} and \eqref{eq:tight_frame_average}. This terminates the proof.

\begin{proof}[Proof of Theorem \ref{th:fast_decoder}]
As $\sig^*$ is a minimiser of \eqref{eq:fast_decoder}, we have
\begin{align}
\label{eq:opt_cond_reg_problem}
\norm{\widetilde{\ma{P}}(\widetilde{\Meas} \tilde{\sig}^* - \tilde{\meas})}_2^2 + \reg \; (\tilde{\sig}^*)^\adjoint \, \widetilde{\Lap} \, \tilde{\sig}^*
\; \leq \;
\norm{\widetilde{\ma{P}}(\widetilde{\Meas} \tilde{\sig} - \tilde{\meas})}_2^2 + \reg \; \tilde{\sig}^\adjoint \, \widetilde{\Lap} \, \tilde{\sig}.
\end{align}
To prove the theorem, we need to lower and upper bound the left and right hand sides of \eqref{eq:opt_cond_reg_problem}, respectively. We start with the bound involving $\widetilde{\Lap}$ and then with the ones involving $\widetilde{\Meas}$.

\textbf{[Bounding the terms in \eqref{eq:opt_cond_reg_problem} involving $\widetilde{\Lap}$]}. We define the matrices
\begin{align}
\bar{\Fou}_\nbClass & := \left( \fou_{\nbClass+1}, \ldots, \fou_\nbVert \right) \in \Rbb^{\nbVert \times (\nbVert - \nbClass)},\\
\quad \ma{G}_\nbClass & := \diag\left( g(\eig_{1}), \ldots, g(\eig_{\nbClass}) \right) \in \Rbb^{\nbClass \times \nbClass},\\
%
\bar{\ma{G}}_\nbClass &:= \diag\left( g(\eig_{\nbClass+1}), \ldots, g(\eig_\nbVert) \right) \in \Rbb^{(\nbVert - \nbClass) \times (\nbVert - \nbClass)}.
\end{align}
By definition of $\vec{\alpha}^*$ and $\vec{\beta}^*$, $\Average^\adjoint \tilde{\sig}^* =  \vec{\alpha}^* + \vec{\beta}^*$ with $\vec{\alpha}^* \in \spann(\Fou_\nbClass)$ and $\vec{\beta}^* \in \spann(\bar{\Fou}_\nbClass)$. We recall that $\widetilde{\Lap} = (\Average\Fou) \; g(\Eig) \; (\Average\Fou)^\adjoint$. Therefore, we obtain 
\begin{align}
\label{eq:first_bound_L}
(\tilde{\sig}^*)^\adjoint \; \widetilde{\Lap} \; \tilde{\sig}^*
=
(\Fou_\nbClass^\adjoint \vec{\alpha}^*)^\adjoint \; \ma{G}_\nbClass \; (\Fou_\nbClass^\adjoint \vec{\alpha}^*)
+ (\bar{\Fou}_\nbClass^\adjoint \vec{\beta}^*)^\adjoint \; \bar{\ma{G}}_\nbClass \; (\bar{\Fou}_\nbClass^\adjoint \vec{\beta}^*)
\geq
g(\eig_{\nbClass+1}) \norm{\vec{\beta}^*}_2^2.
\end{align}
In the first step, we used the facts that $\bar{\Fou}_\nbClass^\adjoint \vec{\alpha}^* = \vec{0}$ and $\Fou_\nbClass^\adjoint \vec{\beta}^* = \vec{0}$. The second step follows form the fact that $\norm{ \bar{\Fou}_\nbClass^\adjoint \vec{\beta}^* }_2 = \norm{ \vec{\beta}^* }_2$. We also have
\begin{align}
\label{eq:second_bound_L}
\tilde{\sig}^\adjoint \, \widetilde{\Lap} \, \tilde{\sig}
& =
(\Fou^\adjoint \Average^\adjoint \tilde{\sig})^\adjoint \; g(\Eig) \; (\Fou^\adjoint \Average^\adjoint \tilde{\sig})
\; \leq \; 
g(\eig_\nbClass) \norm{\Fou_\nbClass \Average^\adjoint \tilde{\sig}}_2^2
+ 
g(\eig_\nbVert) \norm{\bar{\Fou}_\nbClass^\adjoint \Average^\adjoint \tilde{\sig}}_2^2 \nonumber \\
& \leq
g(\eig_\nbClass) \norm{\Average^\adjoint \Average \sig}_2^2
+ 
g(\eig_\nbVert) \norm{\bar{\Fou}_\nbClass^\adjoint \Average^\adjoint \Average \sig}_2^2
\; \leq \;
g(\eig_\nbClass) \norm{\sig}_2^2
+ 
g(\eig_\nbVert) \left[ \norm{\bar{\Fou}_\nbClass^\adjoint \sig}_2^2
+ 
\norm{\bar{\Fou}_\nbClass^\adjoint (\Average^\adjoint \Average \sig - \sig)}_2^2 \right] \nonumber \\
& \leq
g(\eig_\nbClass) \norm{\sig}_2^2
+ 
\epsilon^2 g(\eig_\nbVert) \norm{\sig}_2^2.
\end{align}
The second inequality follows from the facts that $\norm{\Fou_k}_2 = 1$ and $\tilde{\sig} = \Average \sig$. To obtain the third inequality, we used $\norm{\Average}_2 = 1$ and the triangle inequality. For the last step, notice that $\norm{\bar{\Fou}_\nbClass^\adjoint \sig}_2 = 0$ (as $\sig \in \spann(\Fou_{\nbClass})$), $\norm{\bar{\Fou}_k}_2 = 1$ and use~\eqref{eq:blockconstant_signal}. 

\textbf{[Bounding the terms in \eqref{eq:opt_cond_reg_problem} involving $\widetilde{\Meas}$]}. By definition of $\tilde{\meas}$, it is immediate that
\begin{align}
\label{eq:first_bound_M}
\norm{\widetilde{\ma{P}}(\widetilde{\Meas} \tilde{\sig} - \tilde{\meas})}_2^2
=
\norm{\widetilde{\ma{P}}\tilde{\err}}_2^2.
\end{align}
For the other term involving $\widetilde{\Meas}$,
the triangle inequality yields
\begin{align*}
\norm{\widetilde{\ma{P}}\widetilde{\Meas} \tilde{\sig}^* - \widetilde{\ma{P}}\tilde{\meas}}_2 
& \geq
\norm{\widetilde{\ma{P}}\widetilde{\Meas} \tilde{\sig}^* - \widetilde{\ma{P}}\widetilde{\Meas} \Average\sig}_2 - \norm{ \widetilde{\ma{P}} \tilde{\err} }_2.
\end{align*}
Then, we have
\begin{align}
\norm{\widetilde{\ma{P}}\widetilde{\Meas} \tilde{\sig}^* - \widetilde{\ma{P}}\widetilde{\Meas} \Average\sig}_2
& =
\norm{\widetilde{\ma{P}}\widetilde{\Meas} \Average  \Average^\adjoint \tilde{\sig}^* - \widetilde{\ma{P}}\widetilde{\Meas} \Average\sig}_2
= 
\norm{\widetilde{\Average} \ma{P} \Meas  \Average^\adjoint \tilde{\sig}^* - \widetilde{\Average} \ma{P} \Meas \sig}_2 \\
& \geq
\norm{\tilde{\Average} \ma{P} \Meas  \Average^\adjoint \tilde{\sig}^* - \tilde{\Average} \ma{P} \Meas \Average^\adjoint \Average \sig}_2
- \norm{\tilde{\Average} \ma{P} \Meas \Average^\adjoint \Average \sig - \tilde{\Average} \ma{P} \Meas \sig}_2.
\end{align}
The first equality follows from $\Average \Average^\adjoint = \ma{I}$, the second from \eqref{eq:commute_sample_average}, and the triangle inequality was used in the last step. To summarise, we are at
\begin{align}
\label{eq:lower_bound_fidelity_term}
\norm{\widetilde{\ma{P}}\widetilde{\Meas} \tilde{\sig}^* - \widetilde{\ma{P}}\tilde{\meas}}_2
\geq
\norm{\tilde{\Average} \ma{P} \Meas  \Average^\adjoint \tilde{\sig}^* - \tilde{\Average} \ma{P} \Meas \Average^\adjoint \Average \sig}_2
- \norm{\tilde{\Average} \ma{P} \Meas \Average^\adjoint \Average \sig - \tilde{\Average} \ma{P} \Meas \sig}_2 - \norm{ \widetilde{\ma{P}} \tilde{\err} }_2.
\end{align}
We continue by lower bounding $\| \tilde{\Average} \ma{P} \Meas  \Average^\adjoint \tilde{\sig}^* - \tilde{\Average} \ma{P} \Meas \Average^\adjoint \Average \sig \|_2$ and upper bounding $\| \tilde{\Average} \ma{P} \Meas \Average^\adjoint \Average \sig - \tilde{\Average} \ma{P} \Meas \sig \|_2$ separately.

\emph{[Lower bound on $\| \tilde{\Average} \ma{P} \Meas  \Average^\adjoint \tilde{\sig}^* - \tilde{\Average} \ma{P} \Meas \Average^\adjoint \Average \sig \|_2$]}. Equality \eqref{eq:the_most_important_equality} yields
\begin{align}
\norm{\tilde{\Average} \ma{P} \Meas  \Average^\adjoint \tilde{\sig}^* - \tilde{\Average} \ma{P} \Meas \Average^\adjoint \Average \sig}_2
= 
\norm{\ma{P} \Meas  \Average^\adjoint \tilde{\sig}^* - \ma{P} \Meas \Average^\adjoint \Average \sig}_2.
\end{align}
Using the triangle inequality and the fact that $\sig \in \PSignals$, we obtain
\begin{align}
\norm{\ma{P} \Meas  \Average^\adjoint \tilde{\sig}^* - \ma{P} \Meas \Average^\adjoint \Average \sig}_2
& \geq \norm{\ma{P} \Meas  \Average^\adjoint \tilde{\sig}^* - \ma{P} \Meas \sig}_2 - \norm{\ma{P} \Meas \sig - \ma{P} \Meas \Average^\adjoint \Average \sig}_2 \nonumber \\
& \geq \norm{\ma{P} \Meas  \Average^\adjoint \tilde{\sig}^* - \ma{P} \Meas \sig}_2 - \norm{\ma{P} \Meas}_2 \norm{\Average^\adjoint \Average \sig - \sig}_2 \nonumber \\
& \geq \norm{\ma{P} \Meas  \Average^\adjoint \tilde{\sig}^* - \ma{P} \Meas \sig}_2 - \epsilon \, M_{\rm max} \norm{ \sig }_2.
\end{align}
The restricted isometry property~\eqref{eq:RIP} then yields
\begin{align}
\norm{\ma{P} \Meas  \Average^\adjoint \tilde{\sig}^* - \ma{P} \Meas \sig}_2
& =
\norm{ \ma{P} \Meas \, (\vec{\alpha}^* - \sig) + \ma{P} \Meas \vec{\beta}^*}_2
\geq
\norm{\ma{P} \Meas (\vec{\alpha}^* - \sig)}_2 - \norm{\ma{P}\Meas \vec{\beta}^*}_2 \nonumber \\
& \geq
\sqrt{\nbGroupsRed (1-\delta)} \; \norm{ \vec{\alpha}^* - \sig }_2 - M_{\rm max} \norm{\vec{\beta}^*}_2.
\end{align}
We used the equality $\Average^\adjoint \tilde{\sig}^* = \vec{\alpha}^* + \vec{\beta}^*$. We thus proved that
\begin{align}
\label{eq:lower_bound_with_rip}
\| \tilde{\Average} \ma{P} \Meas  \Average^\adjoint \tilde{\sig}^* - \tilde{\Average} \ma{P} \Meas \Average^\adjoint \Average \sig \|_2
\geq
\sqrt{\nbGroupsRed (1-\delta)} \; \norm{ \vec{\alpha}^* - \sig }_2 - M_{\rm max} \norm{\vec{\beta}^*}_2 - \epsilon M_{\rm max} \norm{ \sig }_2.
\end{align}

\emph{[Upper bounding $\| \tilde{\Average} \ma{P} \Meas \Average^\adjoint \Average \sig - \tilde{\Average} \ma{P} \Meas \sig \|_2$]}. We have
\begin{align}
\label{eq:lower_bound_with_piecewise_constraint}
\norm{\tilde{\Average} \ma{P} \Meas \Average^\adjoint \Average \sig - \tilde{\Average} \ma{P} \Meas \sig}_2
\leq
\| \tilde{\Average} \|_2
\norm{\ma{P} \Meas}_2
\norm{\Average^\adjoint \Average \sig - \sig }_2
\leq \epsilon \, M_{\rm max} \norm{ \sig }_2.
\end{align}
We used the facts that $\| \tilde{\Average} \|_2 = 1$ (see \eqref{eq:tight_frame_small_average}) and that $\sig \in \PSignals$ in the last inequality.

Using the inequalities \eqref{eq:lower_bound_with_rip} and \eqref{eq:lower_bound_with_piecewise_constraint} in \eqref{eq:lower_bound_fidelity_term}, we arrive at
\begin{align}
\label{eq:second_bound_M}
\norm{\widetilde{\ma{P}}\widetilde{\Meas} \tilde{\sig}^* - \widetilde{\ma{P}}\tilde{\meas}}_2
\geq
\sqrt{\nbGroupsRed (1-\delta)} \; \norm{ \vec{\alpha}^* - \sig }_2 - M_{\rm max} \norm{\vec{\beta}^*}_2
- 2\epsilon M_{\rm max} \norm{ \sig }_2 - \norm{ \widetilde{\ma{P}} \tilde{\err} }_2.
\end{align}

\textbf{[Finishing the proof]}. Remark that \eqref{eq:opt_cond_reg_problem} implies
\begin{align}
\label{eq:opt_cond_1}
\norm{\widetilde{\ma{P}}(\widetilde{\Meas} \tilde{\sig}^* - \tilde{\meas})}_2^2 
\leq
\norm{\widetilde{\ma{P}}(\widetilde{\Meas} \tilde{\sig} - \tilde{\meas})}_2^2 + \reg \; \tilde{\sig}^\adjoint \, \widetilde{\Lap} \, \tilde{\sig},\\
\text{and} \quad
\label{eq:opt_cond_2}
\reg \; (\tilde{\sig}^*)^\adjoint \, \widetilde{\Lap} \, \tilde{\sig}^*
\leq
\norm{\widetilde{\ma{P}}(\widetilde{\Meas} \tilde{\sig} - \tilde{\meas})}_2^2 + \reg \; \tilde{\sig}^\adjoint \, \widetilde{\Lap} \, \tilde{\sig}.
\end{align}
Using \eqref{eq:first_bound_L}, \eqref{eq:second_bound_L}, \eqref{eq:first_bound_M} in \eqref{eq:opt_cond_2}, we obtain
\begin{align}
\reg \; g(\eig_{\nbClass+1}) \norm{\vec{\beta}^*}_2^2
\leq \;
\norm{ \widetilde{\ma{P}} \err }_2^2 
+ \; \reg \; g(\eig_\nbClass) \norm{\sig}_2^2
+ \epsilon^2 \; \reg \; g(\eig_\nbVert) \norm{\sig}_2^2,
\end{align}
which implies \eqref{eq:bound_beta} in Theorem~\ref{th:fast_decoder}. It remains to prove \eqref{eq:bound_alpha} to finish the proof.

Using \eqref{eq:second_bound_L}, \eqref{eq:first_bound_M} and \eqref{eq:second_bound_M} in \eqref{eq:opt_cond_1}, we obtain
\begin{align}
\sqrt{\nbGroupsRed (1-\delta)} \norm{\vec{\alpha}^* - \sig }_2
\leq
2 \norm{ \widetilde{\ma{P}} \err }_2 + M_{\rm max} \norm{\vec{\beta}^*}_2
+ (\sqrt{\reg g(\eig_\nbClass)} + \epsilon \sqrt{\reg g(\eig_\nbVert)} + 2 \epsilon M_{\rm max}) \norm{ \sig }_2.
\end{align}
Using \eqref{eq:bound_beta} to bound $\norm{\vec{\beta}^*}_2$ on the right hand side, we have
\begin{align}
\sqrt{\nbGroupsRed (1-\delta)} \norm{\vec{\alpha}^* - \sig }_2
\; & \leq \; 
2 \norm{ \widetilde{\ma{P}} \err }_2 + \frac{M_{\rm max}}{\sqrt{\reg g(\eig_{\nbClass+1})}} \norm{\widetilde{\ma{P}} \err}_2
+ M_{\rm max} \sqrt{\frac{g(\eig_\nbClass)}{g(\eig_{\nbClass+1})}} \norm{ \sig  }_2 \nonumber \\
& + \epsilon \;  M_{\rm max} \sqrt{\frac{g(\eig_\nbVert)}{g(\eig_{\nbClass+1})}} \norm{ \sig  }_2
+ \left(\sqrt{\reg g(\eig_\nbClass)} + \epsilon \sqrt{\reg g(\eig_\nbVert)} + 2 \epsilon M_{\rm max} \right) \norm{ \sig }_2  \\
& =
\left( 2 +  \frac{M_{\rm max}}{\sqrt{\reg g(\eig_{\nbClass+1})}}\right)\norm{\widetilde{\ma{P}} \err}_2
+ \left( M_{\rm max} \sqrt{\frac{g(\eig_\nbClass)}{g(\eig_{\nbClass+1})}} + \sqrt{\reg g(\eig_\nbClass)}\right) \norm{ \sig  }_2 \nonumber \\
& + \epsilon \left( 2 M_{\rm max} + M_{\rm max} \sqrt{\frac{g(\eig_\nbVert)}{g(\eig_{\nbClass+1})}} + \sqrt{\reg g(\eig_\nbVert)} \right) \norm{ \sig  }_2. 
\end{align}
We only rearranged the term in the last step. This proves \eqref{eq:bound_alpha} and terminates the proof.
\end{proof}
%

\ifregtemp
	\bibliographystyle{IEEEtran}
\else
	\bibliographystyle{imaiai}
\fi
\bibliography{biblio,imedit}

\end{document}